\newcommand{\algmargin}{\the\ALG@thistlm}
\newlength{\whilewidth}
\algnewcommand{\parState}[1]{\State%
  \parbox[t]{\dimexpr\linewidth-\algmargin}{\strut #1\strut}}
\newtheorem{theorem}{Theorem}[]
\newtheorem*{remark}{Remark}
\newtheorem{corollary}{Corollary}[]
\newtheorem{lemma}[]{Lemma}
\newtheorem{prop}{Proposition}
\newtheorem{definition}{Definition}
\theoremstyle{definition}
\newtheorem{proofpart}{Part}
\newtheorem{proofcase}{Case}
\renewcommand{\exp}[1]{\text{exp}\left( #1 \right)}
\newcommand{\g}[1]{{g \left( #1 \right)}}
\begin{document}

\title{Simulating Noisy Variational Quantum Algorithms: A Polynomial Approach}

\author{Yuguo Shao}
\affiliation{Yau Mathematical Sciences Center and Department of Mathematics, Tsinghua University,  Beijing 100084, China}
\author{Fuchuan Wei}%
\affiliation{Yau Mathematical Sciences Center and Department of Mathematics, Tsinghua University,  Beijing 100084, China}
\author{Song Cheng}%
\email{chengsong@bimsa.cn}
\affiliation{Yanqi Lake Beijing Institute of Mathematical Sciences and Applications, Beijing 100407, China}
\author{Zhengwei Liu}%
 \email{liuzhengwei@mail.tsinghua.edu.cn}
\affiliation{Yau Mathematical Sciences Center and Department of Mathematics, Tsinghua University,  Beijing 100084, China}%
\affiliation{Yanqi Lake Beijing Institute of Mathematical Sciences and Applications, Beijing 100407, China}


\begin{abstract}
Large-scale variational quantum algorithms are widely recognized as a potential pathway to achieve practical quantum advantages. 
However, the presence of quantum noise might suppress and undermine these advantages, which blurs the boundaries of classical simulability. 
To gain further clarity on this matter, we present a novel polynomial-scale method based on the path integral of observable's back-propagation on Pauli paths (OBPPP). 
This method efficiently approximates expectation values of operators in variational quantum algorithms with bounded truncation error in the presence of single-qubit Pauli noise. Theoretically, we rigorously prove:
1) For a constant minimal non-zero noise rate $\gamma$, OBPPP's time and space complexity exhibit a polynomial relationship with the number of qubits $n$, the circuit depth $L$. 
2) For variable $\gamma$, in scenarios where more than two non-zero noise factors exist, the complexity remains $\mathrm{Poly}\left(n,L\right)$ if $\gamma$ exceeds $\sfrac{1}{\log{L}}$, but grows exponential with $L$ when $\gamma$ falls below $\sfrac{1}{L}$. 
Numerically, we conduct classical simulations of IBM's zero-noise extrapolated experimental results on the 127-qubit Eagle processor [Nature \textbf{618}, 500 (2023)].  
Our method attains higher accuracy and faster runtime compared to the quantum device.
Furthermore, our approach allows us to simulate noisy outcomes, enabling accurate reproduction of IBM's unmitigated results that directly correspond to raw experimental observations.
Our research reveals the vital role of noise in classical simulations and the derived method is general in computing the expected value for a broad class of quantum circuits and can be applied in the verification of quantum computers. 
\end{abstract}

\maketitle

\section{Introduction}
\label{sec:Introduction}
In the current Noisy Intermediate-Scale Quantum (NISQ) era~\cite{preskill2018quantum,bharti2022noisy,chen2022complexity}, Variational Quantum Algorithms (VQAs)~\cite{Cerezo2021variational,mcclean2016theory,tilly2022variational} play a significant role in diverse fields like combinatorial optimization~\cite{farhi2014quantum,moll2018quantum}, quantum chemistry~\cite{peruzzo2014variational, kandala2017hardware,li2022toward}, quantum machine learning~\cite{beer2020training,huang2021experimental,havlivcek2019supervised,mitarai2018quantum}, quantum circuit compilation~\cite{khatri2019quantum}, and quantum error correction~\cite{johnson2017qvector,xu2021variational}, etc.

In practice, NISQ devices are inevitably affected by noises. These noises would decoherent quantum systems and cause quantum states to collapse, thereby limiting the quantum advantages \cite{stilck2021limitations,wu2023complexity,yan2023limitations,de2023limitations,marshall2020characterizing,gao2018efficient,ben2013quantum}.
On the other hand, noise potentially enables the simulability of complex quantum algorithms by classical methods~\cite{hangleiter2022computational,noh2020efficient,bremner2017achieving,aharonov2022polynomial,yung2017can,fefferman2023effect,mele2024noise,singkanipa2024beyond}.
For instance, in noiseless circuits, Random Circuit Sampling (RCS) tasks have been proven to be difficult to simulate classically \cite{bremner2011classical, aaronson2011computational, bouland2019complexity, movassagh2018efficient, movassagh2019quantum}. However, a polynomial-time algorithm for simulating noisy RCS has been established in the presence of depolarizing noises \cite{aharonov2022polynomial}. For general cases, noisy simulation algorithms based on tensor networks also exhibit decreasing computational complexity as the noise rate increases \cite{noh2020efficient, cheng2021mpdo,zhou2020limits,ayral2023density}.

\begin{figure}[htbp]
  \includegraphics[width=240px]{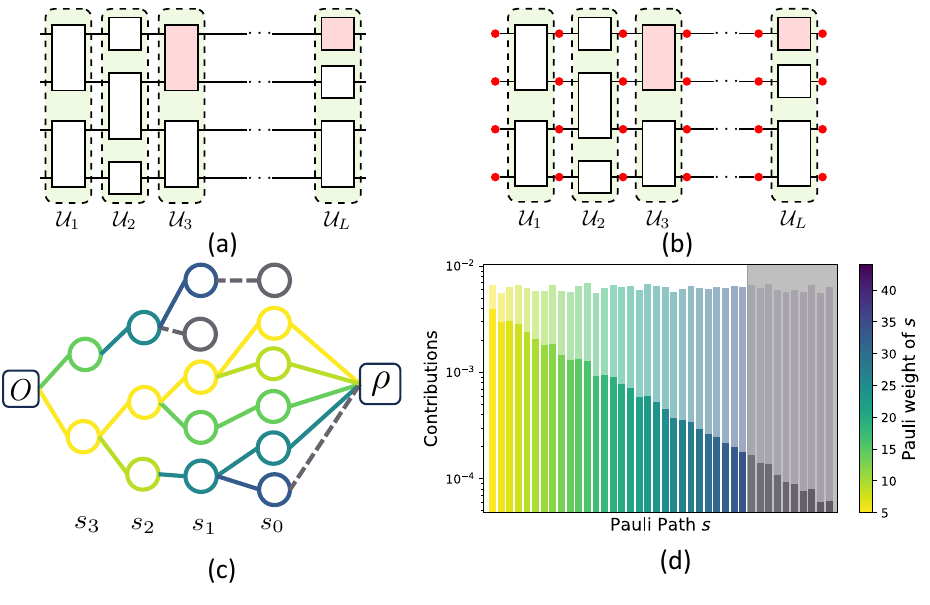}
  \caption{(a) Noiseless parametrized quantum circuits consists of Pauli rotation gates (white boxes) and selected Clifford gates (pink boxes). (b) Single-qubit Pauli noise applied independently to each qubit (red points). (c) A diagrammatic illustration of our method. The Pauli operator back-propagates from the observable through the circuit and bifurcates during the propagation.
  (d) Under the presence of noise, the contributions of Pauli paths are suppressed exponentially with their Pauli weight. Translucent bars represent noiseless contributions, while opaque bars show the contributions after applying single-qubit Pauli noise.
  Contributions from Pauli paths with high Pauli weights are truncated and represented by dashed lines in (c) and the shaded area in (d).}
  \label{fig:first}
\end{figure}
In this work, we propose OBPPP, a novel polynomial-scale method for approximating expectation values in noisy VQAs using a parameterized quantum circuit with Pauli rotation gates and ${{\mathrm{H},\mathrm{S},\mathrm{CNOT}}}$. We leverage the path integral on the Pauli basis, 
which could also be viewed as the Fourier transformation on quantum circuits~\cite{aharonov2022polynomial,gao2018efficient,wang2019quantifying}.
Adopting the Pauli basis offers two main advantages. Firstly, if the system is sparse under this basis, OBPPP could leverage it to accelerate computations. Secondly, Pauli noise can heavily suppress contributions from high-weight Pauli paths, thereby limiting truncation errors.

Fig.~\ref{fig:first} shows the quantum circuits OBPPP is applied, outlining its basic principles, as well as the relationship between noise and algorithm efficiency.

Compared to related simulators on quantum circuits~\cite{noh2020efficient, cheng2021mpdo,bravyi2021classical, napp2022efficient,aharonov2022polynomial,mele2024noise}, OBPPP is not constrained by geometric structure requirements and is less affected by circuit depth. Moreover, it does not require the locality of observables. 
Our approach simulates many commonly utilized noisy VQAs with a polynomial time complexity, and could be served as a benchmark for assessing the capabilities of NISQ computation.

\section{Notations and Prerequisites}
\label{sec:Notations and assumptions}
In typical VQAs, the cost function is determined as:
\begin{equation}\label{eq:define_L}
\mathcal{L}(\bm{\theta})=\Tr{O\mathcal{U}(\bm{\theta})\rho \mathcal{U}^\dagger(\bm{\theta})},
\end{equation}
where $\rho$ is the density matrix of the $n$-qubit input state, and $O$ is the observable, $\mathcal{U}(\bm{\theta})=\mathcal{U}_L(\bm{\theta}_L)\mathcal{U}_{L-1}(\bm{\theta}_{L-1})\cdots \mathcal{U}_1(\bm{\theta}_1)$ is a parameterized quantum circuit, which is composed with $L$ layers unitary transformation $\mathcal{U}_i(\bm{\theta}_i)$ and $\bm{\theta}=(\bm{\theta}_1,\cdots,\bm{\theta}_L)$.

The $\mathcal{U}_i(\bm{\theta}_i)$ in each layer consists of $R_i$ rotation gates and $C_i$ Clifford gates that act on mutually disjoint qubits, $\bm{\theta}_i=(\theta_{i,1},\cdots,\theta_{i,R_i})$ denote the parameter vector of the $i$-th layer.
Specifically, the $j$-th rotation gate in the $i$-th layer is denoted as $U_{i, j}(\theta_{i,j})=\exp{-i \frac{\theta_{i,j}}{2} \sigma_{i,j}}$, where $j \in \{1, \cdots , R_i\}$, $\theta_{i,j}$ is the variational parameter and $\sigma_{i, j}\in \{\mathbb{I}, X,Y,Z\}^{\otimes n}$.
Similarly, the $k$-th Clifford gate in the $i$-th layer is denoted as $V_{i,k}$, where $k \in \{1, \cdots , C_i\}$. $V_{i,k}\in\{\mathrm{H}(a),\mathrm{S}(a),\mathrm{CNOT}(a,b)\}$, where $a,b $ refers to the index of qubit where the gate acts on.

The set of Pauli words for all $U_{i,j}(\theta_{i,j})$ is $\{\sigma_{i,j}\}$.
We denote the set $\{\overline{\sigma}_{i,j}\}$ as all $\sigma_{i,j}$ after Clifford gates transformation, i.e.~$\overline{\sigma}_{i,j}= \mathcal{V}_{L} \cdots \mathcal{V}_{i} \sigma_{i,j} \mathcal{V}_{i}^\dagger \cdots \mathcal{V}_{L}^\dagger$, where $\mathcal{V}_{i} = \prod_{k=1}^{C_i} V_{i,k}$ is the unitary transformation corresponds to the tensor product of all Clifford gates in the $i$-th layer.

To ensure the validity of Lemma~\ref{lemma:MSE_l}, we require an easily achievable prerequisite: the set $\{\overline{\sigma}_{i,j}\}$ could generate $\{\mathbb{I}, X, Y, Z\}^{\otimes n}$ up to a phase of $\{e^{i\psi}|\psi=0,\frac{\pi}{2},\pi,\frac{3\pi}{2}\}$, formulated as
\begin{equation}\label{eq:generate}
  \langle \{\overline{\sigma}_{i,j}\}\rangle/\left(\langle \{\overline{\sigma}_{i,j}\}\rangle\cap\langle i\mathbb{I}^{\otimes n}\rangle\right)=\{\mathbb{I},X,Y,Z\}^{\otimes n},
\end{equation}
here $\langle \{\overline{\sigma}_{i,j}\} \rangle$ refers to the Pauli subgroup that is generated by set $\{\overline{\sigma}_{i,j}\}$, which means elements in $\langle \{\overline{\sigma}_{i,j}\}\rangle$ can be expressed as the finite product of elements in $\{\overline{\sigma}_{i,j}\}$.
To demonstrate the condition is indeed easily met, considering there are a layer of $R_X$ gates and a layer of $R_Z$ gates acting on each qubit in the circuit at the last two layers, then $\{X_i, Z_i\}_{i=1,\cdots,n}$ is contained in $\{\overline{\sigma}_{i,j}\}$. These $\{ X_i,Z_i \}_{i=1,\cdots,n}$ are enough to generate $\{\mathbb{I},X,Y,Z\}^{\otimes n}$. In fact, this sufficient condition can be further weakened, as shown in Suppl. Mat.~{VIII}. 

We also demand $\rho$ and $O$ to be sparse. More precisely, we require the number of the non-zero elements of $\rho=\sum_{a,b}\rho_{a,b}\ketbra{a}{b}$ is polynomially related to the number of qubits $n$, denoted as $\mathrm{Poly}(n)$, where $\ket{a}$ and $\ket{b}$ are computational basis states. The number of all Pauli words $\{\sigma\}$ which linearly compose $O$ is also restricted to be $\mathrm{Poly}(n)$. This constraint is naturally satisfied by widespread VQA frameworks such as the Variational Quantum Eigensolver (VQE) and the Quantum Approximate Optimization Algorithm (QAOA)~\cite{peruzzo2014variational,farhi2014quantum,beer2020training,mitarai2018quantum}.

In this work, the single-qubit Pauli noise $\mathcal{N}$ is assumed to occur independently before each layer and the final observation operator $O$. This noise channel $\mathcal{N}$ is modeled as
$\mathcal{N}(\phi)=(1-p_x-p_y-p_z)\phi+ p_x X\phi X+ p_y Y\phi Y+ p_z Z\phi Z$,
where $\phi$ is a single-qubit density matrix and $p_x,p_y,p_z$ are the probabilities of $X,Y,Z$ error occuring, respectively. 

We define $\widehat{\mathcal{L}}$ as the cost function under this noise:
\begin{equation}\label{eq:define_NF}
\widehat{\mathcal{L}}(\bm{\theta})=\Tr{ O \mathcal{N}^{\otimes n}(\mathcal{U}_L \mathcal{N}^{\otimes n}(\cdots\mathcal{U}_1\mathcal{N}^{\otimes n}(\rho) \mathcal{U}_1^\dagger\cdots) \mathcal{U}_L^\dagger)}.
\end{equation}

\section{Simulation Method}
\label{sec:Method}
The main idea of our approach is to express $\mathcal{L}$ as the path integral of the matrix algebra, in which we select Pauli operators as the basis.

A Pauli path is a sequence $s=(s_0,\cdots,s_L)\in \bm{P}^{L+1}_n$, where $\bm{P}_n=\{\sfrac{\mathbb{I}}{\sqrt{2}},\sfrac{X}{\sqrt{2}},\sfrac{Y}{\sqrt{2}},\sfrac{Z}{\sqrt{2}}\}^{\otimes n} $ represents the set of all normalized $n$-qubit Pauli words. 
As detailed shown in Suppl. Mat.~{III}, the noiseless cost function can be expressed as the sum of the contributions of all Pauli paths, given by
$\mathcal{L}(\bm{\theta})=\sum_{s\in \bm{P}^{L+1}_n} f(\bm{\theta},s,O,\rho)$
~\footnote{
The superoperator $\mathcal{S}_i$, as defined in~\cite{wood2011tensor}, is denoted as the operator $\mathcal{\overline{U}}_i\otimes\mathcal{U}_i$, and the symbol $| \cdot \rangle\!\rangle$ indicates the vectorization of a matrix. The expected value $\mathcal{L}(\bm{\theta})$ is given by the expression:
$\langle\!\langle H| \prod_{i=1}^{L} \mathcal{S}_i |\rho \rangle\!\rangle=\sum_{s\in \bm{P}^{L+1}_n}\langle\!\langle H| s_L \rangle\!\rangle \left(\prod_{i=1}^{L} \langle\!\langle s_i| \mathcal{S}_i| s_{i-1} \rangle\!\rangle \right) \langle\!\langle s_0 | \rho \rangle\!\rangle$.
Here, each term of the summation is an alternative representation of the expression in Eq.~\eqref{eq:f}.\label{footnote:1}
},
where $f(\bm{\theta},s,O,\rho)$ denotes the contribution of one particular Pauli path $s=(s_0,\cdots,s_L)\in \bm{P}^{L+1}_n$:
\begin{equation}\label{eq:f}
\begin{aligned}
f(\bm{\theta},s,O,\rho)=&\Tr{Os_L}\left(\prod_{i=1}^{L}\Tr{s_i\mathcal{U}_i s_{i-1}\mathcal{U}_i^\dagger}\right)\Tr{s_0\rho}.
\end{aligned}
\end{equation}
In the following discussion, we aim to establish that the time complexity for computing each $f(\bm{\theta},s, O,\rho)$ is $\order{nL}+\mathrm{Poly}(n)$.

Firstly, the term $\Tr{Os_L}$ requires the Pauli word $s_L$ must be in $O$, otherwise $f(\bm{\theta},s, O,\rho) = 0$, thereby resulting in an cost of $\order{n}$. Similarly, the input term $\Tr{s_0\rho}$ can be achieved with time complexity of $\mathrm{Poly}(n)$, facilitated by the polynomial-size non-zero elements in $\rho$. More details are provided in Suppl. Mat.~{II}.

Furthermore, for the calculation of the $i$-th layer term $\Tr{s_i\mathcal{U}_i s_{i-1}\mathcal{U}_i^\dagger}$, we propose the following proposition:

\begin{widetext}
\begin{prop}\label{prop:f_ele}
The time and space complexity of calculate the $i$-th layer term in $f$ is of $\order{n}$ by the equality:
\begin{equation}\label{eq:i-layer_terms}
\begin{aligned}
\Tr{s_i\mathcal{U}_i s_{i-1}\mathcal{U}_i^\dagger}&=\Tr{\left(s_i s_{i-1}\right)\big|_{I_i}}\prod_{k=1}^{C_i}\Tr{\left(s_iV_{i,k} s_{i-1}V_{i,k}^{\dagger}\right)\big|_\g{V_{i,k}}}\prod_{\sigma_{i,j}\in C(i,s_{i-1})}\Tr{\left(s_i s_{i-1}\right)\big|_\g{\sigma_{i,j}}}\\
&\prod_{\sigma_{i,j'}\in AC(i,s_{i-1})} \Bigg\{ \Tr{\left(s_i s_{i-1}\right)\big|_\g{\sigma_{i,j'}}}\cos{\theta_{i,j'}}- \Tr{\left(is_i \sigma_{i,j'}s_{i-1}\right)\big|_\g{\sigma_{i,j'}}}\sin{\theta_{i,j'}}  \Bigg\}.
\end{aligned}
\end{equation}
We define $g:\{\mathbb{I}, X, Y, Z\}^{\otimes n}\cup\{\mathrm{CNOT}_{a,b},\mathrm{H}_a,\mathrm{S}_a\}\rightarrow2^{\{1,\cdots, n\}}$ as a map from a unitary operator to the indices of qubits where the unitary operator's action is non-identity. Here $2^{\{1,\cdots, n\}}$ represents all subsets of $\{1,\cdots, n\}$.
For simplification, we divide the indices of $n$ qubits in the $i$-th layer into three sets based on the type of gates applied to them. These sets are denoted as the symbol $\big|_{I_i}$, $\big|_\g{V_{i,k}}$ and $\big|_\g{\sigma_{i,j}}$, corresponding to the identity gate, the Clifford gate and the Pauli rotation gate, respectively. Additionally, the sets $C(i,s_{i-1})$ and $AC(i,s_{i-1})$ denote the sets of Pauli words in $\{\sigma_{i,j}\}_{j=1}^{R_i}$ that commute and anti-commute with $s_{i-1}$, respectively.
\end{prop}
\end{widetext}

\begin{remark}\label{remark:f_ele}
By utilizing the orthonormality property of Pauli words, we can establish the following relations for $\Tr{s_i\mathcal{U}_i s_{i-1}\mathcal{U}_i^\dagger} \neq 0$:

\resizebox{0.95\columnwidth}{!}{
\begin{tabular}{|c|c|c|}
  \hline
   Terms & Relations &Factor in $f$\\
  \hline
  ${I_i}$ &$s_{i-1}\big|_{I_i} = s_{i}\big|_{I_i}$& 1 \\
  \hline
  $V_{i,k}$ &$s_{i-1}\big|_\g{V_{i,k}}= \pm V_{i,k}^{\dagger} s_{i}V_{i,k}\big|_\g{V_{i,k}}$& $\pm 1$ \\
  \hline
  $C(i,s_{i-1})=C(i,s_{i})$&$s_{i}|_\g{\sigma_{i,j}}=s_{i-1}|_\g{\sigma_{i,j}}$& 1 \\
  \hline
  \multirow{2}{*}{$AC(i,s_{i-1})=AC(i,s_{i})$}
  &$s_{i}|_\g{\sigma_{i,j}}=s_{i-1}|_\g{\sigma_{i,j}}$& $\cos{\theta_{i,j}}$ \\
  \cline{2-3}
  &$s_{i}|_\g{\sigma_{i,j}}=\pm i \sigma_{i,j} s_{i-1}|_\g{\sigma_{i,j}}$& $\mp \sin{\theta_{i,j}}$ \\
  \hline
\end{tabular}}

\end{remark}
A crucial point is that each item in $AC(i,s_{i-1})$ splits a Pauli path into two, whereas others do not. This determined the dynamics of the observable's back-propagation on Pauli paths as shown in Fig.~\ref{fig:first}(c). In OBPPP, we initially select all $s_L$ which are included in $O$. For each instance of $s_L$, we enumerate all potential $s_{L-1}$, resulting in at most $2^\abs{s_L}$ configurations of $s_{L-1}$. This enumeration process is iteratively applied back to $s_0$.

This complexity can be further suppressed by the presence of single-qubit Pauli noise. The following lemma is an extension of the lemma proposed in \cite{aharonov2022polynomial}.
\begin{lemma}\label{lemma:f_noisy}
Let $\hat{f}(\bm{\theta},s,O,\rho)$ be the contribution of a Pauli path $s=(s_0,\cdots,s_L)\in \bm{P}^{L+1}_n$ in the noisy cost function $\widehat{\mathcal{L}}(\bm{\theta})$. In the presence of the single-qubit Pauli noise, the relationship between the noiseless contribution $f$ and $\hat{f}$ can be characterized as follows:
\begin{equation}\label{eq:noisy_f_term_decrease}
  \begin{aligned}
    \hat{f}(\bm{\theta},s,O,\rho)=&(1-2(p_y+p_z))^{\abs{s}_X}(1-2(p_x+p_z))^{\abs{s}_Y}\\
    &(1-2(p_x+p_y))^{\abs{s}_Z}f(\bm{\theta},s,O,\rho),
  \end{aligned}
\end{equation}
where $\abs{s}_P=\sum_{i}\abs{s_i}_P$ represents the aggregate count of Pauli operator $P \in \{\frac{X}{\sqrt{2}},\frac{Y}{\sqrt{2}},\frac{Z}{\sqrt{2}}\}$ among all $(L+1)n$ individual Pauli operators, $\abs{s_i}_P$ denotes the number of element $P$ in $s_i$.
\end{lemma}
Lemma~\ref{lemma:f_noisy} states that the path integral in the Pauli basis provides a convenient approach for quantifying the impact of noise. In essence, by estimating all noiseless contributions $f(\bm{\theta},s, O,\rho)$, it is sufficient to evaluate the noisy cost function $\widehat{\mathcal{L}}$.

To estimate $\widehat{\mathcal{L}}$, our discussion is divided into two cases based on the number of non-zero factor in noise:
\begin{itemize}
  \item Case 1: At least two non-zero elements in $\{p_x,p_y,p_z\}$, We define the Pauli weight as $\abs{s}:=\sum_{P\in \{X,Y,Z\}}\abs{s}_P$, which commonly referred as the Hamming weight. An example of case 1 is the depolarizing channel.
  \item Case 2: Only one element in $\{p_x,p_y,p_z\}$ non-zero~(without loss of generality, $p_z\neq 0$), Pauli weight is defined as $\abs{s}:=\sum_{P\in \{X,Y\}}\abs{s}_P$. The dephasing channel is an example of case 2.
\end{itemize}

OBPPP calculates all contributions of the Pauli paths with $\abs{s}\leq M$ to approximate $\widehat{\mathcal{L}}$. 
Here, let $\widetilde{\mathcal{L}}(\bm{\theta}):=\sum_{\abs{s}\leq M} \hat{f}(\bm{\theta},s,O,\rho)$~\footnote{Similar to \cite{Note1}, in noisy case, the different is to replace the superoperator $\mathcal{S}_i$ from noiseless operators $\mathcal{\overline{U}}_i\otimes\mathcal{U}_i$ to noisy operators $(\mathcal{\overline{U}}_i\otimes\mathcal{U}_i) \cdot\mathcal{N}^{\otimes n}$} represent the approximate noisy cost function. To estimate the complexity of calculating $\widetilde{\mathcal{L}}$ with bounded truncation error, we introduce the following lemma and theorem.

\begin{lemma}\label{lemma:MSE_l}
  Suppose Eq.~\eqref{eq:generate} is satisfied, for $\forall\nu > 0$, given $M\geq\frac{1}{4\gamma}\ln{\frac{\norm{O}_\infty^2}{\nu}}$, the mean-square error $\mathbb{E}_{\bm{\theta}}\abs{\widetilde{\mathcal{L}}-\widehat{\mathcal{L}}}^2$ is upper bounded by $\nu$, where $\gamma:=\min\{p|{p \in \{p_x,p_y,p_z\},p\neq 0}\}$.
\end{lemma}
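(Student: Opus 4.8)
The plan is to obtain Lemma~\ref{lemma:MSE_l} as an immediate consequence of the bound already assembled in Eq.~\eqref{eq:expectation_errors}, namely
\begin{equation*}
\mathbb{E}_{\bm{\theta}}\abs{\widetilde{\mathcal{L}}-\widehat{\mathcal{L}}}^2 \leq \exp{-2\lambda M}\norm{H}_\infty^2.
\end{equation*}
The genuine content is therefore spent upstream, and I would briefly recall why that inequality holds: the cross-term vanishing of Lemma~\ref{lemma:cross_items} collapses $\mathbb{E}_{\bm{\theta}}\big[\sum_{\abs{s}>M}\hat{f}\big]^2$ into the diagonal sum $\mathbb{E}_{\bm{\theta}}\sum_{\abs{s}>M}\hat{f}^2$; the noise factor of Lemma~\ref{lemma:f_noisy} supplies the weight $(1-\lambda)^{2\abs{s}}$; the truncation condition $\abs{s}>M$ lets me factor out $(1-\lambda)^{2M}$; the total second moment is controlled by $\norm{H}_\infty^2$; and the elementary estimate $1-\lambda\leq\exp{-\lambda}$ converts the geometric factor into the stated exponential.

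Given this inequality, the remaining step is purely algebraic. Since the right-hand side $\exp{-2\lambda M}\norm{H}_\infty^2$ is strictly decreasing in $M$ whenever $\lambda>0$, it suffices to locate the threshold at which it first drops to $\nu$. I would impose $\exp{-2\lambda M}\norm{H}_\infty^2\leq\nu$, take logarithms to obtain $-2\lambda M\leq\ln\frac{\nu}{\norm{H}_\infty^2}$, and rearrange into $M\geq\frac{1}{2\lambda}\ln\frac{\norm{H}_\infty^2}{\nu}$. Any $M$ at or above this value then forces $\mathbb{E}_{\bm{\theta}}\abs{\widetilde{\mathcal{L}}-\widehat{\mathcal{L}}}^2\leq\nu$, which is exactly the assertion.

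There is essentially no obstacle internal to the lemma: it is a one-line inversion of an exponential. The only points demanding a moment of care are the mild well-posedness conditions, namely $\norm{H}_\infty>0$ (otherwise the logarithm is undefined, but then $\widehat{\mathcal{L}}$ and $\widetilde{\mathcal{L}}$ both vanish and the claim is trivial) and $\lambda>0$ (so that the threshold is finite). The true difficulty of this portion of the argument lives entirely in the prerequisites feeding Eq.~\eqref{eq:expectation_errors}, in particular the expectation identity of Lemma~\ref{lemma:cross_items}, which is what legitimizes the sum-of-squares reduction and thereby makes the clean exponential bound—and hence this lemma—available at all.
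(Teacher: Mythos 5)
Your proposal is correct and follows exactly the paper's route: the paper also derives the lemma as an immediate consequence of Eq.~\eqref{eq:expectation_errors}, i.e., $\mathbb{E}_{\bm{\theta}}\abs{\widetilde{\mathcal{L}}-\widehat{\mathcal{L}}}^2 \leq (1-\lambda)^{2M}\norm{H}_\infty^2\leq\exp{-2\lambda M}\norm{H}_\infty^2$, inverted by the same logarithmic rearrangement to yield the threshold $M\geq\frac{1}{2\lambda}\ln{\frac{\norm{H}_\infty^2}{\nu}}$. Your added remarks on the well-posedness conditions ($\norm{H}_\infty>0$, $\lambda>0$) are a harmless refinement the paper leaves implicit.
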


\begin{theorem}\label{thm:main}
Suppose Eq.~\eqref{eq:generate} is satisfied and $O,\rho$ are sparse, for a fixed $\gamma$, given arbitrary truncation error $\varepsilon$, there exists a polynomial-scale classical algorithm to determine the approximated noisy cost function $\widetilde{\mathcal{L}}$, which satisfies $\abs{\widetilde{\mathcal{L}}-\widehat{\mathcal{L}}} \leq \varepsilon$ with a probability of at least $1-\delta$ over all possible parameters $\bm{\theta}$. The time complexity is $\mathrm{Poly}(n) \order{L} \bigg(\frac{\norm{O}_\infty}{\varepsilon \sqrt{\delta}} \bigg)^{\order{\sfrac{1}{\gamma}}}$ for Case 1 and $\mathrm{Poly}(n)  \order{(nL)^{\frac{1}{2\gamma} \ln{\frac{\norm{O}_\infty}{\varepsilon \sqrt{\delta}}}+1}}$ for Case 2. The space complexity is $\order{\mathrm{Poly}(n)+nL}$.
\end{theorem}

For Case 2, the time complexity remains $\mathrm{Poly}(n,L)$ once given $\delta$ and finite $\sfrac{\norm{O}_{\infty}}{\varepsilon}$. For detailed proof, see Suppl. Mat.~{VII and IX}.

To investigate how variable $\gamma$ affects time complexity with increasing depth $L$ in Case 1, we propose the following proposition.
\begin{prop}
\label{prop:lambda_and_L}
For Case 1, suppose Eq.~\eqref{eq:generate} is satisfied and $\norm{O}_\infty$ is fixed.
To estimate $\widetilde{\mathcal{L}}(\bm{\theta})$ with $\mathbb{E}_{\bm{\theta}}\abs{\widetilde{\mathcal{L}}-\widehat{\mathcal{L}}}^2$ less than a sufficiently small constant,
  we have 
  
  1. If $\gamma=\Omega(\frac{1}{\log{L}})$, there exists a classical algorithm that can complete the computation in time $\mathrm{Poly}\left(n,L\right)$. 
  
  2. If $\gamma=\order{\frac{1}{L}}$, there exists a situation where our method exhibits exponential time complexity with respect to $L$.
\end{prop}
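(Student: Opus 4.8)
The plan is to treat the two parts separately: the first follows directly from the sufficient bound on the truncation level already established, while the second requires constructing an explicit adversarial instance.

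For Part 1, I would combine Lemma~\ref{lemma:MSE_l} with the running time in Eq.~\eqref{eq:T}. Fixing the target mean-square error $\nu$ and $\norm{H}_\infty$ as constants, Lemma~\ref{lemma:MSE_l} shows that the truncation level $M=\frac{1}{2\lambda}\ln(\norm{H}_\infty^2/\nu)$ suffices; since the logarithm is an $\order{1}$ constant, this gives $M=\order{1/\lambda}$. Substituting the hypothesis $\lambda=\Omega(1/\log L)$ yields $M=\order{\log L}$, hence $2^M=2^{\order{\log L}}=\mathrm{Poly}(L)$. Plugging this into Eq.~\eqref{eq:T} gives total time $\mathrm{Poly}(n)\,\order{L}\,\mathrm{Poly}(L)=\mathrm{Poly}(n,L)$. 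This part is essentially a change of variables and requires no new ideas.

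For Part 2, I would exhibit a single explicit instance, and it suffices to take $n=1$, in which every Pauli path contributing to the integral has Hamming weight exactly $L+1$, so that no truncation with $M<L+1$ captures any signal. Concretely, take $\rho=\ketbra{0}{0}$, $H=Z$, let the first two layers be a single $R_X$ and a single $R_Z$ rotation (so that $\{\overline{\sigma}_{i,j}\}\supseteq\{X,Z\}$, which generates $\{\mathbb{I},X,Y,Z\}$ up to phase and hence satisfies Eq.~\eqref{eq:generate}), and let every remaining layer be an $R_Y$ rotation. The key structural observation is that a single-qubit rotation transfer maps the identity Pauli only to itself and maps any non-identity Pauli only to non-identity Paulis; reading the path backward from $s_L=Z\neq\mathbb{I}$ therefore forces every $s_i$ to be non-identity, so $\abs{s_i}=1$ for all $i$ and $\abs{s}=L+1$ on every nonvanishing path. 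Consequently $\widetilde{\mathcal{L}}=0$ whenever $M<L+1$, and by the cross-term identity of Lemma~\ref{lemma:cross_items} together with Lemma~\ref{lemma:f_noisy} the error equals $\mathbb{E}_{\bm{\theta}}\abs{\widetilde{\mathcal{L}}-\widehat{\mathcal{L}}}^2=(1-\lambda)^{2(L+1)}\,\mathbb{E}_{\bm{\theta}}\abs{\mathcal{L}}^2$. It then remains to bound the right-hand side below by a positive constant: a short Bloch-sphere computation (propagating the Bloch vector through $R_X$ and $R_Z$, then averaging over the total $R_Y$ angle, which is uniform modulo $2\pi$) gives $\mathbb{E}_{\bm{\theta}}\abs{\mathcal{L}}^2=\Theta(1)$ independent of $L$, while for $\lambda=\order{1/L}$ the factor $(1-\lambda)^{2(L+1)}$ is $\Theta(1)$ because its exponent $2(L+1)\lambda$ is $\order{1}$. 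Hence the error stays $\Omega(1)$ for all $M<L+1$, so driving it below a fixed small constant forces $M\geq L+1$, and by Eq.~\eqref{eq:T} the running time is at least $\mathrm{Poly}(n)\,\order{L}\,2^{L+1}$, exponential in $L$.

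The main obstacle is the Part 2 construction, where I must simultaneously satisfy the generating prerequisite Eq.~\eqref{eq:generate}, force all contributing paths to have Hamming weight linear in $L$, and verify that the averaged squared signal $\mathbb{E}_{\bm{\theta}}\abs{\mathcal{L}}^2$ does not itself decay with $L$, since otherwise the error lower bound would collapse. The delicate point is that the noise factor $(1-\lambda)^{2\abs{s}}$ is the only mechanism capable of shrinking the error, so the whole argument hinges on the regime $\lambda=\order{1/L}$ in which this factor is $\Theta(1)$ rather than exponentially small; I would scrutinize the $L$-independence of $\mathbb{E}_{\bm{\theta}}\abs{\mathcal{L}}^2$ most carefully, as that is where a hidden decay could otherwise creep in.
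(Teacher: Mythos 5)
Your Part 1 is correct and is exactly the paper's argument: Lemma~\ref{lemma:MSE_l} gives $M=\order{1/\lambda}$, the hypothesis $\lambda=\Omega(1/\log L)$ turns $2^{M}$ into $\mathrm{Poly}(L)$, and Eq.~\eqref{eq:T} finishes. Your Part 2 construction is also sound and closely parallels the paper's own hard instance (the paper takes $n$ qubits, $H=Z_1+Y_1$, one $R_Z$ layer, one $R_X$ layer, then $L-2$ layers of $R_X$ on the first qubit; you take $n=1$, $H=Z$, $R_X$, $R_Z$, then $R_Y$ layers). Your weight-forcing observation (single-qubit rotations map non-identity Paulis to non-identity Paulis, so every nonvanishing path has $\abs{s}=L+1$), the verification of Eq.~\eqref{eq:generate}, the identity $\mathbb{E}_{\bm{\theta}}\abs{\widetilde{\mathcal{L}}-\widehat{\mathcal{L}}}^2=(1-\lambda)^{2(L+1)}\,\mathbb{E}_{\bm{\theta}}\mathcal{L}^2$ for any $M<L+1$, and the two $\Theta(1)$ claims all check out: writing $\beta$ for the total $R_Y$ angle one gets $\mathcal{L}=\cos\beta\cos\theta_1-\sin\beta\sin\theta_1\sin\theta_2$, whence $\mathbb{E}_{\bm{\theta}}\mathcal{L}^2=3/8$ for uniform angles (the paper does the analogous computation via an Irwin--Hall characteristic function), and $(1-\lambda)^{2(L+1)}=\Theta(1)$ under $\lambda=\order{1/L}$ (the paper uses Bernoulli's inequality here).

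The genuine gap is your final inference: ``by Eq.~\eqref{eq:T} the running time is at least $\mathrm{Poly}(n)\,\order{L}\,2^{L+1}$.'' Eq.~\eqref{eq:T} is an \emph{upper} bound on the runtime, so it cannot be used to lower bound anything. Forcing $M\geq L+1$ does not by itself imply exponential time: the enumeration procedure of Appendix~\ref{ap:algorithm} visits only backward-compatible paths and prunes by weight, so its cost is governed by the number of such paths, not by $2^{M}$. For instance, with $H=X$ and every layer an $R_X$ rotation, $s_L=X$ commutes with every gate, there is exactly one path, and the algorithm runs in time $\order{L}$ no matter how large $M$ is. To close the argument you must count the paths in your instance: since both $X$ and $Z$ anti-commute with $Y$, each of the $L-2$ $R_Y$ layers doubles the number of backward-compatible continuations starting from $s_L=Z$, so the algorithm must enumerate at least $2^{L-2}$ paths, which is the exponential lower bound you need. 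This counting step is exactly what the paper makes explicit for its own instance (about $2^{L-1}$ paths: $s_0=Z_1$, $s_1=Z_1$, and $s_i\in\{Z_1,Y_1\}$ for $i\geq 2$); your construction supports the same count, but as written the step is missing and the cited justification is invalid.
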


\section{Numerical experiments}
\label{sec:Numerical}
In our theoretical analysis, we estimated the worst-case computational complexity, which is significantly higher than the complexity observed during actual computations. For further exploration, see Suppl. Mat.~{XI}.
\begin{figure}[htbp]
\includegraphics[width=0.5\textwidth]{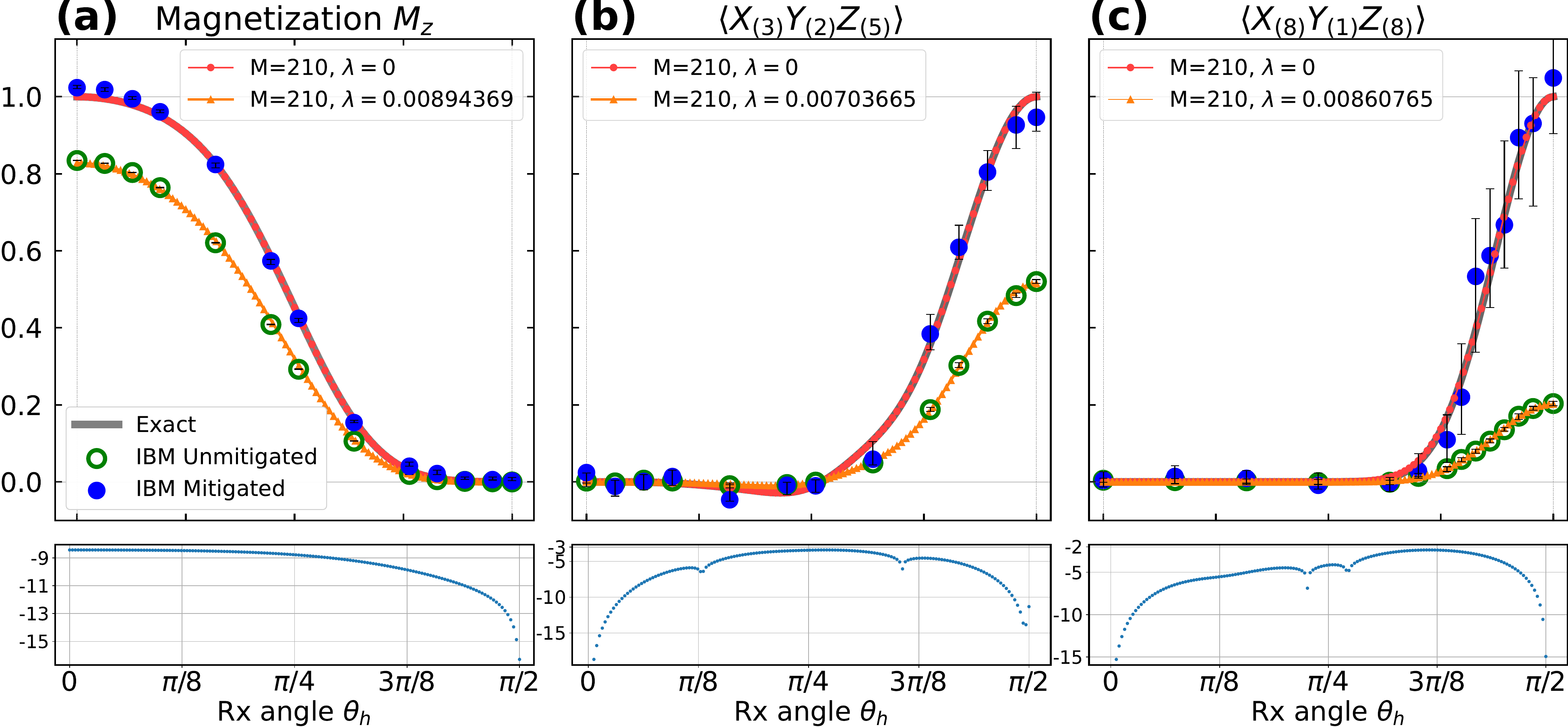}
\includegraphics[width=0.5\textwidth]{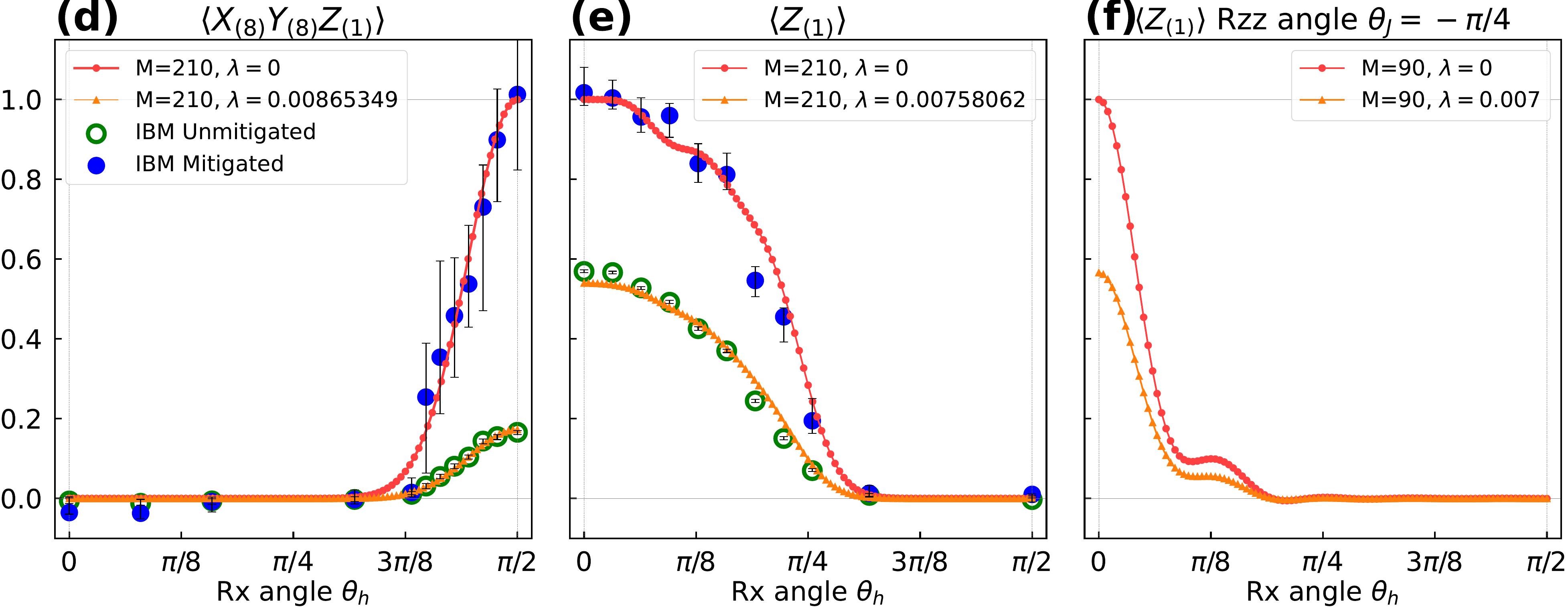}
\caption{
Classical simulation of different Pauli operators on IBM's 127-qubit Eagle processor. The green loops and blue circles in (a)-(e) correspond to direct experimental observations and error-mitigated results of different Pauli operators as reported in Ref.~\cite{kim2023evidence}. (a)-(c) simulate a circuit with $5$ Trotter steps, (d) simulates a circuit with $5$ Trotter steps and an additional layer of $R_X$ gates, and (e)-(f) simulate a circuit with $20$ Trotter steps. The rotation angle $\theta_J$ of the $R_{ZZ}$ gate in (a)-(e) is set to $\theta_J=-\pi/2$, while (f) is taken from Ref.~\cite{anand2023classical} with $\theta_J=-\pi/4$. The bottom subplots in (a)-(c) show the absolute difference between the simulated results of OBPPP and the exact results from Ref.~\cite{kim2023evidence}. In each figure, the red-dot line represents the output of OBPPP, which is an analytic expression of a trigonometric polynomial with respect to $\theta_h$. The orange-triangle line denotes the expectation from the suppressed path contributions under certain noise. The runtimes of (a)-(f) on two Xeon 6330 CPUs (28 cores per chip) are within 13 seconds, 146 seconds, 29 seconds, 137 seconds, 262 seconds, and 57 seconds, respectively.}\label{fig:IBM}
\end{figure}

In this section, we validate the efficiency of OBPPP method in practical applications by performing a classical simulation of IBM's 127-qubit Eagle processor~\cite{kim2023evidence}.

We utilize a depth-first search strategy on 56 cores of Xeon CPU to generate a list of valid paths. Notably, those Pauli path contributions are actually represented as analytical expressions of $\theta_h$, enabling us to obtain all corresponding results for $\theta_h$ across the entire continuous interval in a single computation.

Using Lemma~\ref{lemma:f_noisy}, we can directly compute the operator expectation values of the noisy circuits based on the analytical expressions of contributions.
This allows us to directly fit the raw experimental data before error mitigation.
To the best of our knowledge, this is currently the only classical algorithm capable of achieving this.

Figure~\ref{fig:IBM} presents six simulations: (a)-(e) from Ref.~\cite{kim2023evidence} and (f) from Ref.~\cite{anand2023classical}. For (a)-(c), where exact solutions exist, we found the OBPPP (M=210) outperforms quantum devices in precision and speed~\cite{beguvsic2023fast, kim2023evidence}. Without exact solutions for (d)-(e), OBPPP matched IBM's mitigated outcomes closely. In the unique scenario of (f), lacking both exact and experimental benchmarks, OBPPP simulations surpassed quantum chips in speed. For (a)-(e), we optimized noise rates ($p_x=p_y=p_z=\sfrac{\lambda}{4}$) using the least squares method,  so as to fit the expectation values of noisy circuits. A strong agreement was observed between OBPPP and IBM's unmitigated results, with average deviations below $0.002$ for (a)-(d) and below $0.008$ for (e). The optimal $\lambda$, ranging from $0.007$ to $0.009$, also in agreement with IBM's reported error rates.

In comparison with other recent classical simulation algorithms~\cite{tindall2023efficient,beguvsic2023fast}, our method possesses two main advantages: the ability to obtain an analytical expression for $\theta_h$ and the capability to infer the expected values of noisy circuit outcomes. Compared to the tensor network method~\cite{tindall2023efficient}, OBPPP demonstrates higher accuracy in (b) and (c). 
Throughout cases (a)-(e), OBPPP also exhibits faster execution times, especially in deeper circuits~\cite{tindall2023efficient}. The runtime of OBPPP is also much shorter than that of sparse Pauli dynamics (SPD)~\cite{beguvsic2023fast}. Besides, a notable difference is that OBPPP is less affected when computing $\theta_J=\sfrac{-\pi}{4}$ in (f). Moreover, OBPPP can deliver more accurate results than SPD.

In Suppl. Mat. XIII. we have also compared with an experiment on trapped-ion system~\cite{pagano2020quantum} and got highly consistent results.

\section{Conclusions and Discussions}
\label{sec:Discussion}
In this work, we have introduced OBPPP, a novel polynomial-scale method for approximating expectation values in VQAs under the single-qubit Pauli noise. This method is based on the truncated path integral on the Pauli basis. 

In theory, we have proven the method's time complexity $\mathrm{Poly}\left(n, L\right)$, and its space complexity $\order{\mathrm{Poly}(n)+nL}$.
We have also proven that in case 1 when $\gamma$ exceeds $\sfrac{1}{\log{L}}$, the computational complexity remains $\mathrm{Poly}\left(n,L\right)$.
These results highlight noise's vital role in shaping classical simulation feasibility.

Numerically, we have successfully performed classical simulations on IBM's Eagle processor in a shorter runtime than quantum hardware, while achieving more accurate expectations. Compared with other classical simulators, our method also perform faster runtime. By leveraging Lemma~\ref{lemma:f_noisy}, we obtained varying values of $\widetilde{\mathcal{L}}$ for different $\lambda$, leading to a strong fit with the unmitigated raw experimental data.

Our approach eliminates geometric constraints on quantum devices, allowing interactions with qubits in any position and facilitating multi-qubit rotation gates. Unlike previous methods with $\order{1}$ and $\Omega(\log{n})$ depth, our method imposes no assumptions on circuit depth or structure randomness (e.g. 2-design), or the prior distribution of the circuit output(e.g. anti-concentration).

Meanwhile, setting $p_x,p_y,p_z$ to $0$ in the analytic expressions derived from OBPPP introduces a novel idea of global~(rather than handling errors point-by-point) error mitigation~(for Pauli noise), offering new insight for efficiently fitting noiseless results.

 It is important to note that OBPPP relies on the following prerequisites: (1) Restricting quantum circuit gates to ${\mathrm{H}, \mathrm{S}, \mathrm{CNOT}}$ and single-parameter Pauli rotation gates. (2) Ensuring that the Clifford-transformed set ${\overline{\sigma}_{i,j}}$ generates all Pauli words $\{\mathbb{I},X,Y,Z\}^{\otimes n}$. (3) Sparsity of $\rho$ (computational basis) and $O$ (Pauli basis).

Our research focuses primarily on the effects of single-qubit Pauli noise, which includes the most common channels like depolarizing and dephasing noise. Similar approach also works for local unital noise. Yet, understanding the impact of other noise forms, such as general non-unital and correlated noises, remains challenging~~(more discussion on general noise models and related research is shown in Suppl. Mat.~{XI}).
Current findings lack rigorous proof about the noise's effect on model training performance. Moreover, numerical experiments present significant opportunities for optimization in existing algorithms. For example, as shown in Fig. Fig.~\ref{fig:IBM}(f), reaching $M=90$ is far from the limit of classical computers. We are merely limited by the memory of our current devices. This can be easily improved with better devices and algorithms.       In conclusion, numerous relevant studies require further exploration.
\begin{acknowledgments}
We thank Xun Gao, Fan Lu, Ningfeng Wang, Zhaohui Wei, Yusen Wu, Zishuo Zhao and Qin-Cheng Zheng for valuable discussions.
S.C was supported by the National Science Foundation of China (Grant No. 12004205). 
Z.L was supported by NKPs (Grant No. 2020YFA0713000).
Y.S, F.W and Z.L were supported by BMSTC and ACZSP (Grant No. Z221100002722017). 
S.C and Z.L were supported by Beijing Natural Science Foundation (Grant No. Z220002).

\end{acknowledgments}

\nocite{*}

\bibliography{aps_preprint}

\clearpage
\widetext

\appendix
\section*{Supplement Material}
\renewcommand{\thesection}{\Roman{section}}
\renewcommand{\appendixname}{Supplement Material}


\section{Notations}

In typical VQAs, the cost function is determined by the following expected value:
\begin{equation}\label{ap:eq:define_L}
\mathcal{L}(\bm{\theta})=\Tr{O\mathcal{U}(\bm{\theta})\rho \mathcal{U}^\dagger(\bm{\theta})},
\end{equation}
where $\rho$ is the density matrix of the $n$-qubit input state, and $O$ is the observable which is represented as a linear combination of Pauli operators, $\mathcal{U}(\bm{\theta})$ is a parameterized quantum circuit, which is composed with $L$ layers unitary transformation $\mathcal{U}_i(\bm{\theta}_i)$, $\mathcal{U}(\bm{\theta})=\mathcal{U}_L(\bm{\theta}_L)\mathcal{U}_{L-1}(\bm{\theta}_{L-1})\cdots \mathcal{U}_1(\bm{\theta}_1)$ and $\bm{\theta}=(\bm{\theta}_1,\cdots,\bm{\theta}_L)$. 

Notations $X$, $Y$, $Z$ represent the Pauli matrices $\begin{pmatrix}
  0 & 1\\
  1 & 0
\end{pmatrix},\;
\begin{pmatrix}
  0 & -i\\
  i & 0
\end{pmatrix}\; \mathrm{and}\;
\begin{pmatrix}
  1 & 0\\
  0 & -1
\end{pmatrix}$, respectively.

$\mathrm{H}(a),\mathrm{S}(a)$, and $ \mathrm{CNOT}(a,b)$ are defined as
\begin{equation}
  \begin{aligned}
\mathrm{H}(a)&=\mathbb{I}\otimes\cdots\otimes \mathbb{I}\otimes \mathrm{H_a}\otimes \mathbb{I}\otimes\cdots\otimes \mathbb{I}; \\
\mathrm{S}(a)&=\mathbb{I}\otimes\cdots\otimes \mathbb{I}\otimes \mathrm{S_a}\otimes \mathbb{I}\otimes\cdots\otimes \mathbb{I} ;\\
\mathrm{CNOT}(a,b)&=\mathbb{I}\otimes\cdots\otimes \mathbb{I}\otimes \mathrm{CNOT}_{a,b}\otimes \mathbb{I}\otimes\cdots\otimes \mathbb{I},
  \end{aligned}
\end{equation}
where $\mathrm{H}_a$ and $\mathrm{S}_a$ represent the Hadamard and phase gate acting on the $a$-th qubit. $\mathrm{CNOT}_{a,b}$ is given by $\ket{x}_a\ket{y}_b\rightarrow \ket{x}_a\ket{x\oplus y}_b$, where $\ket{\cdot}_a$ represent computational basis in the $a$-th qubit.

\section{Preparation of data}
\label{ap:pre_data}

\subsection{The input state $\rho$}

By assumption, the input state $\rho$ in the algorithm has sparsity:
\begin{equation}
\rho=\sum_{a,b} \rho_{a,b}\ketbra{a}{b},
\end{equation}
where $\ket{a}$ and $\ket{b}$ are computational basis states and there are $\mathrm{Poly}(n)$ size of non-zero $\rho_{a,b}$.

For each element $\rho_{a,b}\ketbra{a}{b}$ in $\rho$, $\Tr{s_0  (\rho_{a,b}\ketbra{a}{b})}$ can be calculated by
\begin{equation}\label{ap:eq:component_of_input}
\Tr{s_0  (\rho_{a,b}\ketbra{a}{b})}=\rho_{a,b} \bra{b}s_0 \ket{a}=\rho_{a,b} \prod_{j=1}^n \bra{b}_j (s_0)|_j \ket{a}_j,
\end{equation}
where $|_j$ is the limitation that limit the operator on $j$-th qubit and $\ket{\cdot}_j$ denotes $j$-th component of $\ket{\cdot}$.

By Eq.~\eqref{ap:eq:component_of_input}, $\Tr{s_0  (\rho_{a,b}\ketbra{a}{b})}$ can be calculated with time (space) complexity $\order{n}$. Then, by the sparsity assumption, $\Tr{s_0 \rho}$ can be calculated with time (space) complexity $\mathrm{Poly}(n)$.

\subsection{The observable $O$}

By assumption, the observable $O$ is a linear combination of Pauli words, and there are $\mathrm{Poly}(n)$ size of Pauli words in $O$ with non-zero coefficients.

We store $O$ in a tree data structure.
Each node in the tree is assigned a Pauli operator. 
The leaf nodes of the tree correspond to a unique Pauli word and store the corresponding coefficient value of $O$.
As an illustration, consider the observable $O=1 X_0+1 Z_1+0.5 X_0X_1$, which can be represented by the tree depicted as Fig.~\ref{fig:trie}.

\begin{figure}[htbp]
\begin{tikzpicture}
  [level distance=10mm,
   every node/.style={fill=red!60,circle,inner sep=1pt},
   level 1/.style={sibling distance=20mm,nodes={fill=red!45}},
   level 2/.style={sibling distance=10mm,nodes={fill=red!30}},
   level 3/.style={dashed,sibling distance=5mm,nodes={fill=red!0}}],
  \node[minimum size=0.01cm] { }
     child {node {$I_0$}
       child {node {$Z_1$}
       child {node {$1$}}
       }
     }
     child {node {$X_0$}
       child {node {$I_1$}
       child {node {$1$}}}
       child {node {$X_1$}
       child {node {$0.5$}}}
     };
\end{tikzpicture}
\caption{The tree representation of observable $O=1 X_0+1 Z_1+0.5 X_0X_1$.}\label{fig:trie}
\end{figure}

As a result, we can determine $\Tr{O s_L}$ utilizing the tree data structure with time complexity of $\order{n}$ and store the tree with space complexity of $\mathrm{Poly}(n)$.

\section{Pauli path integral}\label{ap:path_integral}
In the method, we used the Feynman path integral in the
Pauli basis to express the cost function $\mathcal{L}(\bm{\theta})=\Tr{O\mathcal{U}(\bm{\theta})\rho \mathcal{U}^\dagger(\bm{\theta})}$ as:
\begin{equation}\label{ap:eq:pauli_path_integral}
  \mathcal{L}(\bm{\theta})=\sum_{s_0,\cdots,s_L \in \bm{P}_n} f(\bm{\theta},s,O,\rho),
\end{equation}
where
\begin{equation}\label{ap:eq:f}
  f(\bm{\theta},s,O,\rho)=\Tr{Os_L}\left(\prod_{i=1}^{L}\Tr{s_i\mathcal{U}_i s_{i-1}\mathcal{U}_i^\dagger}\right)\Tr{s_0\rho}.
\end{equation}

First, to facilitate readers who are not familiar with tensor network diagrams~\cite{wood2011tensor}, below are some basic operation rules of tensor network diagrams.

A vector $\ket{v}$ in a tensor network is a first-order tensor, represented by a node with a line on the left. Similarly, a linear operator $A$ in a tensor network is a second-order tensor, represented by a node with a line on the left and right side. They can be represented as:
\begin{equation}
  \ket{v}
  =
  \begin{tikzpicture}[baseline=(current bounding box.center)]
    \node[draw] (s0) at (0,0) {$v$};
    \draw [thick] (-0.5,0)--(s0);
  \end{tikzpicture}
  \quad \mathrm{and} \quad
  A
  =
  \begin{tikzpicture}[baseline=(current bounding box.center)]
    \node[draw] (s0) at (0,0) {$A$};
    \draw [thick] (-0.5,0)--(s0) (s0)--(0.5,0);
  \end{tikzpicture}
\end{equation}

A transposition of a matrix $A$ can be expressed as a rotation of the corresponding tensor:
\begin{equation}
  \begin{tikzpicture}[baseline=(current bounding box.center)]
    \node[draw] (s0) at (0,0) {$A^T$};
    \draw [thick] (-0.5,0)--(s0) (s0)--(0.5,0);
  \end{tikzpicture}
  =
  \begin{tikzpicture}[baseline=(current bounding box.center)]
    \node[draw] (s0) at (0,0) {$A$};
    \draw[thick] (-0.5,0)--(s0) (s0)--(0.5,0);
    \draw[thick] (-0.5,0.5) arc(90:270:0.25);
    \draw[thick] (0.5,-0.5) arc(-90:90:0.25);
    \draw[thick] (-0.5,0.5)--(1,0.5) (0.5,-0.5)--(-1,-0.5);
  \end{tikzpicture}
  =
  \begin{tikzpicture}[baseline=(current bounding box.center)]
    \node[draw] (s0) at (0,0) {$A$};
    \draw[thick] (-0.5,0)--(s0) (s0)--(0.5,0);
    \draw[thick] (-0.5,0) arc(90:270:0.25);
    \draw[thick] (0.5,0) arc(-90:90:0.25);
    \draw[thick] (-0.5,-0.5)--(1,-0.5) (0.5,0.5)--(-1,0.5);
  \end{tikzpicture}
\end{equation}

The matrix multiplication of two matrices $A$ and $B$ can be expressed as connecting the corresponding wires of the tensors representing the matrices:
\begin{equation}
  \begin{tikzpicture}[baseline=(current bounding box.center)]
    \node[draw] (s1) at (0,0) {$AB$};
    \draw [thick] (-0.75,0)--(s1) (s1)--(0.75,0);
  \end{tikzpicture}
  =
  \begin{tikzpicture}[baseline=(current bounding box.center)]
    \node[draw] (s0) at (0,0) {$A$};
    \node[draw] (s1) at (0.75,0) {$B$};
    \draw [thick] (-0.5,0)--(s0) (s0)--(s1) (s1)--(1.25,0);
  \end{tikzpicture}
\end{equation}

The tensor product of two tensors $A$ and $B$ can be expressed as vertical juxtaposition:
\begin{equation}
  \begin{tikzpicture}[baseline=(current bounding box.center)]
    \node[draw,minimum height=1.25cm] (s1) at (0,0) {$A\otimes B$};
    \draw [thick] (-0.75,0.375)-|(s1.west) (-0.75,-0.375)-|(s1.west) (s1.east)|-(0.75,0.375) (s1.east)|-(0.75,-0.375);
  \end{tikzpicture}
  =
  \begin{tikzpicture}[baseline=(current bounding box.center)]
    \node[draw] (s0) at (0,0) {$A$};
    \node[draw] (s1) at (0,-0.75) {$B$};
    \draw [thick] (-0.5,0)--(s0)--(0.5,0)  (-0.5,-0.75)--(s1)--(0.5,-0.75);
  \end{tikzpicture}
\end{equation}

The trace of an operator $A$ is depicted by connecting the corresponding left and right wires of a linear operator:
\begin{equation}
  \Tr{A}
  =
  \begin{tikzpicture}[baseline=(current bounding box.center)]
    \node[draw] (s0) at (0,0) {$A$};
    \draw [thick] (-0.5,0)--(s0) (s0)--(0.5,0);
    \draw[thick] (-0.5,0.5) arc(90:270:0.25);
    \draw[thick] (0.5,0) arc(-90:90:0.25);
    \draw [thick] (-0.5,0.5)--(0.5,0.5);
  \end{tikzpicture}
\end{equation}

To verify the validity of Eq.~\eqref{ap:eq:pauli_path_integral}, we first express $f(\bm{\theta},s,O,\rho)$ as tensor network diagrams:
\begin{equation}
  \begin{aligned}
  f(\bm{\theta},s,O,\rho)=&\Tr{Os_L}\left(\prod_{i=1}^{L}\Tr{s_i\mathcal{U}_i s_{i-1}\mathcal{U}_i^\dagger}\right)\Tr{s_0\rho}\\
  =&\;
      \begin{tikzpicture}[baseline=(current bounding box.center)]
        \coordinate(l)at(-0.25,0.5){};\coordinate(r)at(1,0.5){};
        \node[rectangle,draw] (H) at (0,0) {$O$};
        \node[draw,shape=circle,inner sep=1pt] (sL) at (0.75,0) {$s_L$};
        \draw [thick] (H) -- (sL) (l) -- (r);
        \draw[thick] (-0.25,0.5) arc(90:270:0.25);
        \draw[thick] (1,0) arc(-90:90:0.25);
        \end{tikzpicture}
    \; \prod_{i=1}^{L}
      \begin{tikzpicture}[baseline=(current bounding box.center)]
        \coordinate(l)at(-0.25,0.5){};\coordinate(r)at(2.55,0.5){};
        \node[draw,shape=circle,inner sep=1pt] (sj) at (0.,0) {$s_i$};
        \node[rectangle,draw] (Uj) at (0.75,0) {$\mathcal{U}_i$};
        \node[draw,shape=circle,inner sep=-1pt] (sj1) at (1.5,0) {\scriptsize $s_{i-1}$};
        \node[rectangle,draw] (Ujt) at (2.25,0) {$\mathcal{U}_i^\dagger$};
        \draw [thick] (sj) -- (Uj) --(sj1)--(Ujt) (l) -- (r);
        \draw[thick] (-0.25,0.5) arc(90:270:0.25);
        \draw[thick] (2.55,0) arc(-90:90:0.25);
      \end{tikzpicture}
    \;
      \begin{tikzpicture}[baseline=(current bounding box.center)]
        \coordinate(l)at(-0.25,0.5){};\coordinate(r)at(1,0.5){};
        \node[rectangle,draw] (rho) at (0.75,0) {$\rho$};
        \node[draw,shape=circle,inner sep=1pt] (s0) at (0.,0) {$s_0$};
        \draw [thick] (rho) -- (s0) (l) -- (r);
        \draw[thick] (-0.25,0.5) arc(90:270:0.25);
        \draw[thick] (1,0) arc(-90:90:0.25);
      \end{tikzpicture}\\
      =&\;
      \begin{tikzpicture}[baseline=(current bounding box.center)]
        \coordinate(l)at(-0.25,0.5){};\coordinate(r)at(1,0.5){};
        \node[rectangle,draw] (H) at (0,0) {$O$};
        \node[draw,shape=circle,inner sep=1pt] (sL) at (0.75,0) {$s_L$};
        \draw [thick] (H) -- (sL) (l) -- (r);
        \draw[thick] (-0.25,0.5) arc(90:270:0.25);
        \draw[thick] (1,0) arc(-90:90:0.25);
        \end{tikzpicture}
    \; \prod_{i=1}^{L}
      \begin{tikzpicture}[baseline=(current bounding box.center)]
        \coordinate(l)at(-0.25,0.5){};\coordinate(r)at(1.8,0.5){};
        \node[draw,shape=circle,inner sep=1pt] (sj) at (0.,0) {$s_i$};
        \node[rectangle,draw] (Uj) at (0.75,0) {$\mathcal{U}_i$};
        \node[draw,shape=circle,inner sep=-1pt] (sj1) at (1.5,0) {\scriptsize $s_{i-1}$};
        \node[rectangle,draw] (Ujt) at (0.75,0.75) {$\overline{\mathcal{U}}_i$};
        \draw [thick] (sj) -- (Uj) --(sj1) (l) --(Ujt)-- (r);
        \draw[thick] (-0.25,0.5) arc(90:270:0.25);
        \draw[thick] (1.8,0) arc(-90:90:0.25);
      \end{tikzpicture}
    \;
      \begin{tikzpicture}[baseline=(current bounding box.center)]
        \coordinate(l)at(-0.25,0.5){};\coordinate(r)at(1,0.5){};
        \node[rectangle,draw] (rho) at (0.75,0) {$\rho$};
        \node[draw,shape=circle,inner sep=1pt] (s0) at (0.,0) {$s_0$};
        \draw [thick] (rho) -- (s0) (l) -- (r);
        \draw[thick] (-0.25,0.5) arc(90:270:0.25);
        \draw[thick] (1,0) arc(-90:90:0.25);
      \end{tikzpicture} \quad .
  \end{aligned}
  \end{equation}

Using this form, the right side of Eq.~\eqref{ap:eq:pauli_path_integral} can be expressed as
  \begin{equation}
    \begin{aligned}
      \sum_{s_0,\cdots,s_L \in \bm{P}_n} f(\bm{\theta},s,O,\rho)=&
      \sum_{s_0,\cdots,s_L \in \bm{P}_n}
      \begin{tikzpicture}[baseline=(current bounding box.center)]
        \coordinate(l)at(-0.25,0.5){};\coordinate(r)at(1,0.5){};
        \node[rectangle,draw] (H) at (0,0) {$O$};
        \node[draw,shape=circle,inner sep=1pt] (sL) at (0.75,0) {$s_L$};
        \draw [thick] (H) -- (sL) (l) -- (r);
        \draw[thick] (-0.25,0.5) arc(90:270:0.25);
        \draw[thick] (1,0) arc(-90:90:0.25);
        \end{tikzpicture}\;
      \begin{tikzpicture}[baseline=(current bounding box.center)]
        \coordinate(l)at(-0.25,0.5){};\coordinate(r)at(1.8,0.5){};
        \node[draw,shape=circle,inner sep=1pt] (sj) at (0.,0) {$s_L$};
        \node[rectangle,draw] (Uj) at (0.75,0) {$\mathcal{U}_L$};
        \node[draw,shape=circle,inner sep=-1pt] (sj1) at (1.5,0) {\scriptsize $s_{L-1}$};
        \node[rectangle,draw] (Ujt) at (0.75,0.75) {$\overline{\mathcal{U}}_L$};
        \draw [thick] (sj) -- (Uj) --(sj1) (l) --(Ujt)-- (r);
        \draw[thick] (-0.25,0.5) arc(90:270:0.25);
        \draw[thick] (1.8,0) arc(-90:90:0.25);
      \end{tikzpicture}
    \cdots
      \begin{tikzpicture}[baseline=(current bounding box.center)]
        \coordinate(l)at(-0.25,0.5){};\coordinate(r)at(1.8,0.5){};
        \node[draw,shape=circle,inner sep=1pt] (sj) at (0.,0) {$s_1$};
        \node[rectangle,draw] (Uj) at (0.75,0) {$\mathcal{U}_1$};
        \node[draw,shape=circle,inner sep=1.5pt] (sj1) at (1.5,0) {$s_{0}$};
        \node[rectangle,draw] (Ujt) at (0.75,0.75) {$\overline{\mathcal{U}}_1$};
        \draw [thick] (sj) -- (Uj) --(sj1) (l) --(Ujt)-- (r);
        \draw[thick] (-0.25,0.5) arc(90:270:0.25);
        \draw[thick] (1.8,0) arc(-90:90:0.25);
      \end{tikzpicture}\;
      \begin{tikzpicture}[baseline=(current bounding box.center)]
        \coordinate(l)at(-0.25,0.5){};\coordinate(r)at(1,0.5){};
        \node[rectangle,draw] (rho) at (0.75,0) {$\rho$};
        \node[draw,shape=circle,inner sep=1pt] (s0) at (0.,0) {$s_0$};
        \draw [thick] (rho) -- (s0) (l) -- (r);
        \draw[thick] (-0.25,0.5) arc(90:270:0.25);
        \draw[thick] (1,0) arc(-90:90:0.25);
      \end{tikzpicture}\\
      =&\;
      \begin{tikzpicture}[baseline=(current bounding box.center)]
        \coordinate(l)at(-0.25,0.75){};\coordinate(r)at(3.9,0.75){};
        \node[rectangle,draw] (H) at (0,0) {$O$};
        \node[rectangle,draw] (Ul) at (0.75,0) {$\mathcal{U}_L$};
        \node[rectangle,draw] (Ujt) at (0.75,0.75) {$\overline{\mathcal{U}}_L$};
        \node[rectangle,draw] (U1) at (3,0) {$\mathcal{U}_1$};
        \node[rectangle,draw] (U1t) at (3,0.75) {$\overline{\mathcal{U}}_1$};
        \node[] (cdots_down) at (2,0) {$\cdots$};
        \node[] (cdots_up) at (2,0.75) {$\cdots$};
        \node[rectangle,draw] (rho) at (3.65,0) {$\rho$};
        \draw [thick] (H) -- (Ul)-- (cdots_down)--(U1)--(rho) (l) -- (Ujt) -- (cdots_up)--(U1t)--(r);
        \draw[thick] (-0.25,0.75) arc(90:270:0.75/2);
        \draw[thick] (3.9,0) arc(-90:90:0.75/2);
        \end{tikzpicture}\\
        =&\;
        \begin{tikzpicture}[baseline=(current bounding box.center)]
          \coordinate(l)at(-0.25,0.5){};\coordinate(r)at(2.55,0.5){};
          \node[rectangle,draw] (H) at (0,0) {$O$};
          \node[rectangle,draw] (U) at (0.75,0) {$\mathcal{U}$};
          \node[rectangle,draw] (rho) at (1.5,0) {$\rho$};
          \node[rectangle,draw] (Ud) at (2.25,0) {$\mathcal{U}^\dagger$};
          \draw [thick] (H)--(U)--(rho)--(Ud) (l) -- (r);
          \draw[thick] (-0.25,0.5) arc(90:270:0.25);
          \draw[thick] (2.55,0) arc(-90:90:0.25);
        \end{tikzpicture}\\
        =&\Tr{O\mathcal{U}(\bm{\theta})\rho \mathcal{U}^\dagger(\bm{\theta})}\\
        =&\mathcal{L}(\bm{\theta}),
    \end{aligned}
  \end{equation}
where second equality is obtained by the property of normalized orthonormal basis $\{\bm{P}_n\}$, allowing an operator
to be expressed as $T=\sum_{s_i \in \bm{P}_n} Tr(T s_i) s_i$:
\begin{equation}
  \sum_{s\in\bm{P}_n}
  \begin{tikzpicture}[baseline=(current bounding box.center)]
    \node[draw,shape=circle,inner sep=1pt] (s1) at (1.75,0) {$s$};
    \node[draw,shape=circle,inner sep=1pt] (s0) at (0.5,0) {$s$};
    \node[rectangle,draw,inner sep=2pt] (O) at (0,0) {$T$};
    \draw[thick] (-0.25,0.5) arc(90:270:0.25);
    \draw [thick] (O)--(s0) (-0.25,0.5)--(0.75,0.5);
    \draw [thick] (2.25,0)--(s1) (1.5,0.5)--(2.25,0.5);
    \draw[thick] (1.5,0.5) arc(90:270:0.25);
    \draw[thick] (0.75,0) arc(-90:90:0.25);
  \end{tikzpicture}
  =
  \begin{tikzpicture}[baseline=(current bounding box.center)]
    \draw [thick] (O)--(1,0) (-0.25,0.5)--(1,0.5);
    \node[rectangle,draw,inner sep=2pt] (O) at (0,0) {$T$};
    \draw[thick] (-0.25,0.5) arc(90:270:0.25);
    \node[] at (1.2,0){$.$};
  \end{tikzpicture}
\end{equation}

In the presence of single-qubit Pauli noise, the noisy cost function $\widehat{\mathcal{L}}$ can be expressed as: 
\begin{equation}\label{ap:eq:tn:cost_function_noisy}
  \begin{aligned}
    \widehat{\mathcal{L}}(\bm{\theta})&=\Tr{ O \mathcal{N}^{\otimes n}(\mathcal{U}_L \mathcal{N}^{\otimes n}(\cdots\mathcal{U}_1\mathcal{N}^{\otimes n}(\rho) \mathcal{U}_1^\dagger\cdots) \mathcal{U}_L^\dagger)}\\
    &=\;\begin{tikzpicture}[baseline=(current bounding box.center)]
      \coordinate(l)at(-0.25,0.75){};\coordinate(r)at(8.25,0.75){};
      \node[rectangle,draw] (H) at (0,0) {$O$};
      \node[rectangle,draw,minimum height=40] (Nl) at (1,0.4) {$\mathcal{N}^{\otimes n}$};
      \node[rectangle,draw] (Ul) at (2,0) {$\mathcal{U}_L$};
      \node[rectangle,draw] (Ujt) at (2,0.75) {$\overline{\mathcal{U}}_L$};
      \node[rectangle,draw,minimum height=40] (Nl1) at (3,0.4) {$\mathcal{N}^{\otimes n}$};
      \node[rectangle,draw,minimum height=40] (N1) at (5,0.4) {$\mathcal{N}^{\otimes n}$};
      \node[rectangle,draw,minimum height=40] (N0) at (7,0.4) {$\mathcal{N}^{\otimes n}$};
      \node[rectangle,draw] (U1) at (6,0) {$\mathcal{U}_1$};
      \node[rectangle,draw] (U1t) at (6,0.75) {$\overline{\mathcal{U}}_1$};
      \node[] (cdots_down) at (4,0) {$\cdots$};
      \node[] (cdots_up) at (4,0.75) {$\cdots$};
      \node[rectangle,draw] (rho) at (8,0) {$\rho$};
      \draw [thick] (H)--(H-| Nl.west) (Ul-|Nl.east)--(Ul)--(Ul-| Nl1.west) (cdots_down-|Nl1.east)--(cdots_down)--(cdots_down-| N1.west) (U1-|N1.east)--(U1)--(U1-| N0.west) (rho-|N0.east)--(rho) (l)--(l-| Nl.west) (Ujt-| Nl.east)--(Ujt)--(Ujt-| Nl1.west) (cdots_up-|Nl1.east)-- (cdots_up)--(cdots_up-| N1.west) (U1t-|N1.east)--(U1t)--(U1t-| N0.west) (r-|N0.east)--(r);
      \draw[thick] (-0.25,0.75) arc(90:270:0.75/2);
      \draw[thick] (8.25,0) arc(-90:90:0.75/2);
      \end{tikzpicture}\\
      &=\sum_{s_0,\cdots,s_L \in \bm{P}_n}
      \begin{tikzpicture}[baseline=(current bounding box.center)]
        \coordinate(l)at(-0.25,0.5){};\coordinate(r)at(2.25,0.5){};
        \node[rectangle,draw] (H) at (0,0) {$O$};
        \node[rectangle,draw,minimum height=40] (N) at (1,0.4) {$\mathcal{N}^{\otimes n}$};
        \node[draw,shape=circle,inner sep=1pt] (sL) at (2,0) {$s_L$};
        \draw [thick] (H)--(H-| N.west) (sL-|N.east)--(sL) (l)--(l-| N.west) (r-|N.east)--(r);
        \draw[thick] (-0.25,0.5) arc(90:270:0.25);
        \draw[thick] (2.25,0) arc(-90:90:0.25);
        \end{tikzpicture}\;
      \begin{tikzpicture}[baseline=(current bounding box.center)]
        \coordinate(l)at(-0.25,0.5){};\coordinate(r)at(3.05,0.5){};
        \node[draw,shape=circle,inner sep=1pt] (sj) at (0.,0) {$s_L$};
        \node[rectangle,draw] (Uj) at (0.75,0) {$\mathcal{U}_L$};
        \node[draw,shape=circle,inner sep=-1pt] (sj1) at (2.75,0) {\scriptsize $s_{L-1}$};
        \node[rectangle,draw] (Ujt) at (0.75,0.75) {$\overline{\mathcal{U}}_L$};
        \node[rectangle,draw,minimum height=40] (N) at (1.75,0.4) {$\mathcal{N}^{\otimes n}$};
        \draw [thick] (sj)--(Uj)--(Uj-| N.west) (sj1-|N.east)--(sj1) (l) --(l-|Ujt.west) (Ujt)--(Ujt-| N.west) (r-|N.east)-- (r);
        \draw[thick] (-0.25,0.5) arc(90:270:0.25);
        \draw[thick] (3.05,0) arc(-90:90:0.25);
      \end{tikzpicture}
    \cdots
      \begin{tikzpicture}[baseline=(current bounding box.center)]
        \coordinate(l)at(-0.25,0.5){};\coordinate(r)at(3,0.5){};
        \node[draw,shape=circle,inner sep=1pt] (sj) at (0.,0) {$s_1$};
        \node[rectangle,draw] (Uj) at (0.75,0) {$\mathcal{U}_1$};
        \node[draw,shape=circle,inner sep=1.5pt] (sj1) at (2.75,0) {$s_{0}$};
        \node[rectangle,draw] (Ujt) at (0.75,0.75) {$\overline{\mathcal{U}}_1$};
        \node[rectangle,draw,minimum height=40] (N) at (1.75,0.4) {$\mathcal{N}^{\otimes n}$};
        \draw [thick] (sj)--(Uj)--(Uj-| N.west) (sj1-|N.east)--(sj1) (l) --(l-|Ujt.west) (Ujt)--(Ujt-| N.west) (r-|N.east)-- (r);
        \draw[thick] (-0.25,0.5) arc(90:270:0.25);
        \draw[thick] (3,0) arc(-90:90:0.25);
      \end{tikzpicture}\;
      \begin{tikzpicture}[baseline=(current bounding box.center)]
        \coordinate(l)at(-0.25,0.5){};\coordinate(r)at(1.25,0.5){};
        \node[rectangle,draw] (rho) at (1,0) {$\rho$};
        \node[rectangle,draw,minimum height=40,opacity=0] (N) at (1,0.4) {$\mathcal{N}^{\otimes n}$};
        \node[draw,shape=circle,inner sep=1pt] (s0) at (0.,0) {$s_0$};
        \draw [thick] (rho)--(s0) (l)--(r);
        \draw[thick] (-0.25,0.5) arc(90:270:0.25);
        \draw[thick] (1.25,0) arc(-90:90:0.25);
      \end{tikzpicture}\\
      &=\Tr{O\mathcal{N}^{\otimes n}(s_L)}\Tr{s_L\mathcal{U}_L \mathcal{N}^{\otimes n}(s_{L-1})\mathcal{U}_L^\dagger}\cdots\Tr{s_1\mathcal{U}_1 \mathcal{N}^{\otimes n}(s_{0})\mathcal{U}_1^\dagger}\Tr{s_0\rho}.
  \end{aligned}
\end{equation}

Thus, we can express the noisy cost function $\widehat{\mathcal{L}}$ as the sum of the contributions of all Pauli paths, given by
\begin{equation}\label{ap:eq:Npauli_path_integral}
\mathcal{\hat{L}}(\bm{\theta})=\sum_{s\in \bm{P}^{L+1}_n} \hat{f}(\bm{\theta},s,O,\rho),
\end{equation}
where
\begin{equation}
  \hat{f}(\bm{\theta},s,O,\rho)=\Tr{O\mathcal{N}^{\otimes n}(s_L)}
  \left(\prod_{i=1}^{L}\Tr{s_i\mathcal{U}_i \mathcal{N}^{\otimes n}(s_{i-1})\mathcal{U}_i^\dagger}\right)\Tr{s_0\rho}.
\end{equation}

\section{Algorithm}\label{ap:algorithm}

The scenario is categorized based on the number of non-zero factors in $\{p_x,p_y,p_z\}$ of Pauli noise:
\begin{itemize}
  \item \textbf{Case 1}: When there are at least two non-zero elements in $\{p_x,p_y,p_z\}$, the truncation weight is $\abs{s}=\sum_{P\in \{X,Y,Z\}}\abs{s}_P$, commonly referred to as the Hamming weight. An example of this is the depolarizing channel.
  \item \textbf{Case 2}: When there is only one element in $\{p_x,p_y,p_z\}$ non-zero~(without loss of generality $p_x\neq 0$), truncation weight is $\abs{s}=\sum_{P\in \{Y,Z\}}\abs{s}_P$. An illustration of this case would be the dephasing channel.
\end{itemize}

Our algorithm calculates the contributions of the Pauli paths with $\abs{s}\leq M$ to provide an approximation of noisy cost function $\widehat{\mathcal{L}}$. 
The approximate noisy cost function can be expressed as:
\begin{equation}
\widetilde{\mathcal{L}}(\bm{\theta})=\sum_{\abs{s}\leq M} \hat{f}(\bm{\theta},s,O,\rho)=\sum_{\abs{s}\leq M} (1-2(p_y+p_z))^{\abs{s}_X}(1-2(p_x+p_z))^{\abs{s}_Y}(1-2(p_x+p_y))^{\abs{s}_Z}f(\bm{\theta},s,O,\rho).
\end{equation}

Generally, it is a significant challenge to evaluate all the Pauli paths with  weight less than $M$.
But owing to the remark of Proposition~\ref{prop:f_ele}, most Pauli paths have zero contribution in the path integrals.
We introduce the following method for enumerating all Pauli paths with non-zero contributions and $\abs{s}\leq M$.

The key idea of this method is based on the sparsity of $O$ and the observation in Proposition~\ref{prop:f_ele} that for any Pauli path $s$ with $f(\bm{\theta},s,O,\rho)\neq 0$, if one of $s_{i-1}|_\g{\sigma_{i,j}}$ or $s_{i}|_\g{\sigma_{i,j}}$ is fixed, then the other one has at most two cases, which holds for all $i,j$.
Likewise, if $s_{i-1}|_\g{V_{i,k}}$ (or $s_{i-1}|_{I_{i}}$) or $s_{i}|_\g{V_{i,k}}$ (or $s_{i}|_{I_{i}}$) is fixed, the other one has only one case, which holds for all $i,k$.

\subsection{Analysis of the algorithm for \textbf{Case 1}}
In \textbf{Case 1}, for any Pauli path $s=(s_0,\cdots,s_L)\in \bm{P}^{L+1}_n$, the truncation weight is defined as $\abs{s}=\sum_{P\in \{X,Y,Z\}}\abs{s}_P$, commonly referred to as the Hamming weight $\abs{s}_H=\abs{s_L}_H+\cdots+\abs{s_L}_H$, where $\abs{s_i}_H$ is the count of non-identify operate in Pauli word $s_i$.

Moreover, for Pauli path $s$ that has a non-zero contribution, $\abs{s_i}_H > 0$ for $i=0,\cdots,L$ is required, otherwise at some layer $s_i$ will be trivial, which will lead to $\Tr{s_{i+1} \mathcal{U}_i s_i \mathcal{U}_i^\dagger}=\Tr{s_{i+1} s_i}$. To avoid zero contribution, there must have $s_L=\cdots=s_{i+1}=s_i=\left(\frac{\mathbb{I}}{\sqrt{2}}\right)^{\otimes n}$.
Without loss of generality, we can set $\Tr{O}=0$ (or replace $O$ by $O-\Tr{O} I$), which leads to $\Tr{O s_L}=0$. Thus, for Pauli path $s$ with Hamming wight $\abs{s}_H\leq M$ and non-zero contribution, there must be $\abs{s_{L}}_H+\cdots+\abs{s_{L-i}}_H\leq M-(L-i)$.

The back-propagation process for searching Pauli paths is as follows:
\begin{enumerate}
\item We begin by selecting $s_L$. In order to ensure that $\Tr{Os_L}\neq0$, $s_L$ can only be selected from Pauli words in $O$. Owing to the assumption that observable $O$ is a linear combination of Pauli words in Polynomial size of $n$, there are at most $\mathrm{Poly}(n)$ cases for $s_L$.
The time and space complexity of enumerating $s_L$ are $\order{\mathrm{Poly}(n)}$.

\item For each case of $s_L$, the next step is to explore all potential $s_{L-1}$. 
There are at most $\abs{s_L}_H$ non-identity elements in $\{s_{L}|_\g{\sigma_{L,j}}\} \cup \{s_{L}|_\g{V_{L,k}}\} \cup \{s_{L}|_{I_{L}}\}$, corresponding to at most $\abs{s_L}_H$ non-identity elements in $\{s_{L-1}|_\g{\sigma_{L,j}}\} \cup \{s_{L-1}|_\g{V_{L,k}}\} \cup \{s_{L-1}|_{I_{L}}\}$.
Furthermore, each element has at most two potential candidates, resulting in at most $2^\abs{s_L}_H$ cases for $s_{L-1}$.
In addition, we need to eliminate the cases in which $\abs{s_{L}}_H+\abs{s_{L-1}}_H> M-(L-1)$. The time complexity of enumerating $s_{L-1}$ for a given $s_L$ is $\order{n2^{\abs{s_L}_H}}$ and the space complexity is $\order{n}$.

\item Similarly, repeat step $(2)$ to enumerate $s_{L-2}$ for each case of $s_{L-1}$ and eliminate candidates with $\abs{s_{L}}_H+\abs{s_{L-1}}_H+\abs{s_{L-2}}_H> M-(L-2)$.
We obtain up to $2^{\abs{s_{L-1}}_H}$ cases for $s_{L-2}$, with time complexity $\order{n2^{\abs{s_{L-1}}_H}}$ for a given $s_{L-1}$ and space complexity $\order{n}$.
Repeating this process, we can enumerate all $s_{L-2},\cdots,s_{0}$.
\end{enumerate}

From the above discussion, given any $s_L$, the number of different Pauli paths output is at most $2^{\abs{s_1}_H+\cdots+\abs{s_{L}}_H}\leq 2^{M}$. Alternatively, for a given $s_L$, we consider all Pauli paths $s$ with $s_L$ as the end element, $\abs{s}_H\leq M$ and $f(\bm{\theta},s,O,\rho)\neq 0$.
It can be considered as a tree starting from $s_L$ and the new branching will only occur when $s_{i}|_\g{\sigma_{i,j}}$ is not identity and $\sigma_{i,j} \in AC(i,s_{i})$. While the number of non-identity elements in $\{s_{i}|_\g{\sigma_{i,j}}\}$ is at most $M$, the number of possible cases is at most $2^{M}$.
Thus, to compute all contributions of the Pauli path with $\abs{s}_H\leq M$, we need to calculate at most $\mathrm{Poly}(n) 2^{M}$ different Pauli paths.

In step $(1)$ the time cost is $\order{\mathrm{Poly}(n)}$. In step $(2)$, considering all cases of $s_L$, the time cost is $\sum_{s_L}n2^{\abs{s_L}_H}$. In step $(3)$, considering all case of $s_{L-1}$, the time cost is $\sum_{s_L}\sum_{s_{L-1}(s_L)}n2^{\abs{s_{L-1}(s_L)}_H}$, where $s_{L-1}(s_L)$ denotes the output of step $(2)$ corresponding to a given $s_L$. 
Similar results hold for $s_{L-2},\cdots,s_{0}$.

Thus, the time complexity of the above process is
\begin{equation}
  \begin{aligned}
    \order{\mathrm{Poly}(n)+\sum_{s_L}n2^{\abs{s_L}_H}+\sum_{s_L}\sum_{s_{L-1}(s_L)}n2^{\abs{s_{L-1}(s_L)}_H} +\sum_{s_L}\sum_{s_{L-1}(s_L)}\sum_{s_{L-2}(s_L,s_{L-1})}n2^{\abs{s_{L-2}(s_L,s_{L-1})}_H}+  \cdots} \\
  \leq \order{\mathrm{Poly}(n)n L 2^{M}}=\mathrm{Poly}(n) \order{L} 2^{M}.
  \end{aligned}
\end{equation}
Here the inequality holds, because
\begin{equation}
  \begin{aligned}
    \sum_{s_L}\cdots\sum_{s_{i-1}(s_L,\cdots,s_i)}n2^{\abs{s_{i-1}(s_L,\cdots,s_i)}_H}
    &=\sum_{s_L}\cdots\sum_{s_{i}(s_L,\cdots,s_{i+1})}\sum_{s_{i-1}(s_L,\cdots,s_i)}n2^{\abs{s_{i-1}(s_L,\cdots,s_i)}_H}\\
    (\mathrm{By} \; \abs{s_{i-1}}_H+\cdots +\abs{s_L}_H\leq M)\qquad
    &\leq  \sum_{s_L}\cdots\sum_{s_{i}(s_L,\cdots,s_{i+1})}\sum_{s_{i-1}(s_L,\cdots,s_i)}n2^{M-(\abs{s_{i}(s_L,\cdots,s_{i+1})}_H+\cdots+\abs{s_L}_H)}\\
    (\mathrm{By} \; \#s_{i-1}\leq 2^{\abs{s_i}_H})\qquad\qquad\quad
    &\leq \sum_{s_L}\cdots\sum_{s_{i}(s_L,\cdots,s_{i+1})}n2^{M+\abs{s_{i}(s_L,\cdots,s_{i+1})}-\left(\abs{s_{i}(s_L,\cdots,s_{i+1})}_H+\cdots+\abs{s_L}_H\right)}\\
    &=\sum_{s_L}\cdots\sum_{s_{i}(s_L,\cdots,s_{i+1})}n2^{M-(\abs{s_{i+1}(s_L,\cdots,s_{i+2})}_H+\cdots+\abs{s_L}_H)}\\
    &\qquad\vdots\\
    &\leq \sum_{s_L}\sum_{s_{L-1}(s_L)} n2^{M-\abs{s_L}_H}\\
    &\leq \sum_{s_L} n2^M=\mathrm{Poly}(n)n2^M.
  \end{aligned}
\end{equation}

The space complexity of the above process is
\begin{equation}\label{ap:eq:space_cost}
\order{\mathrm{Poly}(n)}+\sum_{i=1}^{L}\left(\order{n}\right)\leq\order{\mathrm{Poly}(n)+nL}.
\end{equation}

After obtaining candidates of Pauli path $s=(s_0,\cdots,s_L)$, the next step is computing its contribution $\hat{f}(\bm{\theta},s,O,\rho)$. For each Pauli path $s$, it is possible to determine $f(\bm{\theta},s,O,\rho)$ with time complexity $\order{nL}+\mathrm{Poly}(n)$ using Eq.~\eqref{ap:eq:f} and Proposition~\ref{prop:f_ele}. 
Thus, the overall time cost for computing $\widetilde{\mathcal{L}}$ in \textbf{Case 1} is about 

\begin{equation}\label{ap:eq:T}
  \left( \order{nL} + \mathrm{Poly}(n) \right)  \mathrm{Poly}(n) 2^{M}+\mathrm{Poly}(n) \order{L} 2^{M} = \mathrm{Poly}(n) \order{L} 2^{M}.
\end{equation}

The process of our algorithm is summarized in Algorithm~\ref{ALGORITHM_HS}.

\begin{algorithm}[H]
\caption{OBPPP algorithm for estimating the truncated cost function in Case 1}\label{ALGORITHM_HS}
\begin{algorithmic}
  \State Set $\widetilde{\mathcal{L}}=0$.
  \State Enumerate $s_L$ as all Pauli words with non-zero coefficient in $O$
  \For{candidates of $s_{L}$}
  \State{According to the $L$-th layer of the circuit, generate candidates of $s_{L-1}$.}
  \State{Eliminate cases with $\abs{s_{L}}_H+\abs{s_{L-1}}_H > M-(L-1)$.}
    \For{candidates of $s_{L-1}$}
      \State{According to the $(L-1)$-th layer of the circuit, generate candidates of $s_{L-2}$.}
      \State{Eliminate cases with $\abs{s_{L}}_H+\abs{s_{L-1}}_H+\abs{s_{L-2}}_H> M-(L-2)$.}
      \State{\vdots}
      \For{candidates of $s_1$}
      \State{According to the $1$-th layer of the circuit, generate candidates of $s_{0}$.}
      \State{Eliminate cases with $\abs{s_{L}}_H+\cdots+\abs{s_{0}}_H> M$.}
      \For{candidates of $s_{0}$}
      \State{Set Pauli path $s=(s_0,\cdots,s_L)$}
          \State Update $\widetilde{\mathcal{L}}=\widetilde{\mathcal{L}}+(1-2(p_y+p_z))^{\abs{s}_X}(1-2(p_x+p_z))^{\abs{s}_Y}(1-2(p_x+p_y))^{\abs{s}_Z}f(\bm{\theta},s,O,\rho)$
      \EndFor
      \EndFor
  \EndFor
  \EndFor
  
  \State Output the approximate cost function $\widetilde{\mathcal{L}}$.
\end{algorithmic}
\end{algorithm}

\subsection{Analysis of the algorithm for \textbf{Case 2}}
\label{ap:algorithm:case2}
Without loss of generality, we assume $p_x$ is the only non-zero factor in $\{p_x,p_y,p_z\}$. The strategy in this case is to truncate the number of other operators less than $M$, formalize as $\abs{s}_Y+\abs{s}_Z=\abs{s_L}_Y+\abs{s_L}_Z+\cdots+\abs{s_0}_Y+\abs{s_0}_Z\leq M$

The algorithm for \textbf{Case 2} is similar to the algorithm for \textbf{Case 1}. The difference is that the cases in $(i+1)$-th step eliminated are $\abs{s_{L}}_Y+\abs{s_{L}}_Z+\abs{s_{L-1}}_Y+\abs{s_{L-1}}_Z \cdots \abs{s_{L-i}}_Y+\abs{s_{L-i}}_Z > M$.

For each case of $s_L$, there are at most $\abs{s_L}$ non-identity elements in $\{s_{L}|_\g{\sigma_{L,j}}\} \cup \{s_{L}|_\g{V_{L,k}}\} \cup \{s_{L}|_{I_{L}}\}$, corresponding to at most $\abs{s_L}$ non-identity elements in $\{s_{L-1}|_\g{\sigma_{L,j}}\} \cup \{s_{L-1}|_\g{V_{L,k}}\} \cup \{s_{L-1}|_{I_{L}}\}$.
Furthermore, each element has at most two potential candidates, resulting in at most $2^\abs{s_L}$ cases for $s_{L-1}$.
This back-propagation process can be represented as a tree starting from $s_L$ and the new branching will only occur when $s_{i}|_\g{\sigma_{i,j}}$ is not identity and $\sigma_{i,j} \in AC(i,s_{i})$, shown in Fig.~\ref{fig:bp_tree}. And the bifurcation points are explained in Fig.~\ref{fig:bp_tree_ele}, where $s_i|_{g(\sigma_{i,j})}$ denotes the candidates of $s_i$ back-propagated to $j$-th gate in $i$-th layer, $s_{i-1}|_{g(\sigma_{i,j})}$ and $s'_{i-1}|_{g(\sigma_{i,j})}$ denotes two candidates of $s_{i-1}$ restricted to $g(\sigma_{i,j})$ equal to $s_i|_{g(\sigma_{i,j})}$ and $i\sigma_{i,j}s_i|_{g(\sigma_{i,j})}$, respectively.

Notice that the tripartite of bifurcation points cannot all consist of tensors of $X$ and $I$, otherwise $s_{i}|_\g{\sigma_{i,j}}$ commutes with $\sigma_{i,j}$. This means that $\abs{s_{i-1}}_Y+\abs{s_{i-1}}_Z$ and $\abs{s'_{i-1}}_Y+\abs{s'_{i-1}}_Z$ can not equal $0$ at the same time. 

\begin{figure}[htbp]
  \begin{tikzpicture}
    [level distance=4mm,level/.style={sibling distance=480/2^#1}],
    \node(a) {}
      child{
        child{
          child{child{child{child{child} child{child}} child{child{child child}}} child{child{child child{child child}} child{child{child child} child{child child}}}} 
          child{child{child{child{child} child} child{child{child child} child{child child}}} child{child{child{child child} child{child child}} child{child{child} child{child}}}}
          } 
        child{
          child{child{child{child{child} child{child}} child{child{child} child{child}}} child{child{child{child child} child{child child}} child{child{child} child{child}}}} 
          child{child{child{child{child child} child{child child}} child{child{child child} child{child child}}} child{child{child{child} child{child}} child{child{child} child{child}}}}
          } 
        };
    \draw[dashed,red] ($(a)+(-5,-2.5mm)$)--($(a)+(5,-2.5mm)$);
    \node at ($(a)+(5.5,-2.5mm)$) {$s_L$};
    \draw[dashed,red] ($(a)+(-5,-18mm)$)--($(a)+(5,-18mm)$);
    \node at ($(a)+(5.5,-18mm)$) {$s_{L-1}$};
    \draw[dashed,red] ($(a)+(-5,-26mm)$)--($(a)+(5,-26mm)$);
    \node at ($(a)+(5.5,-26mm)$) {$s_{L-2}$};
    \node at ($(a)+(0,-30mm)$) {$\vdots$};
    \node at ($(a)+(-3,-30mm)$) {$\vdots$};
    \node at ($(a)+(3,-30mm)$) {$\vdots$};
  \end{tikzpicture}
  \caption{The tree representation of back-propagation process.}\label{fig:bp_tree}
  \end{figure}

\begin{figure}[htbp]
  \begin{tikzpicture}[level distance=10mm,sibling distance=20mm,
    level 3/.style={}],
    \node(a){$s_i|_{g(\sigma_{i,j})}$}
       child {node[draw] {$U_{i,j}(\theta_{i,j})$}
         child {node {$s_{i-1}|_{g(\sigma_{i,j})}$} child {node {$s_i|_{g(\sigma_{i,j})}$}  edge from parent[double]}}
         child {node {$s'_{i-1}|_{g(\sigma_{i,j})}$} child {node {$i \sigma_{i,j}s_i|_{g(\sigma_{i,j})}$} edge from parent[double]}}
       };
  \end{tikzpicture}
  \caption{The explanation of bifurcation points in Fig.\ref{fig:bp_tree}.}\label{fig:bp_tree_ele}
  \end{figure}

Next, we use induction to place an upper bound on the number of Pauli paths that satisfy $\abs{s}_Y+\abs{s}_Z\leq M$, with non-zero contribution and start at a candidate $s_L$:

\begin{enumerate}

\item  If we restrict $\abs{s}_Y+\abs{s}_Z\leq 0$, then there is no bifurcation points in the back-propagation process. This assumption leads to a single Pauli path that starts at a candidate $s_L$. The time complexity of this case is upper bound by $\order{nL}$.

\item  If we restrict $\abs{s}_Y+\abs{s}_Z\leq 1$ and there is a bifurcation point in the back-propagation process located at $j$-th gate in $i$-th layer, then element $s'_{i-1}$ shown in Fig.~\ref{fig:bp_tree_ele} satisfies $\abs{s'_{i-1}}_Y+\abs{s'_{i-1}}_Z\geq 1$. Thus, $s'$ is reduced to the (1) case which satisfies $\abs{s'_{i-2}}_Y+\abs{s'_{i-2}}_Z+\cdots \abs{s'_{0}}_Y+\abs{s'_{0}}_Z \leq 0$, and there is no bifurcation points in the process that follows, illustrated in Fig.~\ref{fig:bp_tree_ele:2}. Besides this, there are at most $nL$ positions for bifurcation points. This assumption leads to at most $nL$ Pauli paths that start at a $s_L$. The time complexity of this case is upper bound by $\order{(nL)^2}$.

\begin{figure}[htbp]
  \begin{tikzpicture}
    [level distance=4mm,level/.style={sibling distance=120/#1}],
    \node(a) {}
      child{
        child{ child{child{child child} child{child}} child{child{child}}
          } 
          child{child{child{child}}}
        };
    \node at ($(a)+(0,-20mm)$) {$\vdots$};
    \node at ($(a)+(-2.5,-20mm)$) {$\vdots$};
    \node at ($(a)+(1.5,-20mm)$) {$\vdots$};
  \end{tikzpicture}
  \caption{The tree representation of back-propagation process while $\abs{s}_Y+\abs{s}_Z\leq 1$.}\label{fig:bp_tree_ele:2}
  \end{figure}

\item  If we restrict $\abs{s}_Y+\abs{s}_Z\leq 2$ and there is a bifurcation point in the back-propagation process located at $j$-th gate in $i$-th layer, then element $s'_{i-1}$ shown in Fig.~\ref{fig:bp_tree_ele} satisfies $\abs{s'_{i-1}}_Y+\abs{s'_{i-1}}_Z\geq 1$. Thus, $s'$ is reduced to the (2) case which satisfies $\abs{s'_{i-2}}_Y+\abs{s'_{i-2}}_Z+\cdots \abs{s'_{0}}_Y+\abs{s'_{0}}_Z \leq 1$. By discussed in (2), there are at most $nL$ paths for $s'$ in the process that follows. Thus there are at most $(nL)^2$ paths for $s$ that start at a $s_L$. The time complexity of this case is upper bound by $\order{(nL)^3}$.

\item Based on the above generalization, it is assumed that if we restrict $\abs{s}_Y+\abs{s}_Z\leq M-1$ then there is at most $(nL)^{M-1}$ paths for $s$ that start at a $s_L$ and the time complexity of this case is upper bound by $\order{(nL)^M}$.

For the case of $\abs{s}_Y+\abs{s}_Z\leq M$, if there is a bifurcation point in the back-propagation process located at $j$-th gate in $i$-th layer, then element $s'_{i-1}$ shown in Fig.~\ref{fig:bp_tree_ele} satisfies $\abs{s'_{i-1}}_Y+\abs{s'_{i-1}}_Z\geq 1$. Similarly, $s'$ is reduced to the case of $\abs{s'_{i-2}}_Y+\abs{s'_{i-2}}_Z+\cdots \abs{s'_{0}}_Y+\abs{s'_{0}}_Z \leq M-1$. By assumption, there are at most $(nL)^{M-1}$ paths for $s'$ in the process that follows.
Thus, there are at most $nL*(nL)^{M-1}=(nL)^{M}$ paths for $s$ that start at a $s_L$.
The time complexity of this case is upper bound by $\order{(nL)^{M+1}}$.

\end{enumerate}

Thus, to compute all contributions of the Pauli path with $\abs{s}_Y+\abs{s}_Z\leq M$, we need to calculate at most $\mathrm{Poly}(n) (nL)^{M}$ different Pauli paths. The time complexity of the enumerate process is $\mathrm{Poly}(n)\order{(nL)^{M+1}}$ and the space complexity is $\order{\mathrm{Poly}(n)+nL}$.

Similarly to Eq.~\ref{ap:eq:T}, for each Pauli path $s$, it is possible to determine $f(\bm{\theta},s,O,\rho)$ with time complexity $\order{nL}+\mathrm{Poly}(n)$ using Eq.~\eqref{ap:eq:f} and Proposition~\ref{prop:f_ele}. 
Thus, the overall time cost for computing $\widetilde{\mathcal{L}}$ in \textbf{Case 2} is about 

\begin{equation}\label{ap:eq:T2}
  \left( \order{nL} + \mathrm{Poly}(n) \right)  \mathrm{Poly}(n) (nL)^{M}+\mathrm{Poly}(n)\order{(nL)^{M+1}} = \mathrm{Poly}(n)\order{(nL)^{M+1}}.
\end{equation}

\section{Proof of Lemma~\ref{lemma:f_noisy}}
\label{sec:ap:lemma:f_noisy}
In Lemma~\ref{lemma:f_noisy}, we expressed the contribution of a Pauli path $s=(s_0,\cdots,s_L)\in \bm{P}^{L+1}_n$ in cost function $L(\bm{\theta})$ as:
\begin{equation}
  f(\bm{\theta},s,O,\rho)=\Tr{Os_L}\left(\prod_{i=1}^{L}\Tr{s_i\mathcal{U}_i s_{i-1}\mathcal{U}_i^\dagger}\right)\Tr{s_0\rho}.
\end{equation}
By Eq.~\eqref{ap:eq:Npauli_path_integral}, the contribution of a Pauli path $s=(s_0,\cdots,s_L)\in \bm{P}^{L+1}_n$ in noisy cost function $\widehat{\mathcal{L}}(\bm{\theta})$ can be expressed as:
\begin{equation}
  \hat{f}(\bm{\theta},s,O,\rho)=\Tr{O\mathcal{N}^{\otimes n}(s_L)}\left(\prod_{i=1}^{L}\Tr{s_i\mathcal{U}_i \mathcal{N}^{\otimes n}(s_{i-1})\mathcal{U}_i^\dagger}\right)\Tr{s_0\rho}.
\end{equation}

We generalize the assumption that $\mathcal{N}$ represents a single qubit Pauli error channel \begin{equation}\label{eq:define_N_Pauli}
  \mathcal{N}(\phi)=(1-p_x-p_y-p_z)\phi+ p_x X\phi X+ p_y Y\phi Y+ p_z Z\phi Z ,
\end{equation}
where $\phi$ is a single-qubit density matrix and $p_x,p_y,p_z$ represent the probabilities of $X,Y,Z$ error occuring, respectively. 

For $i=1,\cdots,L$, we have
\begin{equation}
  \mathcal{N}^{\otimes n}(s_i)=\mathcal{N}(s_i|_1)\otimes \cdots \otimes \mathcal{N}(s_i|_n),
\end{equation}
where notation $|_j$ represents that limit the operator on $j$-th qubit.
Simple calculations show that $\mathcal{N}(\frac{\mathbb{I}}{\sqrt{2}})=\frac{\mathbb{I}}{\sqrt{2}}$, $\mathcal{N}(\frac{X}{\sqrt{2}})=(1-2(p_y+p_z))\frac{X}{\sqrt{2}}$, $\mathcal{N}(\frac{Y}{\sqrt{2}})=(1-2(p_x+p_z))\frac{Y}{\sqrt{2}}$ and $\mathcal{N}(\frac{Z}{\sqrt{2}})=(1-2(p_x+p_y))\frac{Z}{\sqrt{2}}$.
Thus, for any $s_i\in \bm{P}_n=\{\frac{\mathbb{I}}{\sqrt{2}},\frac{X}{\sqrt{2}},\frac{Y}{\sqrt{2}},\frac{Z}{\sqrt{2}}\}^{\otimes n}$, we have
\begin{equation}
  \mathcal{N}^{\otimes n}(s_i)=(1-2(p_y+p_z))^{\abs{s_i}_X}(1-2(p_x+p_z))^{\abs{s_i}_Y}(1-2(p_x+p_y))^{\abs{s_i}_Z}s_i,
\end{equation}
where $\abs{s_i}_X,\abs{s_i}_Y,\abs{s_i}_Z$ denote the number of operates $\frac{X}{\sqrt{2}},\frac{Y}{\sqrt{2}},\frac{Z}{\sqrt{2}}$ in $s_i$, respectively. So we get 
\begin{equation}\label{ap:eq:noisy_f_term_decrease:pauli}
  \hat{f}(\bm{\theta},s,O,\rho)=(1-2(p_y+p_z))^{\abs{s}_X}(1-2(p_x+p_z))^{\abs{s}_Y}(1-2(p_x+p_y))^{\abs{s}_Z}f(\bm{\theta},s,O,\rho),
\end{equation}
where $\abs{s}_P=\abs{s_0}_P+\cdots+\abs{s_L}_P$ for $P \in \{\frac{X}{\sqrt{2}},\frac{Y}{\sqrt{2}},\frac{Z}{\sqrt{2}}\}$.

These complete the proof of Lemma~\ref{lemma:f_noisy}.

In particular, the depolarizing channel $\mathcal{N}(\phi)=(1-\lambda) \phi + \frac{\lambda}{2} \text{Tr}(\phi)\mathbb{I}$  is obtained when $p_x = p_y = p_z = \frac{\lambda}{4}$. Eq.~\eqref{ap:eq:noisy_f_term_decrease:pauli} is transformed to: 
\begin{equation}
  \hat{f}(\bm{\theta},s,O,\rho)=(1-\lambda)^{\abs{s}_X+\abs{s}_Y+\abs{s}_Z}f(\bm{\theta},s,O,\rho)=(1-\lambda)^\abs{s}f(\bm{\theta},s,O,\rho).
\end{equation}

\section{Proof of Proposition~\ref{prop:f_ele}}\label{ap:prop:f_ele}

In Proposition~\ref{prop:f_ele}, we claimed that the elements corresponding to each layer in $f$ can be calculated as the following rules with time and space cost $\order{n}$:

\begin{equation}\label{ap:eq:gate_ele}
  \begin{aligned}
  \Tr{s_i\mathcal{U}_i s_{i-1}\mathcal{U}_i^\dagger}&=\Tr{\left(s_i s_{i-1}\right)\big|_{I_i}}\prod_{k=1}^{C_i}\Tr{\left(s_iV_{i,k} s_{i-1}V_{i,k}^{\dagger}\right)\big|_\g{V_{i,k}}}\prod_{\sigma_{i,j}\in C(i,s_{i-1})}\Tr{\left(s_i s_{i-1}\right)\big|_\g{\sigma_{i,j}}}\\
  &\prod_{\sigma_{i,j'}\in AC(i,s_{i-1})} \Bigg\{ \Tr{\left(s_i s_{i-1}\right)\big|_\g{\sigma_{i,j'}}}\cos{\theta_{i,j'}}- \Tr{\left(is_i \sigma_{i,j'}s_{i-1}\right)\big|_\g{\sigma_{i,j'}}}\sin{\theta_{i,j'}}  \Bigg\}.
\end{aligned}
\end{equation}
Here the set $C(i,s_{i-1})$ and $AC(i,s_{i-1})$ denote the sets of Pauli words in $\{\sigma_{i,j}|j=1,\cdots,R_i\}$ commute and anti-commute with $s_{i-1}$, respectively. 
The symbol $\big|_\g{\sigma_{i,j}}$ denotes that limit the operator on the qubits of $\sigma_{i,j}$ non-trivially
applied, $\big|_\g{V_{i,k}}$ denotes that limit the operator on the qubits with Clifford gates $V_{i,k}$ non-trivially applied and $\big|_{I_i}$ denotes the limitation on the qubits without gates applied in $i$-th layer.

\begin{proof}

In our setting, $\mathcal{U}_i$ is composed by a series of gates $U_{i,1},\cdots,U_{i,R_i}$ and $V_{i,1},\cdots,V_{i,C_i}$ without operating twice on each qubit. 
Pauli rotation gates $U_{i,j}(\theta_{i,j})$ have form
\begin{equation}
  U_{i,j}(\theta_{i,j})=\exp{-i \frac{\theta_{i,j}}{2} \sigma_{i,j}}.
\end{equation}

Clifford gates $V_{i,k}$ can be chosen from $\{\mathrm{H}(a),\mathrm{S}(a),\mathrm{CNOT}(a,b) \}_ {a\neq b}$.
Then
\begin{equation}
  \begin{aligned}
    \Tr{s_i\mathcal{U}_i s_{i-1}\mathcal{U}_i^\dagger}=\Tr{\left(s_i s_{i-1}\right)\big|_{I_i}}
    \prod_{k=1}^{C_i}\Tr{\left(s_iV_{i,k} s_{i-1}V_{i,k}^{\dagger}\right)\big|_\g{V_{i,k}}}
    \prod_{j=1}^{R_i}\Tr{\left(s_i U_{i,j}(\theta_{i,j}) s_{i-1} U_{i,j}^\dagger(\theta_{i,j})\right) \big|_\g{\sigma_{i,j}}}.
  \end{aligned}
\end{equation}

The exponent of a Hermitian operate $X$ is defined as Taylar expansion $\exp{X}=\sum_{k=0}^\infty\frac{X^k}{k!}$. When calculating rotation on Pauli words, the square of any Pauli word is identity $\sigma^2=\mathbb{I}$, thus we have
\begin{equation}
  \exp{-i \frac{\theta}{2} \sigma}=\sum_{k=0}^\infty\frac{(-i \frac{\theta}{2}\sigma)^k}{k!}=\sum_{k=0}^\infty\frac{(-1)^k (\frac{\theta}{2})^{2k}}{(2k)!} \mathbb{I}- i\frac{(-1)^k (\frac{\theta}{2})^{2k+1}}{(2k+1)!}\sigma=\cos{\frac{\theta}{2}}\mathbb{I}-i \sin{\frac{\theta}{2}}\sigma.
\end{equation}

Therefore, according to the exchange relation of $\sigma$ and another Pauli word $\sigma'$, we have
\begin{equation}
\begin{aligned}
&\sigma'\exp{-i \frac{\theta}{2} \sigma}=\exp{-i \frac{\theta}{2} \sigma}\sigma',\text{ if }\sigma\text{ commutes with }\sigma'.\\
&\sigma'\exp{-i \frac{\theta}{2} \sigma}=\exp{i \frac{\theta}{2} \sigma}\sigma',\text{ if }\sigma\text{ anti-commutes with }\sigma'.\\
\end{aligned}
\end{equation}
So we can divide $\{\sigma_{i,j}\}$ into two case. If $\sigma_{i,j}$ commutes with $s_{i-1}$, we have

\begin{equation}
  \begin{aligned}
  \Tr{\left(s_i U_{i,j}(\theta_{i,j}) s_{i-1} U_{i,j}^\dagger(\theta_{i,j})\right)\big|_\g{\sigma_{i,j}}}
  &=\Tr{(s_i \exp{-i \frac{\theta_{i,j}}{2}\sigma_{i,j} }  \exp{i \frac{\theta_{i,j}}{2}\sigma_{i,j} }s_{i-1}) \big|_\g{\sigma_{i,j}}}\\
  &=\Tr{(s_i \exp{-i \frac{\theta_{i,j}}{2}\sigma_{i,j} } \exp{i \frac{\theta_{i,j}}{2}\sigma_{i,j} }s_{i-1}) \big|_{\g{\sigma_{i,j}}}}\\
  &=\Tr{(s_i s_{i-1})\big|_{\g{\sigma_{i,j}}}}.
\end{aligned}
\end{equation}
While $\sigma_{i,j}$ anti-commutes with $s_{i-1}$, we have
\begin{equation}
  \begin{aligned}
  \Tr{\left(s_i U_{i,j}(\theta_{i,j}) s_{i-1} U_{i,j}^\dagger(\theta_{i,j})\right)\big|_\g{\sigma_{i,j}}}
  &=\Tr{ \left(s_i\exp{-i \frac{\theta_{i,j}}{2}\sigma_{i,j} } s_{i-1}\exp{i \frac{\theta_{i,j}}{2}\sigma_{i,j}}\right) \big|_\g{\sigma_{i,j}}}\\
  &=\Tr{\left( s_i  \exp{-i\theta_{i,j}\sigma_{i,j}} s_{i-1}\right) \big|_\g{\sigma_{i,j}}}\\
  &=\Tr{ \left(s_i (\cos{\theta_{i,j}}\mathbb{I}-i\sin{\theta_{i,j}}\sigma_{i,j}) s_{i-1} \right) \big|_\g{\sigma_{i,j}}}\\
  &=\Tr{\left(s_i s_{i-1}\right)\big|_\g{\sigma_{i,j}}}\cos{\theta_{i,j}}-\Tr{\left(i s_i \sigma_{i,j}s_{i-1}\right)\big|_\g{\sigma_{i,j}}}\sin{\theta_{i,j}}.
\end{aligned}
\end{equation}
These complete the proof of Eq.~\eqref{ap:eq:gate_ele}.
\end{proof}

\begin{remark}
  According to Eq.~\eqref{ap:eq:gate_ele}, we can obtain the following rules for the elements corresponding to each layer in $f$, for which $\Tr{s_i\mathcal{U}_i s_{i-1}\mathcal{U}_i^\dagger} \neq 0$:
\begin{enumerate}
  \item For qubits without non-trivial gates applied, we must have $s_{i-1}\big|_{I_i} = s_{i}\big|_{I_i}$.
  \item For Clifford gates, the only case for $\Tr{s_i\mathcal{U}_i s_{i-1}\mathcal{U}_i^\dagger} \neq 0$ is $s_{i-1}\big|_\g{V_{i,k}}= \pm V_{i,k}^{\dagger} s_{i}V_{i,k}\big|_\g{V_{i,k}}$, where the sign $\pm$ depends on the sign of Pauli word $V_{i,k}^{\dagger} s_{i}V_{i,k}\big|_\g{V_{i,k}}$.
  \item For Pauli rotation gates, if $\sigma_{i,j}$ commutes with $s_{i-1}$, $s_{i}|_\g{\sigma_{i,j}}=s_{i-1}|_\g{\sigma_{i,j}}$.
  \item If $\sigma_{i,j}$ anti-commutes with $s_{i-1}$, we encounter two cases that $\Tr{s_i\mathcal{U}_i s_{i-1}\mathcal{U}_i^\dagger} \neq 0$, $s_{i}|_\g{\sigma_{i,j}}=s_{i-1}|_\g{\sigma_{i,j}}$ with a factor of $\cos{\theta_{i,j}}$; or $s_{i}|_\g{\sigma_{i,j}}=\pm i \sigma_{i,j} s_{i-1}|_\g{\sigma_{i,j}}$ with a factor of $\mp \sin{\theta_{i,j}}$, where the sign $\pm$ depends on the sign of Pauli word $i \sigma_{i,j} s_{i-1}|_\g{\sigma_{i,j}}$.
  \item If $\Tr{s_i\mathcal{U}_i s_{i-1}\mathcal{U}_i^\dagger} \neq 0$ holds, there are $C(i,s_{i-1})=C(i,s_{i})$ and $AC(i,s_{i-1})=AC(i,s_{i})$.
  \end{enumerate}
\end{remark}

\section{Proof of Lemma~\ref{lemma:MSE_l}}\label{ap:lemma:MSE_l}

Before providing the proof, we first present the following lemma to handle the cross-terms, whose detailed proof is in Supplement Material~\ref{ap:lemma:cross_items} and \cite{gao2018efficient}.

\begin{lemma}\label{lemma:cross_items}
Suppose Eq.~\eqref{ap:eq:generate} is satisfied, for any distinct Pauli paths $s,s^{\prime}\in \bm{P}^{L+1}_n$, we have 
\begin{equation}\label{ap:eq:E_cross_equals_0}
\mathbb{E}_{\bm{\theta}}f(\bm{\theta},s,O,\rho)f(\bm{\theta},s^{\prime},O,\rho)=0.
\end{equation}
\end{lemma}

After obtain Eq.~\eqref{ap:eq:E_cross_equals_0}, we can prove Lemma~\ref{lemma:MSE_l} as follow:
\begin{proof}[Proof of Lemma~\ref{lemma:MSE_l}]
By Lemma~\ref{lemma:f_noisy} and Lemma~\ref{lemma:cross_items}, an estimation of the Mean-Square Error (MSE) between $\widetilde{\mathcal{L}}$ and $\widehat{\mathcal{L}}$ can be derived as follow:
\begin{equation}\label{ap:eq:diff}
\begin{aligned}
\mathbb{E}_{\bm{\theta}}\abs{\widetilde{\mathcal{L}}-\widehat{\mathcal{L}}}^2 &=\mathbb{E}_{\bm{\theta}}\big[\sum_{\abs{s}> M}\hat{f}(\bm{\theta},s,O,\rho)\big]^2\\
&\overset{\eqref{ap:eq:E_cross_equals_0}}{=}\mathbb{E}_{\bm{\theta}} \sum_{\abs{s}> M}\hat{f}(\bm{\theta},s,O,\rho)^2\\
&\overset{\eqref{ap:eq:noisy_f_term_decrease}}{=}\mathbb{E}_{\bm{\theta}} \sum_{\abs{s}> M}(1-2(p_y+p_z))^{2\abs{s}_X}(1-2(p_x+p_z))^{2\abs{s}_Y}(1-2(p_x+p_y))^{2\abs{s}_Z}{f}(\bm{\theta},s,O,\rho)^2,
\end{aligned}
\end{equation}
where the last equality holds by
\begin{equation}\label{ap:eq:noisy_f_term_decrease}
  \hat{f}(\bm{\theta},s,O,\rho)=(1-2(p_y+p_z))^{\abs{s}_X}(1-2(p_x+p_z))^{\abs{s}_Y}(1-2(p_x+p_y))^{\abs{s}_Z}f(\bm{\theta},s,O,\rho).
\end{equation}

In finite-size systems, the expected values could be bounded by a finite $\norm{O}_\infty$, note that in the noiseless setting
\begin{equation}
\abs{\sum_s f(\bm{\theta},s,O,\rho)}=\abs{\Tr{OU(\bm{\theta})\rho U^\dagger(\bm{\theta})}}\leq \norm{O}_\infty.
\end{equation}
Thus, we have
\begin{equation}\label{ap:eq:normbound}
\mathbb{E}_{\bm{\theta}} \sum_{s}f(\bm{\theta},s,O,\rho)^2\overset{\eqref{ap:eq:E_cross_equals_0}}{=}\mathbb{E}_{\bm{\theta}}\big[\sum_{s}f(\bm{\theta},s,O,\rho)\big]^2 \leq \norm{O}_\infty^2.
\end{equation}

We define $\gamma=\min\{p|{p \in \{p_x,p_y,p_z\},p\neq 0}\}$ as the minimal non-zero factor in $\{p_x,p_y,p_z\}$. The truncation weight has the following two definitions:
\begin{itemize}
  \item \textbf{Case 1}: When there are at least two non-zero elements in $\{p_x,p_y,p_z\}$, the truncation weight is $\abs{s}=\sum_{P\in \{X,Y,Z\}}\abs{s}_P$. 
  \item \textbf{Case 2}: When there is only one element in $\{p_x,p_y,p_z\}$ non-zero~(without loss of generality $p_x\neq 0$), truncation weight is $\abs{s}=\sum_{P\in \{Y,Z\}}\abs{s}_P$.
\end{itemize}

Combining Eq.~\ref{ap:eq:normbound} with Eq.~\eqref{ap:eq:diff}, we obtain the following inequality for both case 1 and case 2:
\begin{equation}\label{ap:eq:expectation_errors}
  \mathbb{E}_{\bm{\theta}}\abs{\widetilde{\mathcal{L}}-\widehat{\mathcal{L}}}^2 \leq (1-2\gamma)^{2M}\norm{O}_\infty^2\leq\exp{-4\gamma M} \norm{O}_\infty^2.
\end{equation}

Here the first inequality holds because $f(\bm{\theta},s,O,\rho)\in \mathbb{R}$. To elucidate that $f(\bm{\theta},s,O,\rho)\in \mathbb{R}$, several observations can be made from Eq.~\eqref{ap:eq:f} and Eq.~\eqref{ap:eq:gate_ele}:
\begin{itemize}
  \item The Hermiticity of operators $O$ and $\rho$ ensures $\Tr{Os_L}$ and $\Tr{s_0\rho}$ are real. 
  \item The terms $C(i,s_{i-1})$ and ${I_i}$ in the $\Tr{s_i\mathcal{U}_i s_{i-1}\mathcal{U}_i^\dagger}$ correspond to the inner product of Pauli words, which are real. 
  \item The realness of the $V_{i,k}$ Clifford term can be verified by exhaustively applying $\{\mathrm{H},\mathrm{S}, \mathrm{CNOT}\}$ to Pauli matrices $\{\mathbb{I},{X}, {Y}, {Z}\}$. 
  \item The term $AC(i,s_{i-1})$ is also real due to the product property of Pauli matrices \footnote{The Pauli matrices, $\sigma_1=X,\sigma_2=Y$ and $\sigma_3=Z$, satisfy the relation $\sigma_i \sigma_j=i\epsilon_{ijk}\sigma_k$, where $\epsilon_{ijk}$ is the Levi-Civita symbol and Einstein summation notation is used. Moreover, $s_{i-1}$ and $\sigma_{i,j'}$ anti-commute. Thus $i\sigma_{i,j'}s_{i-1}$ is a Pauli Word with a sign $\pm$.}.
\end{itemize}

Thus for $\forall\nu > 0$, if we have 
\begin{equation}
  M\geq\frac{1}{4\gamma}\ln{\frac{\norm{O}_\infty^2}{\nu}},
\end{equation}
then by Eq.~\eqref{ap:eq:expectation_errors}, $\mathbb{E}_{\bm{\theta}}\abs{\widetilde{\mathcal{L}}-\widehat{\mathcal{L}}}^2\leq \nu$ holds. This completes the proof of Lemma~\ref{lemma:MSE_l}.
\end{proof}

\begin{remark}
  In particular, for the depolarizing channel $p_x = p_y = p_z = \frac{\lambda}{4}$, Eq.~\ref{ap:eq:noisy_f_term_decrease} is substituted by 
  $\hat{f}(\bm{\theta},s,O,\rho)=(1-\lambda)^\abs{s}f(\bm{\theta},s,O,\rho)$.
  Analogous to Eq.~\ref{ap:eq:expectation_errors}, it follows that
  \begin{equation}
    \mathbb{E}_{\bm{\theta}}\abs{\widetilde{\mathcal{L}}-\widehat{\mathcal{L}}}^2 \leq (1-\lambda)^{2M}\norm{O}_\infty^2\leq\exp{-2\gamma M} \norm{O}_\infty^2.
  \end{equation}
  Therefore, for depolarizing channel 
  \begin{equation}\label{ap:eq:depolarizing_M}
    M\geq\frac{1}{2\lambda}\ln{\frac{\norm{O}_\infty^2}{\nu}}
  \end{equation}
  is sufficient to guarantee $\mathbb{E}_{\bm{\theta}}\abs{\widetilde{\mathcal{L}}-\widehat{\mathcal{L}}}^2\leq \nu$.
\end{remark}

Combining with Eq.~\eqref{ap:eq:T}, the subsequent corollary provides the time complexity of obtaining the approximated noisy cost function $\widetilde{\mathcal{L}}$ in our method:
\begin{corollary}\label{cor:MSE_T}
  Suppose Eq.~\eqref{ap:eq:generate} is satisfied, for $\forall\nu > 0$ and a fixed error rate $\gamma$, the time complexity of obtaining $\widetilde{\mathcal{L}}$ with mean-square error $\mathbb{E}_{\bm{\theta}}\abs{\widetilde{\mathcal{L}}-\widehat{\mathcal{L}}}^2\leq \nu$ are $\mathrm{Poly}(n) \order{L} \left( \frac{\norm{O}_\infty}{\sqrt{\nu}} \right)^{\order{\frac{1}{\gamma}}}=\mathrm{Poly}(n,L,\frac{1}{\sqrt{\nu}},\norm{O}_\infty)$ for case 1 and  a quasi-polynomial relation $\mathrm{Poly}(n)\order{(nL)^{\ln{\frac{\norm{O}_\infty}{\sqrt{\nu}}}}}^{\order{\frac{1}{\gamma}}}$ for case 2.
\end{corollary}

\begin{proof}
  By Lemma~\ref{lemma:MSE_l}, we can set $M=\frac{1}{4\gamma}\ln{\frac{\norm{O}_\infty^2}{\nu}}$, while the mean-square error(MSE) $\mathbb{E}_{\bm{\theta}}\abs{\widetilde{\mathcal{L}}-\widehat{\mathcal{L}}}^2$ is below $\nu$.

Using the algorithm in Supplement Material~\ref{ap:algorithm}, the total time cost for obtaining $\mathcal{\widetilde{L}}$ in case 1 is about:
\begin{equation}
\begin{aligned}
\mathrm{Poly}(n) \order{L} 2^{M}
&=
\mathrm{Poly}(n) \order{L} \left( \frac{\norm{O}_\infty}{\sqrt{\nu}} \right)^{\order{\frac{1}{\gamma}}}\\
&=\mathrm{Poly}(n,L,\frac{1}{\sqrt{\nu}},\norm{O}_\infty).
\end{aligned}
\end{equation}

And in case 2 is about:
\begin{equation}
  \begin{aligned}
  \mathrm{Poly}(n)\order{(nL)^{M+1}}
  &=
  \mathrm{Poly}(n)\order{(nL)^{\frac{1}{2\gamma}\ln{\frac{\norm{O}_\infty}{\sqrt{\nu}}}+1}}\\
  &=\mathrm{Poly}(n)\order{(nL)^{\ln{\frac{\norm{O}_\infty}{\sqrt{\nu}}}}}^{\order{\frac{1}{\gamma}}}.
\end{aligned}
\end{equation}

\end{proof}

\section{Proof of Lemma~\ref{lemma:cross_items}}
\label{ap:lemma:cross_items}
In the Lemma~\ref{lemma:cross_items}, we claimed that if the set of Pauli words $\{\overline{\sigma}_{i,j}\}$ can generate $\{\mathbb{I},X,Y,Z\}^{\otimes n}$ up to phase
\begin{equation}\label{ap:eq:generate}
  \langle \{\overline{\sigma}_{i,j}\}\rangle/\left(\langle \{\overline{\sigma}_{i,j}\}\rangle\cap\langle i\mathbb{I}^{\otimes n}\rangle\right)=\{\mathbb{I},X,Y,Z\}^{\otimes n},
\end{equation}
then for any distinct Pauli paths $s$ and $s^{\prime}$ we have 
\begin{equation}
\mathbb{E}_{\bm{\theta}}f(\bm{\theta},s,O,\rho)f(\bm{\theta},s^{\prime},O,\rho)=0,
\end{equation}
where $\overline{\sigma}_{i,j}$ is defined as:
\begin{equation}\label{ap:eq:effected_Pauli}
  \overline{\sigma}_{i,j}= \mathcal{V}_{L} \cdots \mathcal{V}_{i} \sigma_{i,j} \mathcal{V}_{i}^\dagger\cdots \mathcal{V}_{L}^\dagger.
\end{equation}

In order to prove this claim, we first define the ``split" relation between sets of Pauli words:
\begin{definition}
  There are two sets of Pauli words $A$ and $B$. We define $B$ to be split by $A$ if there exist no two distinct elements in $B$ that exhibit identical anti-commute/commute relation with each element in $A$.
\end{definition}
\begin{remark}
    The name ``split" is used because if $A$ can split $B$, then any element in $B$ can be uniquely determined by characterizing its exchange relation with each element in $A$. In a sense, $A$ separates each element in $B$ into independent parts by characterizing their exchange relationship.
\end{remark}

Before the discussion, we introduce the following lemma:
\begin{lemma}\label{ap:lemma:3terms_exchange}
    Assume $\mathcal{P}$, $\sigma_a$, $\sigma_b$ and $\sigma_c$ are Pauli words. If $\mathcal{P}\sigma_a$ and $\mathcal{P}\sigma_b$ have the same commute or anti-commute relation with $\sigma_c$. Then $\sigma_a$ and $\sigma_b$ have the same commute or anti-commute relation with $\sigma_c$.

    \begin{proof}
        First, we assume $\mathcal{P}\sigma_a$ and $\mathcal{P}\sigma_b$ commute with $\sigma_c$, for $i=a,b$ it can be expressed as 
        \begin{equation}\label{ap:eq:3terms_exchange}
            \mathcal{P}\sigma_i\sigma_c=\sigma_c\mathcal{P}\sigma_i=\pm\mathcal{P}\sigma_c\sigma_i,
        \end{equation}
        where the sign $\pm$ is set to $+$ if and only if $\mathcal{P}$ commutes with $\sigma_c$.
        
        This leads to
        \begin{equation}
        \sigma_i\sigma_c=\pm\sigma_c\sigma_i,
        \end{equation}
        where the sign $\pm$ is same as Eq.~\eqref{ap:eq:3terms_exchange}, for $i=a,b$.

        In a similar way to discuss the case of $\mathcal{P}\sigma_a$ and $\mathcal{P}\sigma_b$ anti-commute with $\sigma_c$, we obtain $\sigma_a$ and $\sigma_b$ have the same commute or anti-commute relation with $\sigma_c$.
    \end{proof}
\end{lemma}

We will demonstrate that if $\{\overline{\sigma}_{i,j}\}$ can split Pauli word set $\{\sigma\}$ which linearly compose $O$, then a similar conclusion can be established.
\begin{lemma}\label{ap:lemma:split}
Suppose the set of Pauli words $\{\overline{\sigma}_{i,j}\}$ can split the Pauli word set $\{\sigma\}$ of $O$. Then for any distinct Pauli paths $s\neq s^{\prime}\in \bm{P}^{L+1}_n$, we have 
\begin{equation}
\mathbb{E}_{\bm{\theta}}f(\bm{\theta},s,O,\rho)f(\bm{\theta},s^{\prime},O,\rho)=0.
\end{equation}

\begin{proof}    
Note that
\begin{equation}\label{ap:eq:E_cross_terms}
\begin{aligned}
&\mathbb{E}_{\bm{\theta}}f(\bm{\theta},s,O,\rho)f(\bm{\theta},s^{\prime},O,\rho)\\
&=\Tr{Os_L}\Tr{Os'_L}\left(\prod_{i=1}^{L}\mathbb{E}_{\bm{\theta}_i}\Tr{s_i\mathcal{U}_i s_{i-1}\mathcal{U}_i^\dagger}\Tr{s'_i\mathcal{U}_i s'_{i-1}\mathcal{U}_i^\dagger}\right)\Tr{s_0\rho}\Tr{s'_0\rho},
\end{aligned}
\end{equation}
Thus $\mathbb{E}_{\bm{\theta}_i}\Tr{s_i\mathcal{U}_i s_{i-1}\mathcal{U}_i^\dagger}\Tr{s'_i\mathcal{U}_i s'_{i-1}\mathcal{U}_i^\dagger}=0$ for some $i$, leads to $\mathbb{E}_{\bm{\theta}}f(\bm{\theta},s,O,\rho)f(\bm{\theta},s^{\prime},O,\rho)=0$.

By Proposition~\ref{prop:f_ele}, the element corresponding to $i$-th layer of Eq.~\eqref{ap:eq:E_cross_terms} can be written as
\begin{equation}\label{ap:eq:E_cross_i-layer}
\begin{aligned}
\mathbb{E}_{\bm{\theta}_i}&\Tr{s_i\mathcal{U}_i s_{i-1}\mathcal{U}_i^\dagger}\Tr{s'_i\mathcal{U}_i s'_{i-1}\mathcal{U}_i^\dagger}\\
=&\Tr{(s_i s_{i-1})\big|_{I_i}}\Tr{(s'_i s'_{i-1})\big|_{I_i}}
\prod_{k=1}^{C_i}\Tr{\left(s_iV_{i,k} s_{i-1}V_{i,k}^{\dagger}\right)\big|_\g{V_{i,k}}}\Tr{\left(s'_iV_{i,k} s'_{i-1}V_{i,k}^{\dagger}\right)\big|_\g{V_{i,k}}}\\
&\prod_{\sigma_{i,j}\in C(i,s_{i-1})}\Tr{(s_i s_{i-1})\big|_{\g{\sigma_{i,j}}}}\prod_{\sigma_{i,l}\in C(i,s_{i-1}')}\Tr{(s'_i s'_{i-1})\big|_{\g{\sigma_{i,l}}}}\\
&\mathbb{E}_{\bm{\theta}_i}\left\{\prod_{\sigma_{i,j'}\in AC(i,s_{i-1})}\Bigg[ \Tr{(s_i s_{i-1})\big|_{\g{\sigma_{i,j'}}}}\cos{\theta_{i,j'}}
- \Tr{(is_i \sigma_{i,j'}s_{i-1})\big|_{\g{\sigma_{i,j'}}}}\sin{\theta_{i,j'}}\Bigg]\right.\\
&\left.\prod_{\sigma_{i,l'}\in AC(i,s_{i-1}')}\Bigg[ \Tr{(s'_i s'_{i-1})\big|_{\g{\sigma_{i,l'}}}}\cos{\theta_{i,l'}}
- \Tr{(is'_i \sigma_{i,l'}s'_{i-1})\big|_{\g{\sigma_{i,l'}}}}\sin{\theta_{i,l'}}\Bigg]\right\}.
\end{aligned}
\end{equation}

The following proof is divided into two parts. 
\begin{proofpart}
In the first part of the proof, we show that if $\{\overline{\sigma}_{i,j}\}$ can split the Pauli word set $\{\sigma\}$ of $O$, then $s_L= s_L'$, otherwise $\mathbb{E}_{\bm{\theta}}f(\bm{\theta},s,O,\rho)f(\bm{\theta},s^{\prime},O,\rho)=0.$

If there are $s_{i-1}$ and $s_{i-1}'$ have different exchange relation with $\sigma_{i,j}$ at $i$-th layer. 
Without loss of generality, we assume $s_{i-1}$ commutes with $\sigma_{i,j}$, whereas $s_{i-1}'$ anti-commutes with $\sigma_{i,j}$.
By the anti-commutation, $i \sigma_{i,j} s_{i-1}'$ is a normalized Pauli word with a sign factor $\pm$.
As described in the remark of Proposition~\ref{prop:f_ele}, we have $s_{i}'|_{\g{\sigma_{i,j}}}=s_{i-1}'|_{\g{\sigma_{i,j}}}$ with factor $\cos{\theta_{i,j}}$ or $s_{i}'|_{\g{\sigma_{i,j}}}=\left(i\sigma_{i,j} s_{i-1}'\right)|_{\g{\sigma_{i,j}}}$ (up to sign $\pm$) with factor $\sin{\theta_{i,j}}$, otherwise $\Tr{s'_i\mathcal{U}_i s'_{i-1}\mathcal{U}_i^\dagger}=0$.
However, there is still $\mathbb{E}_{\bm{\theta}_i}\Tr{s_i\mathcal{U}_i s_{i-1}\mathcal{U}_i^\dagger}\Tr{s'_i\mathcal{U}_i s'_{i-1}\mathcal{U}_i^\dagger}=0$, because of $\mathbb{E}_{\theta_{i,j}} \sin{\theta_{i,j}}=\mathbb{E}_{\theta_{i,j}} \cos{\theta_{i,j}}=0$.

Thus, for any layer $i=1,\cdots,L$ and $j=1,\cdots,R_i$, Pauli words $s_{i-1}$ and $s_{i-1}'$ have the same exchange relation with $\sigma_{i,j}$. 
In this setting, Eq.~\eqref{ap:eq:E_cross_i-layer} can be written as
\begin{equation}\label{ap:eq:E_cross_i-layer-rewrited}
\begin{aligned}
&\mathbb{E}_{\bm{\theta}_i}\Tr{s_i\mathcal{U}_i s_{i-1}\mathcal{U}_i^\dagger}\Tr{s'_i\mathcal{U}_i s'_{i-1}\mathcal{U}_i^\dagger}\\
=&\Tr{(s_i s_{i-1})\big|_{I_i}}\Tr{(s'_i s'_{i-1})\big|_{I_i}}
\prod_{k=1}^{C_i}\Tr{\left(s_iV_{i,k} s_{i-1}V_{i,k}^{\dagger}\right)\big|_\g{V_{i,k}}}\Tr{\left(s'_iV_{i,k} s'_{i-1}V_{i,k}^{\dagger}\right)\big|_\g{V_{i,k}}}\\
&\prod_{\sigma_{i,j}\in C(i,s_{i-1})}\Tr{(s_i s_{i-1})\big|_{\g{\sigma_{i,j}}}}\Tr{(s'_i s'_{i-1})\big|_{\g{\sigma_{i,j}}}}\\
&\prod_{\sigma_{i,j'}\in AC(i,s_{i-1})}\mathbb{E}_{\theta_{i,j'}}\left\{ \Bigg[ \Tr{(s_i s_{i-1})\big|_{\g{\sigma_{i,j'}}}}\cos{\theta_{i,j'}}
- \Tr{(is_i \sigma_{i,j'}s_{i-1})\big|_{\g{\sigma_{i,j'}}}}\sin{\theta_{i,j'}}\Bigg]\right.\\
&\left.\quad\quad\quad\quad\quad\Bigg[ \Tr{(s'_i s'_{i-1})\big|_{\g{\sigma_{i,j'}}}}\cos{\theta_{i,j'}}
- \Tr{(is'_i \sigma_{i,j'}s'_{i-1})\big|_{\g{\sigma_{i,j'}}}}\sin{\theta_{i,j'}}\Bigg]\right\}\\
=&\Tr{(s_i s_{i-1})\big|_{I_i}}\Tr{(s'_i s'_{i-1})\big|_{I_i}}
\prod_{k=1}^{C_i}\Tr{\left(s_iV_{i,k} s_{i-1}V_{i,k}^{\dagger}\right)\big|_\g{V_{i,k}}}\Tr{\left(s'_iV_{i,k} s'_{i-1}V_{i,k}^{\dagger}\right)\big|_\g{V_{i,k}}}\\
&\prod_{\sigma_{i,j}\in C(i,s_{i-1})}\Tr{(s_i s_{i-1})\big|_{\g{\sigma_{i,j}}}}\Tr{(s'_i s'_{i-1})\big|_{\g{\sigma_{i,j}}}}\\
&\prod_{\sigma_{i,j'}\in AC(i,s_{i-1})}\Bigg[\Tr{(s_i s_{i-1})\big|_{\g{\sigma_{i,j'}}}}\Tr{(s'_i s'_{i-1})\big|_{\g{\sigma_{i,j'}}}}\mathbb{E}_{{\theta}_{i,j'}}\left(\cos{\theta_{i,j'}}\right)^2\\
&\quad\quad\quad\quad\quad +\Tr{(is_i \sigma_{i,j'}s_{i-1})\big|_{\g{\sigma_{i,j'}}}}\Tr{(is'_i \sigma_{i,j'}s'_{i-1})\big|_{\g{\sigma_{i,j'}}}}\mathbb{E}_{{\theta}_{i,j'}}\left(\sin\theta_{i,j'}\right)^2\Bigg],
\end{aligned}
\end{equation}
where the last equality is given by $\mathbb{E}_{\theta_{i,j}} \sin{\theta_{i,j}} \cos{\theta_{i,j}}=0$. 

Similarly, Eq.~\eqref{ap:eq:E_cross_i-layer-rewrited} implies that up to sign $\pm$,
\begin{equation}\label{ap:eq:none_R_ele}
  \begin{gathered}
    s_i\big|_{I_i}=s_{i-1}\big|_{I_i} \;,\; s'_i\big|_{I_i}=s'_{i-1}\big|_{I_i}, \; \mathrm{and}\\
    s_i\big|_{\g{V_{i,k}}}=(V_{i,k} s_{i-1}V_{i,k}^\dagger)\big|_{\g{V_{i,k}}}\;,\;s'_i\big|_{\g{V_{i,k}}}=(V_{i,k} s'_{i-1}V_{i,k}^\dagger)\big|_{\g{V_{i,k}}}
  \end{gathered}
\end{equation}
for $k=1,\cdots,C_i$. If not, then $\Tr{s_i\mathcal{U}_i s_{i-1}\mathcal{U}_i^\dagger}\Tr{s'_i\mathcal{U}_i s'_{i-1}\mathcal{U}_i^\dagger}=0$. 

For $j\in\{1,\cdots,R_i\}$ such that $\sigma_{i,j}\in C(i,s_{i-1})$, we have $s_{i}|_{\g{\sigma_{i,j}}}=s_{i-1}|_{\g{\sigma_{i,j}}}$ and $s'_{i}|_{\g{\sigma_{i,j}}}=s'_{i-1}|_{\g{\sigma_{i,j}}}$; otherwise $\Tr{(s_i s_{i-1})\big|_{\g{\sigma_{i,j}}}}\Tr{(s'_i s'_{i-1})\big|_{\g{\sigma_{i,j}}}}=0$. 
For $j'$ being an index such that $\sigma_{i,j'}\in AC(i,s_{i-1})$, we have two cases: 
\begin{enumerate}
\item $s_{i}|_{\g{\sigma_{i,j'}}}=s_{i-1}|_{\g{\sigma_{i,j'}}}$ and $s'_{i}|_{\g{\sigma_{i,j'}}}=s'_{i-1}|_{\g{\sigma_{i,j'}}}$.
\item $s_{i}|_{\g{\sigma_{i,j'}}}= \left(i\sigma_{i,j'} s_{i-1}\right)|_{\g{\sigma_{i,j'}}}$ and $s_{i}|_{\g{\sigma_{i,j'}}}= \left(i\sigma_{i,j'} s'_{i-1}\right)|_{\g{\sigma_{i,j'}}}$ (up to a sign $\pm$). 
\end{enumerate}
If neither of these two cases holds, the equation Eq.~\eqref{ap:eq:E_cross_i-layer-rewrited} is equal to zero.
We denote the product of these $i\sigma_{i,j'}$ acting on $s_{i-1}$ and $s'_{i-1}$ as operator $\mathcal{P}_i$.
Then combined with Eq.~\eqref{ap:eq:none_R_ele}, we obtain $s_i= \mathcal{P}_i \mathcal{V}_i s_{i-1} \mathcal{V}_i^\dagger$, $s_i'=\mathcal{P}_i \mathcal{V}_i s_{i-1}' \mathcal{V}_i^\dagger$ up to sign $\pm$, for $i=1,\cdots ,L$.

Thus, there must be 
\begin{equation}
  \begin{aligned}
    s_{i-1}&=\mathcal{V}_{i}^\dagger \mathcal{P}_{i}\cdots \mathcal{V}_L^\dagger \mathcal{P}_L s_L \mathcal{V}_L\cdots\mathcal{V}_{i},\\
    s'_{i-1}&= \mathcal{V}_{i}^\dagger \mathcal{P}_{i}\cdots \mathcal{V}_L^\dagger \mathcal{P}_L s'_L \mathcal{V}_L\cdots\mathcal{V}_{i}
  \end{aligned}
\end{equation}
up to sign $\pm$. 

As discussed before, $s_{i-1}$ and $s_{i-1}'$ have the same exchange relation with $\sigma_{i,j}$, leads to $\mathcal{P}_{i} \mathcal{V}_{i+1}^\dagger \cdots \mathcal{V}_L^\dagger \mathcal{P}_L s_L \mathcal{V}_L \cdots\mathcal{V}_{i+1}$ and $\mathcal{P}_{i}\mathcal{V}_{i+1}^\dagger \cdots \mathcal{V}_L^\dagger \mathcal{P}_L s'_L \mathcal{V}_L\cdots \mathcal{V}_{i+1}$ have the same exchange relation with $\mathcal{V}_{i} \sigma_{i,j}\mathcal{V}_{i}^\dagger$.
According to Lemma~\ref{ap:lemma:3terms_exchange}, $\mathcal{V}_{i+1}^\dagger \mathcal{P}_{i+1}\cdots \mathcal{V}_L^\dagger \mathcal{P}_L s_L \mathcal{V}_L\cdots\mathcal{V}_{i+1}$ and $\mathcal{V}_{i+1}^\dagger \mathcal{P}_{i+1}\cdots \mathcal{V}_L^\dagger \mathcal{P}_L s'_L \mathcal{V}_L\cdots\mathcal{V}_{i+1}$ have the same exchange relation with $\mathcal{V}_{i}\sigma_{i,j}\mathcal{V}_{i}^\dagger$. 
Repeating this process, we get $s_L$ and $s'_L$ have the same exchange relation with $\mathcal{V}_{L} \cdots \mathcal{V}_{i} \sigma_{i,j} \mathcal{V}_{i}^\dagger\cdots \mathcal{V}_{L}^\dagger=\overline{\sigma}_{i,j}$.
As a result, $s_L$ and $s_L'$ have the same exchange relation with each element in $\{\overline{\sigma}_{i,j}\}$. On the other hand, $s_L$ and $s'_L$ are contained in the Pauli word set $\{\sigma\}$ of $O$, otherwise $\Tr{Os_L}\Tr{Os'_L}=0$. This leads to conclude that $s_L=s_L'$ is due to the split assumption.
\end{proofpart}

\begin{proofpart}
In the second part of the proof, we demonstrate that $s_L=s_L'$ implies $s_i=s_i'$ for $i=0,\cdots,L$, otherwise $\mathbb{E}_{\bm{\theta}}f(\bm{\theta},s,O,\rho)f(\bm{\theta},s^{\prime},O,\rho)=0$.

To prove this claim, it suffices to show that if $\mathbb{E}_{\bm{\theta}}f(\bm{\theta},s,O,\rho)f(\bm{\theta},s^{\prime},O,\rho)\neq 0$, and $s_i=s_i'$, then $s_{i-1}=s_{i-1}'$.
Given that $s_{i}=s_{i}'$ and by $C(i,s_{i-1})=C(i,s_{i})=C(i,s'_{i-1})$ and $AC(i,s_{i-1})=AC(i,s_{i})=AC(i,s'_{i-1})$, Eq.~\eqref{ap:eq:E_cross_i-layer} can again be rewrited as Eq.~\eqref{ap:eq:E_cross_i-layer-rewrited}.

Thus, we have 
\begin{equation}
\begin{gathered}
  s_{i-1}\big|_{I_i}=s'_{i-1}\big|_{I_i}=s_i\big|_{I_i},\\
  s_{i-1}\big|_{\g{V_{i,k}}}=s'_{i-1}\big|_{\g{V_{i,k}}}=(V_{i,k}^\dagger s_i V_{i,k}) \big|_{\g{V_{i,k}}}
\end{gathered}
\end{equation}
up to sign $\pm$ for $k=1,\cdots,C_i$, otherwise $\Tr{s_i\mathcal{U}_i s_{i-1}\mathcal{U}_i^\dagger}\Tr{s'_i\mathcal{U}_i s'_{i-1}\mathcal{U}_i^\dagger}=0$. 

Suppose $s_{i-1}\big|_{\g{\sigma_{i,j}}}\neq s'_{i-1}\big|_{\g{\sigma_{i,j}}}$ for some $\sigma_{i,j}\in C(i,s_{i})$, then either $\Tr{(s_i s_{i-1})\big|_{\g{\sigma_{i,j}}}}$ or $\Tr{(s_i' s_{i-1}')\big|_{\g{\sigma_{i,j}}}}$ is zero, resulting in Eq.~\eqref{ap:eq:E_cross_i-layer-rewrited} being $0$.

Similarly, suppose there is $s_{i-1}\big|_{\g{\sigma_{i,j'}}}\neq s'_{i-1}\big|_{\g{\sigma_{i,j'}}}$ for some $\sigma_{i,j'}\in AC(i,s_{i})$. 
If this holds, then both equations $\Tr{(s_i s_{i-1})\big|_{\g{\sigma_{i,j'}}}}\Tr{(s'_i s'_{i-1})\big|_{\g{\sigma_{i,j'}}}}$ and $\Tr{(is_i \sigma_{i,j'}s_{i-1})\big|_{\g{\sigma_{i,j'}}}}\Tr{(is'_i \sigma_{i,j'}s'_{i-1})\big|_{\g{\sigma_{i,j'}}}}$ are zero, resulting in Eq.~\eqref{ap:eq:E_cross_i-layer-rewrited} equal to $0$.

Based on the above discussion, when $s_{i}=s_{i}'$ we must have $s_{i-1}=s_{i-1}'$ for $i=1,\cdots,L$, otherwise there is $\mathbb{E}_{\bm{\theta}}f(\bm{\theta},s,O,\rho)f(\bm{\theta},s^{\prime},O,\rho)=0$.
This finished the proof of the second part.
\end{proofpart}
\end{proof}
\end{lemma}

\begin{remark}
Specifically, for the case of $O$ has only one Pauli word $\sigma$, the set of Pauli words $\{\overline{\sigma}_{i,j}\}$ naturally splits the Pauli word set $\{\sigma\}$ of $O$. This means that our requirements for the circuit are fundamentally different from anti-concentration.
\end{remark}

If $\{\overline{\sigma}_{i,j}\}$ can split $\{\mathbb{I},X,Y,Z\}^{\otimes n}$, it obviously can split $\{\sigma\}$.
We will use a lemma to explain the equivalence between $\{\overline{\sigma}_{i,j}\}$ split $\{\mathbb{I},X,Y,Z\}^{\otimes n}$ and $\{\overline{\sigma}_{i,j}\}$ generates $\{\mathbb{I},X,Y,Z\}^{\otimes n}$ up to phase.
Before presenting the lemma, we must clarify the definition of a Pauli set $A$ that generates $\{\mathbb{I},X,Y,Z\}^{\otimes n}$. We say a Pauli set $A$ generates $\{\mathbb{I},X,Y,Z\}^{\otimes n}$ up to phase means that $\langle A \rangle/\left( \langle A \rangle\cap\langle i\mathbb{I}^{\otimes n} \rangle \right) = \mathbb {P}^n$, where $\mathbb{P}_n:=\mathrm{PG}_n/\langle i\mathbb{I}^{\otimes n}\rangle$ and $\mathrm{PG}_n$ is the $n$-qubit Pauli group.
In this expression, the notation $\langle A \rangle$ refers to the Pauli subgroup that is generated by set $A$ (the finite product of elements and their inverses in $A$). And the quotient is used to remove the effect of the phase factor.

Essentially, this representation means that $A$ generates $\{\mathbb{I},X,Y,Z\}^{\otimes n}$ up to phase $\{e^{i\psi}|\psi=0,\frac{\pi}{2},\pi,\frac{3\pi}{2}\}$.

\begin{lemma}\label{ap:lemma:split_generate}
$A$ splits $\{\mathbb{I},X,Y,Z\}^{\otimes n}$ if and only if $\langle A\rangle/\left(\langle A\rangle\cap\langle i\mathbb{I}^{\otimes n}\rangle\right)=\mathbb{P}^n$, where $\mathbb{P}_n:=\mathrm{PG}_n/\langle i\mathbb{I}^{\otimes n}\rangle$.
\end{lemma}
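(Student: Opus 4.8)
The plan is to translate everything into the standard symplectic encoding of the Pauli group over $\mathbb{F}_2$. Fixing representatives, every element of $\mathbb{P}_n=\mathrm{PG}_n/\langle i\mathbb{I}^{\otimes n}\rangle$ is identified with a vector $v(P)=(x_P,z_P)\in\mathbb{F}_2^{n}\times\mathbb{F}_2^{n}$ through $P\simeq X^{x_P}Z^{z_P}$. Under this identification the group law on $\mathbb{P}_n$ becomes addition in $\mathbb{F}_2^{2n}$ (each Pauli word squares to $\pm\mathbb{I}^{\otimes n}$, hence has order dividing two modulo phase), and the commute/anti-commute relation of $P$ and $Q$ is recorded by the symplectic form $\omega(v(P),v(Q))=x_P\cdot z_Q+z_P\cdot x_Q\pmod 2$, with value $0$ meaning commuting and value $1$ meaning anti-commuting. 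The form $\omega$ is nondegenerate, and this is the only structural fact about it that I will need.

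Next I would rewrite the splitting hypothesis. For a set of Pauli words $A$, put $\bar A=\{v(a):a\in A\}\subseteq\mathbb{F}_2^{2n}$ and consider the linear map $\Phi_A:\mathbb{F}_2^{2n}\to\mathbb{F}_2^{A}$ sending $v\mapsto(\omega(v,v(a)))_{a\in A}$, which records the commute/anti-commute pattern of a Pauli word against all of $A$. By bilinearity of $\omega$, two elements $b,b'\in\mathbb{P}_n$ exhibit the same pattern exactly when $\Phi_A(v(b)+v(b'))=0$, i.e. when $v(b)+v(b')\in\ker\Phi_A$. Since distinct elements of $\mathbb{P}_n$ correspond to distinct vectors, $A$ splits $\{\mathbb{I},X,Y,Z\}^{\otimes n}$ if and only if $\Phi_A$ is injective, i.e. $\ker\Phi_A=\{0\}$. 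But $\ker\Phi_A=\{v:\omega(v,w)=0\ \forall w\in\bar A\}=(\operatorname{span}_{\mathbb{F}_2}\bar A)^{\perp_\omega}$, the symplectic complement of the span of $\bar A$.

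In parallel I would unwind the generation condition. The subgroup $\langle A\rangle$ consists of finite products of elements of $A$ and their inverses; modulo $\langle i\mathbb{I}^{\otimes n}\rangle$ these products are precisely the $\mathbb{F}_2$-linear combinations of the vectors in $\bar A$, so $\langle A\rangle/(\langle A\rangle\cap\langle i\mathbb{I}^{\otimes n}\rangle)$ is carried by the identification onto $\operatorname{span}_{\mathbb{F}_2}\bar A$. Hence $\langle A\rangle/(\langle A\rangle\cap\langle i\mathbb{I}^{\otimes n}\rangle)=\mathbb{P}_n$ is equivalent to $\operatorname{span}_{\mathbb{F}_2}\bar A=\mathbb{F}_2^{2n}$. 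It remains to join the two reformulations: by nondegeneracy of $\omega$ one has $\dim(\operatorname{span}\bar A)^{\perp_\omega}=2n-\dim\operatorname{span}\bar A$, so $(\operatorname{span}\bar A)^{\perp_\omega}=\{0\}$ holds if and only if $\operatorname{span}\bar A=\mathbb{F}_2^{2n}$. Chaining the equivalences yields splitting $\Leftrightarrow$ generation up to phase, which is the claim.

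The steps that require genuine care, rather than bookkeeping, are the two dictionary entries. First, I must verify that products modulo phase linearize to $\mathbb{F}_2$-spans: this uses that every Pauli word has order two up to phase, so inverses contribute nothing new and the image of $\langle A\rangle$ is closed under the additive structure of $\mathbb{F}_2^{2n}$. Second, I must verify that commutation linearizes, so that the shared-pattern condition really becomes membership in a single symplectic complement. Once these are in place, the heart of the argument is the lone linear-algebra fact that a nondegenerate alternating form identifies ``spans everything'' with ``has trivial orthogonal complement''; I expect the orthogonal-complement dimension count to be the only nontrivial computation, and it is entirely standard.
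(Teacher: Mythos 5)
Your proof is correct, but it takes a genuinely different route from the paper. You work entirely in the symplectic picture: identifying $\mathbb{P}_n$ with $\mathbb{F}_2^{2n}$ (valid since every Pauli word squares to a phase, so the quotient is an elementary abelian $2$-group), encoding commutation by the standard nondegenerate alternating form $\omega$, and then reducing both conditions to linear algebra --- splitting becomes $(\operatorname{span}_{\mathbb{F}_2}\bar A)^{\perp_\omega}=\{0\}$, generation becomes $\operatorname{span}_{\mathbb{F}_2}\bar A=\mathbb{F}_2^{2n}$, and the single dimension count $\dim W^{\perp_\omega}=2n-\dim W$ finishes both directions at once. The paper instead argues each direction separately and never linearizes: for ``generation $\Rightarrow$ splitting'' it uses the group-theoretic fact that an element commuting with all of $\langle i\mathbb{I}^{\otimes n},A\rangle=\mathrm{PG}_n$ lies in the center $\langle i\mathbb{I}^{\otimes n}\rangle$; for the converse it invokes the von Neumann bicommutant theorem to produce, when generation fails, a non-scalar element of the commutant of the algebra generated by $\langle A\rangle$, and hence a non-identity Pauli word commuting with every element of $A$. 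Your approach is more elementary and more symmetric --- no operator-algebra machinery, both implications falling out of one duality statement --- and it additionally quantifies the failure of splitting (the space of indistinguishable pairs has dimension exactly $2n-\dim\operatorname{span}\bar A$). The paper's argument, on the other hand, requires no choice of symplectic dictionary and its commutant-based half illustrates a technique that extends beyond the abelian-mod-phase setting. One point to keep explicit when writing yours up: the identification of $\langle A\rangle/(\langle A\rangle\cap\langle i\mathbb{I}^{\otimes n}\rangle)$ with the image of $\langle A\rangle$ in $\mathbb{P}_n$ (second isomorphism theorem), which is what lets you read the generation hypothesis as a statement about spans; you use this silently in the second dictionary entry.
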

\begin{proof}
Suppose $\langle A\rangle/\left(\langle A\rangle\cap\langle i\mathbb{I}^{\otimes n}\rangle\right)=\mathbb{P}^n$. Take $b_1,b_2\in \{\mathbb{I},X,Y,Z\}^{\otimes n}$ such that $b_1$ and $b_2$ have the same exchange relation with every $a\in A$. Since $\left[b_1b_2,a\right]=0$ for all $a\in A$, we have $\left[b_1b_2,g\right]=0$ for all $g\in\langle i\mathbb{I}^{\otimes n},A\rangle=\mathrm{PG}_n$, which implies $b_1b_2\in\langle i\mathbb{I}^{\otimes n}\rangle$. Combined with $b_1,b_2\in \{\mathbb{I},X,Y,Z\}^{\otimes n}$, we have $b_1=b_2$.

On the other hand, suppose $A$ splits $\{\mathbb{I},X,Y,Z\}^{\otimes n}$, the goal is to prove that $\langle A\rangle/\left(\langle A\rangle\cap\langle i\mathbb{I}^{\otimes n}\rangle\right)=\mathbb{P}^n$.
To prove this claim, it suffices to show that if $\langle A\rangle/\left(\langle A\rangle\cap\langle i\mathbb{I}^{\otimes n}\rangle\right)\neq\mathbb{P}^n$, then there exist a non-identity Pauli word commutes with each element of $A$.

We set $C=\langle A\rangle/\left(\langle A\rangle\cap\langle i\mathbb{I}^{\otimes n}\rangle\right)$ and then consider the $C^*$-algebras $\mathcal{M}$ and $\mathcal{P}$ generated by $C$ and $\mathbb{P}^n$, respectively. 
By Von Neumann bicommutant theorem \cite{kadison1986fundamentals}, we have $\mathcal{M}''=\overline{\mathcal{M}}=\mathcal{M}$. 
Since $C\neq\mathbb{P}^n$, we can conclude that $\mathcal{M}\neq \mathcal{P}$ as Pauli words constitute an orthonormal basis of the matrix algebra, resulting in $\mathcal{M}''\neq \mathcal{P}$. 
This implies the existence of non-identity elements in $\mathcal{M}'$. In other words, there exists a non-trivial element $x=c_1P_1+c_2P_2+\cdots$ ($P_i$ stands for different Pauli words and $c_i\in \mathbb{C}$) which commutes with every element in $A$. This leads to the conclusion that $\{P_1,P_2,\cdots\}$ also commute with every element in $A$ and they cannot all be identical.
\end{proof}

This finished the proof of Lemma~\ref{lemma:cross_items}. In addition, an equivalent proof can be found in \cite{gao2018efficient}.

\section{Proof of Theorem~\ref{thm:main}}\label{ap:thm1}

\begin{proof}

  From Lemma~\ref{lemma:MSE_l}, we have shown that if $M\geq\frac{1}{4\gamma}\ln{\frac{\norm{O}_\infty^2}{\nu}}$, then
 \begin{equation}
  \mathbb{E}_{\bm{\theta}}\abs{\widetilde{\mathcal{L}}-\widehat{\mathcal{L}}}^2 \leq \nu.
 \end{equation}
  
 By Markov's inequality, for any probability $\delta$, we have
 \begin{equation}
  \Pr{\abs{\widetilde{\mathcal{L}}-\widehat{\mathcal{L}}}\geq\frac{1}{\sqrt{\delta}}\sqrt{\mathbb{E}_{\bm{\theta}}\abs{\widetilde{\mathcal{L}}-\widehat{\mathcal{L}}}^2}}=\Pr{\abs{\widetilde{\mathcal{L}}-\widehat{\mathcal{L}}}^2\geq\frac{1}{\delta}{\mathbb{E}_{\bm{\theta}}\abs{\widetilde{\mathcal{L}}-\widehat{\mathcal{L}}}^2}}\leq \delta.
 \end{equation}

Therefore, with probability at least $1-\delta$ over parameters $\bm{\theta
}$ , there is
\begin{equation}
  \abs{\widetilde{\mathcal{L}}-\widehat{\mathcal{L}}} \leq \frac{1}{\sqrt{\delta}}\sqrt{\mathbb{E}_{\bm{\theta}}\abs{\widetilde{\mathcal{L}}-\widehat{\mathcal{L}}}^2}\leq \sqrt{\frac{\nu}{\delta}}.
\end{equation}
Let $\varepsilon$ be the desired error, we can set $\nu=\varepsilon^2 \delta$ to meet the requirements, thus the truncation weight $M$ needs to satisfy
\begin{equation}
  M\geq \frac{1}{2\gamma} \ln{\frac{\norm{O}_\infty}{\varepsilon \sqrt{\delta}}}.
\end{equation}

Using the algorithm illustrated in Supplement Material~\ref{ap:algorithm}, we can obtain the approximate observable value $\widetilde{\mathcal{L}}$ for $M=\frac{1}{2\gamma} \ln{\frac{\norm{O}_\infty}{\varepsilon \sqrt{\delta}}}$.
The time complexity for obtaining $\widetilde{\mathcal{L}}$ for \textbf{Case 1} is:
\begin{equation}
\begin{aligned}
\mathrm{Poly}(n) \order{L} 2^{M}
&=\mathrm{Poly}(n) \order{L} \bigg(\frac{\norm{O}_\infty}{\varepsilon \sqrt{\delta}} \bigg)^{\order{\frac{1}{\gamma}}}\\
&=\mathrm{Poly}\left(n,L,\frac{1}{\varepsilon},\frac{1}{\sqrt{\delta}},\norm{O}_{\infty}\right).
\end{aligned}
\end{equation}

While dealing with \textbf{Case 2}, the required time complexity is about:
\begin{equation}
  \mathrm{Poly}(n)\order{(nL)^{M+1}}
  =
  \mathrm{Poly}(n)  \order{(nL)^{\frac{1}{2\gamma} \ln{\frac{\norm{O}_\infty}{\varepsilon \sqrt{\delta}}}+1}}.
\end{equation}

The relationship among the variables $n$, $L$, $\varepsilon$, $\delta$, and $\norm{O}_\infty$ is non-polynomial. Therefore, polynomial relations are inapplicable in this scenario, and only a quasi-polynomial relationship can be derived. 
Specifically, with a given relative error $\frac{\varepsilon}{\norm{O}_\infty}$ and probability of success $\delta$, the time complexity is limited by $\mathrm{Poly}(n)  (nL)^{\order{\frac{1}{\gamma}}}$, indicating that the computational complexity maintains a polynomial relationship with the circuit size when the required precision is held constant.

In terms of space complexity, by Eq.~\eqref{ap:eq:space_cost} and Proposition~\ref{prop:f_ele}, the required space complexity is $\order{\mathrm{Poly}(n)+nL}$.

This finished the proof of Theorem~\ref{thm:main}.
\end{proof}

\section{Proof of Proposition~\ref{prop:lambda_and_L}}
The Proposition~\ref{prop:lambda_and_L} discussed two situations for $\gamma=\Omega(\frac{1}{\log L})$ and $\gamma=\order{\frac{1}{L}}$ in the Case 1.

\begin{proofcase}
For $\gamma=\Omega(\frac{1}{\log L})$, we need to calculate $\widetilde{\mathcal{L}}$ with the MSE $\mathbb{E}_{\bm{\theta}}\abs{\widetilde{\mathcal{L}}-\widehat{\mathcal{L}}}^2$ less than a sufficiently small constant $c$.
By lemma~\ref{lemma:MSE_l}, we have $\mathbb{E}_{\bm{\theta}}\abs{\widetilde{\mathcal{L}}-\widehat{\mathcal{L}}}^2  \leq c$ can be satisfied when
 \begin{equation}
   M\geq \frac{1}{4\gamma}  \ln{\frac{\norm{O}_\infty^2}{c}} \sim \frac{1}{\gamma}.
 \end{equation}

By setting $M\sim\frac{1}{\gamma}$, the total runtime for obtaining observable $\widetilde{\mathcal{L}}$ is 
\begin{equation}
  \begin{aligned}
    \mathrm{Poly}(n) \order{L} 2^{M}
  =&\mathrm{Poly}(n,L)  2^{\order{\frac{1}{\gamma}}}\\
  =&\mathrm{Poly}(n,L)  L^{\order{1}}\\
  =&\mathrm{Poly}(n,L).
  \end{aligned}
\end{equation}
\end{proofcase}

\begin{proofcase}
For $\gamma=\order{\frac{1}{L}}$, we will construct a specific example, under which our method will have to incur exponential time cost with respect to $L$ in order to achieve a sufficiently small MSE.

We consider a special VQA algorithm, the ans\"atz consists of a layer of $R_Z$ gates acted on each qubit, a layer of $R_X$ gates acted on each qubit and $L-2$ layers $R_X$ gates acted on the first qubit, shown in Fig.~\ref{ap:prop:fig:ansatz}. 
The initial state is set as $\rho=\ketbra{0}{0}^{\otimes n}$, and the observable $O=Z_1+Y_1$, the cost function is defined as Eq.~\eqref{ap:eq:define_L}. The noise channel is set as depolarizing channel $p_x = p_y = p_z = \frac{\lambda}{4}$ and we get $\lambda=\order{\frac{1}{L}}$

\begin{figure}[tbp]
  \includegraphics[width=0.48\textwidth]{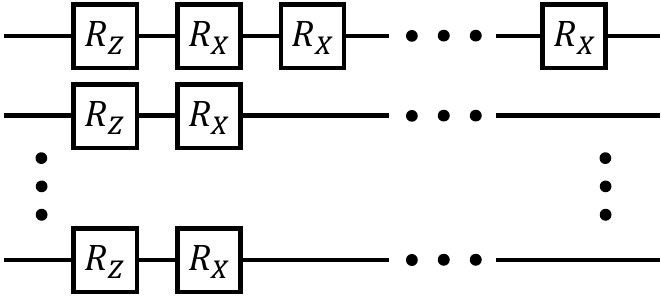}
  \caption{Ans\"atz: There is a layer of $R_Z$ gates acted on each qubit, a layer of $R_X$ gates acted on each qubit and $L-2$ layers $R_X$ gates acted on the first qubit.}\label{ap:prop:fig:ansatz}
  \end{figure}

If we truncate noisy cost function for $\abs{s}\leq L$, the approximate cost function can be expressed as:
\begin{equation}
\widetilde{\mathcal{L'}}(\bm{\theta})=\sum_{\abs{s}\leq L} \hat{f}(\bm{\theta},s,O,\rho)=\sum_{m=0}^L (1-\lambda)^m \sum_{\abs{s}=m} f(\bm{\theta},s,O,\rho).
\end{equation}

Before considering the difference between $\widetilde{\mathcal{L'}}$ and $\widehat{\mathcal{L}}$, we first consider the noiseless situation. The unitary on the first qubit can be expressed as :
\begin{equation}
    U(\mathbb{\theta})|_1=\exp{-i\frac{\theta_{2,1}+\cdots+\theta_{L,1} }{2}X_1}\exp{-i\frac{\theta_{1,1}}{2}Z_1}.
\end{equation}
We denote $\alpha=\frac{\theta_{2,1}+\cdots+\theta_{L,1} }{2}$, the noiseless cost function can be expressed as:
\begin{equation}\label{ap:eq:noiseless_cost}
\begin{aligned}
    \mathcal{L}(\mathbb{\theta})&=\bra{0}\exp{-i\frac{\theta_{1,1}}{2}Z_1}^\dagger\exp{-i \alpha X_1}^\dagger (Z_1+Y_1) \exp{-i \alpha X_1}\exp{-i\frac{\theta_{1,1}}{2}Z_1}\ket{0}\\
    &=\cos{2\alpha}-\sin{2\alpha}.
\end{aligned}
\end{equation}

Note that in Supplement Material~\ref{ap:algorithm}, we have discussed for Pauli path $s$ with non-zero contribution, must have $\abs{s_i} > 0$ for $i=0,\cdots,L$. Thus $\abs{s}\geq L+1$ is required.

Conversely, when $\abs{s}> L+1$, there exist a qubit $k\neq 1$ such that $s_{i'}|_k$ is not identity for some $i'$, leads to $f(\bm{\theta},s,O,\rho)=0$ by $O|_k=\mathbb{I}$. Thus, the Pauli paths $s$ with non-zero contributions to $\mathcal{L}(\theta)$ must obey $\abs{s}=L+1$.
As a result, we can conclude that
\begin{equation}\label{ap:eq:noiseless_cost_2}
\begin{aligned}
\mathcal{L}(\mathbb{\theta})=\sum_{s\in \bm{P}^{L+1}_n} f(\bm{\theta},s,O,\rho)=\sum_{\abs{s}=L+1} f(\bm{\theta},s,O,\rho).
\end{aligned}
\end{equation}

Combine Eq.~\eqref{ap:eq:noiseless_cost} and Eq.~\eqref{ap:eq:noiseless_cost_2}, we know
\begin{equation}
    \mathbb{E}_{\bm{\theta}}\mathcal{L}(\mathbb{\theta})^2=\mathbb{E}_{\bm{\theta}} \left[\sum_{\abs{s}=L+1} f(\bm{\theta},s,O,\rho) \right]^2=\mathbb{E}_{\alpha}(\cos{2\alpha}-\sin{2\alpha})^2=\mathbb{E}_{\alpha}[1-\sin4\alpha]=1.
\end{equation}
The last equality in the above equation can be verified as follows. Since $\alpha$ follows a generalized Irwin-Hall distribution, and its characteristic function $\varphi_{\alpha}(t)=\mathbb{E}[e^{it\alpha}]$ can be expressed as $\left(\frac{e^{i\frac{\pi}{2}t}-e^{-i\frac{\pi}{2}t}}{i\pi t}\right)^{L-1}$, we have
\begin{equation}
\mathbb{E}_{\alpha}[\sin4\alpha]=\textrm{Im }\mathbb{E}[e^{i4\alpha}]=\textrm{Im }\varphi_{\alpha}(4)=\textrm{Im }\left(\frac{e^{2\pi i}-e^{-2\pi i}}{4\pi i}\right)^{L-1}=0.
\end{equation}

The MSE between $\widetilde{\mathcal{L'}}$ and $\widehat{\mathcal{L}}$ can be estimated as 
\begin{equation}
\begin{aligned}
\mathbb{E}_{\bm{\theta}}\abs{\widetilde{\mathcal{L'}}-\widehat{\mathcal{L}}}^2 
&=\mathbb{E}_{\bm{\theta}}\left[\sum_{\abs{s}> L}\hat{f}(\bm{\theta},s,O,\rho)\right]^2\\
&=\mathbb{E}_{\bm{\theta}}\left[\sum_{\abs{s}=L+1}\hat{f}(\bm{\theta},s,O,\rho)\right]^2\\
&= (1-\lambda)^{2(L+1)}\mathbb{E}_{\bm{\theta}} \left[\sum_{\abs{s}=L+1}{f}(\bm{\theta},s,O,\rho)\right]^2\\
&=(1-\lambda)^{2(L+1)}.
\end{aligned}
\end{equation}

By Bernoulli's inequality, for $r\geq 1$ and $x\geq -1$, we have
\begin{equation}
(1+x)^r\geq 1+rx.
\end{equation}

Owing to $\lambda=\order{\frac{1}{L}}$, there is a constant $c$ to make $c\geq \lambda  (L+1)$. Therefore, we have
\begin{equation}
(1-\lambda)^{2(L+1)}= (1-\lambda)^{\frac{2(L+1)}{4c} 4c} \geq \left(1-\lambda \frac{2(L+1)}{4c} \right)^{4c}\geq \left(\frac{1}{2}\right)^{4c},
\end{equation}
leads to
\begin{equation}
    \mathbb{E}_{\bm{\theta}}\abs{\widetilde{\mathcal{L'}}-\widehat{\mathcal{L}}}^2 =\Omega(1).
\end{equation}

So the truncation $\abs{s}\leq L$ is not enough.
While the number of Pauli paths with non-zero contributions and weight $\abs{s}=L+1$ is about $2^{L-1}$ ($s_0=Z_1,s_1=Z_1,s_2=Z_1 \mathrm{or} \;Y_1,\cdots,s_L=Z_1 \mathrm{or} \;Y_1$).
This leads to exponential complexity about $L$.
\end{proofcase}

\section{Numerical: Computational cost analysis}
\label{sec:ap:computationalanalysis}

\subsection{The relationship between computational cost and variables $n$,$L$ and $M$}

The computational cost of our method is positively correlated to the number of Pauli paths that have a non-zero contribution.
We performed a numerical analysis on the quantity of Pauli paths with non-zero contributions, which serves as a measure of its computational complexity.

\begin{figure}[htbp]	\includegraphics[width=0.5\textwidth]{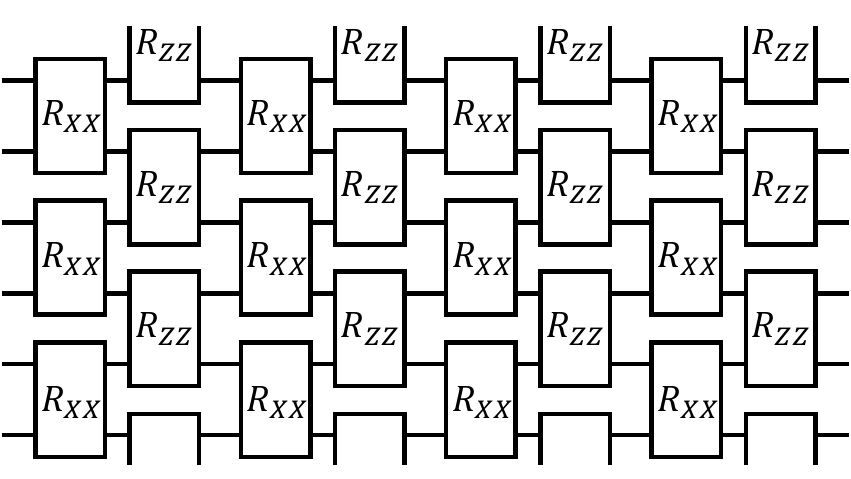}
	\caption{The ans\"atz used in the simulation consists of interleaved $R_{XX}$ and $R_{ZZ}$ Pauli rotating gate layers, where the incomplete $R_{ZZ}$ gates on the first qubit signify that they operate on the first and the last qubit.}\label{fig:XX_ZZ_Ansatz}
\end{figure}

For a given qubit number $n$, we randomly generate a $n\times n$ adjacency $(0,1)$-matrix $A$, in which the probability of each entry being $1$ is $0.5$. The observable $O$ is constructed based on the MaxCut problem and is related to the adjacency matrix as $O=\sum_{A_{i,j}=1} Z_iZ_j$. The initial state is set as $\rho=\ketbra{0}{0}^{\otimes n}$, and the cost function is defined by Eq.~\eqref{ap:eq:define_L}.
The ans\"atz used, shown in Fig.~\ref{fig:XX_ZZ_Ansatz}, comprises Pauli rotation gates including $R_{XX}$ and $R_{ZZ}$.

\begin{figure}[htbp]
  \includegraphics[width=\textwidth]{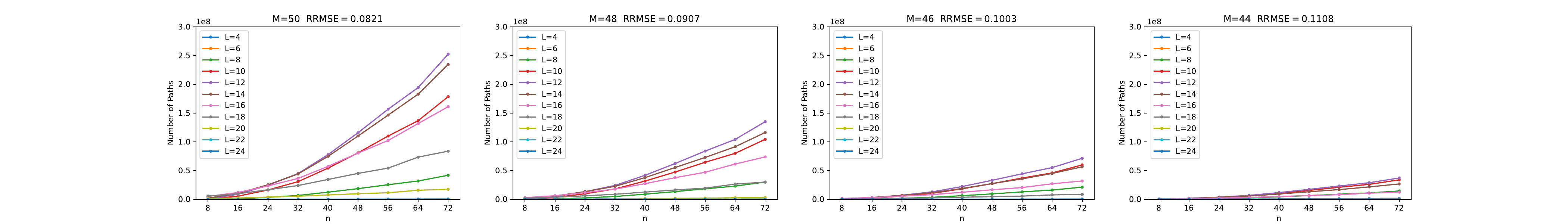}
  \includegraphics[width=\textwidth]{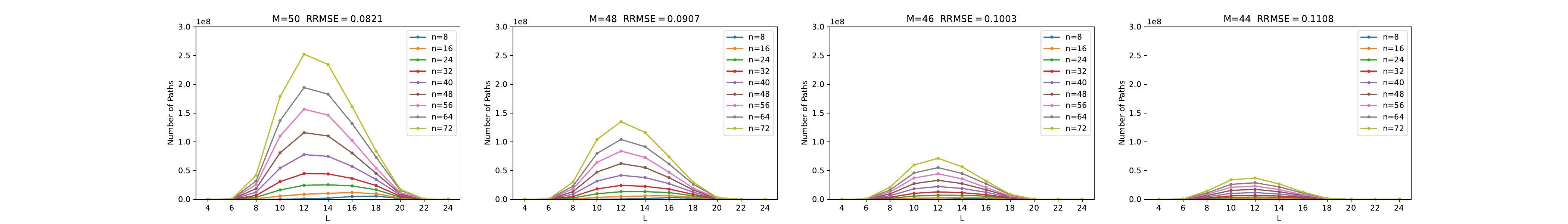}
  \caption{Numerical results for the number of Pauli paths $s$ with non-zero contributions and $\abs{s}\leq M$. The NRMSE bounds are given by relation established in Lemma~\ref{lemma:MSE_l} under noise rate $p_x = p_y = p_z = \frac{\lambda}{4}$, where $\lambda=0.2$. Here, $n$ represents the qubit number and $L$ represents the depth of the circuit. The two figures in each column represent different presentations of the same data. The top and bottom figures show the trend of the number of paths with $n$ and $L$, respectively.}\label{fig:numerical_cost}
\end{figure}

For different $n$ and $L$, we calculate the number of Pauli paths $s$ with weight $\abs{s}\leq M$ and non-zero contribution, shown in Fig.~\ref{fig:numerical_cost}. Our numerical findings reveal that the number of Pauli paths is significantly smaller than the upper bound $\mathrm{Poly}(n)2^M$, as suggested by the theoretical analysis.
In the case of a fixed $M$, as $n$ increases, the number of non-zero contributing Pauli paths will increase in a moderate manner.

By Lemma~\ref{lemma:MSE_l}, it is worth noting that $M$ corresponds to a Normalized Root-Mean-Square Error bound (NRMSE) given by $\frac{\sqrt{\mathbb{E}_{\bm{\theta}}\abs{\widetilde{\mathcal{L}}-\widehat{\mathcal{L}}}^2}}{\norm{O}_\infty}$ under a fixed noise rate. Setting $p_x = p_y = p_z = \frac{\lambda}{4}$ and $\lambda=0.2$, we observe that for a given NRMSE bound, the number of paths first increases and then decreases with the increase of $L$, which aligns with the findings in Ref.~\cite{noh2020efficient}.

On the other hand, $M$ also corresponds to a noise rate $\lambda$ under a fixed NRMSE bound by Eq.~\ref{ap:eq:depolarizing_M}. With NRMSE bound $0.0821$, $M$ takes the values $M=50,48,46,44$ for $\lambda=0.2,0.21,0.22,0.23$ (two significant digits), respectively. From this, it can be seen that the noise rate has a significant impact on computational complexity.

\subsection{The relationship between $M$ and circuit scale}

In this section, we investigate the varying circuit scale requirements for the truncation number $M$ in the context of achieving dependable accuracy. In this section the noise channel is set as depolarizing channel $p_x = p_y = p_z = \frac{\lambda}{4}$.

Similar to the previous section, we also take the observable derived from the MaxCut problem as an example to numerically analyze the relationship between $M$ and the circuit scale. 
For a given qubit number $n$, we randomly generate a $n\times n$ adjacency $(0,1)$-matrix, in which the probability of each entry being $1$ is $0.5$. The observable is $O=\sum_{A_{i,j}=1} Z_iZ_j$. The initial state is set as $\rho=\ketbra{0}{0}^{\otimes n}$.
The ans\"atz used, is an example of hardware-efficient ans\"atz~\cite{kandala2017hardware}, shown in Fig.~\ref{fig:XX_ZZ_Ansatz}. The depth of the circuit is $4\times D+2$, and the number of parameters is $2\times D+2$.

\begin{figure}[htbp]	\includegraphics[width=0.5\textwidth]{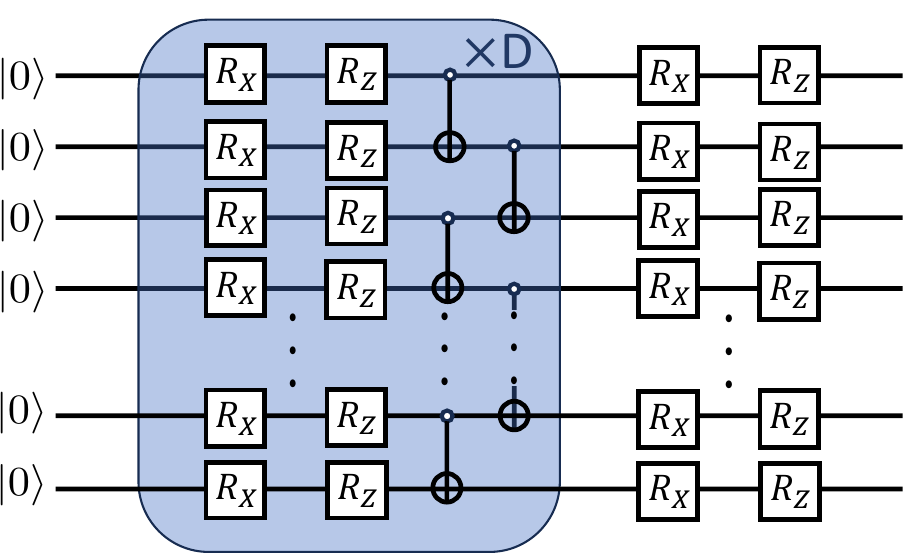}
	\caption{The hardware-efficient ans\"atz employed consists of a series of single-qubit rotation and entangling 2-qubit gates. 
  The variable representing the number of repetitions for the single-qubit rotations and two-qubit gates (CNOT gates) is denoted as $D$.
  Notably, the application of 2-qubit gates is limited to neighboring qubits, a design choice tailored for NISQ devices.}\label{fig:HEA_Ansatz}
\end{figure}

We randomly generate $100$ parameters, denoted as $\bm{\theta^1}\cdots \bm{\theta^{100}}$, and find the minimal truncation number $M$ that satisfies the relative difference between the approximate noisy cost function $\widetilde{\mathcal{L}}(\bm{\theta})$ and the exact noisy cost function $\widehat{\mathcal{L}}(\bm{\theta})$ is below $0.1\%$ for all the $100$ points, formulated as $\max_{i}\abs{\widetilde{\mathcal{L}}(\bm{\theta^i})-\widehat{\mathcal{L}}(\bm{\theta^i})}\leq 0.001 \abs{\widehat{\mathcal{L}}(\bm{\theta^i})}$.

\begin{figure}[htbp]
  \includegraphics[width=0.4\textwidth]{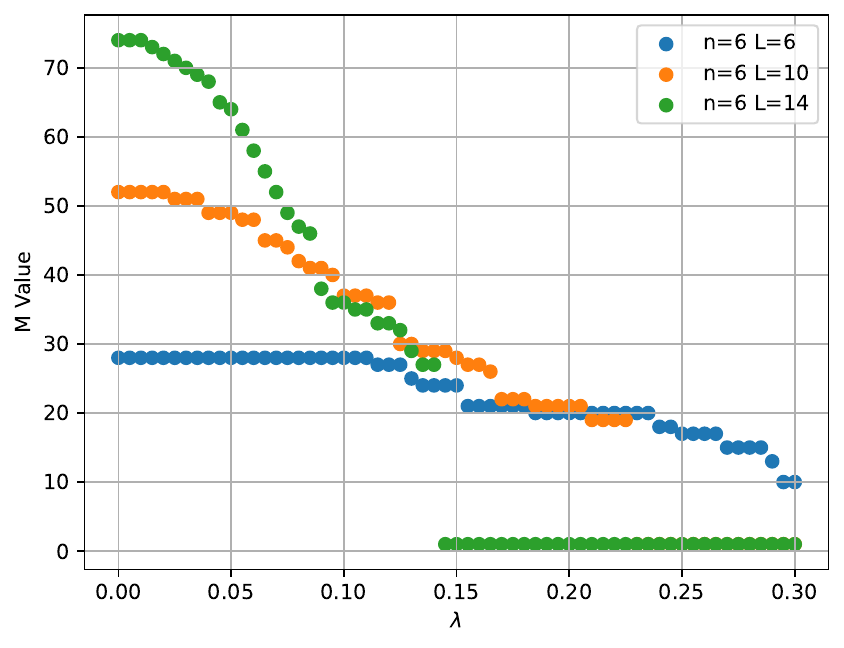}
  \includegraphics[width=0.4\textwidth]{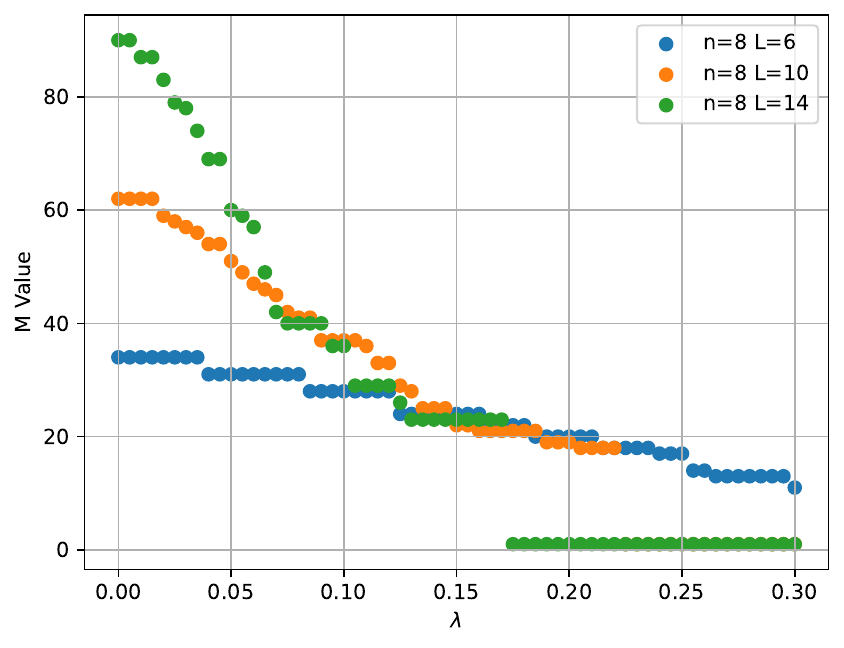}
  \includegraphics[width=0.4\textwidth]{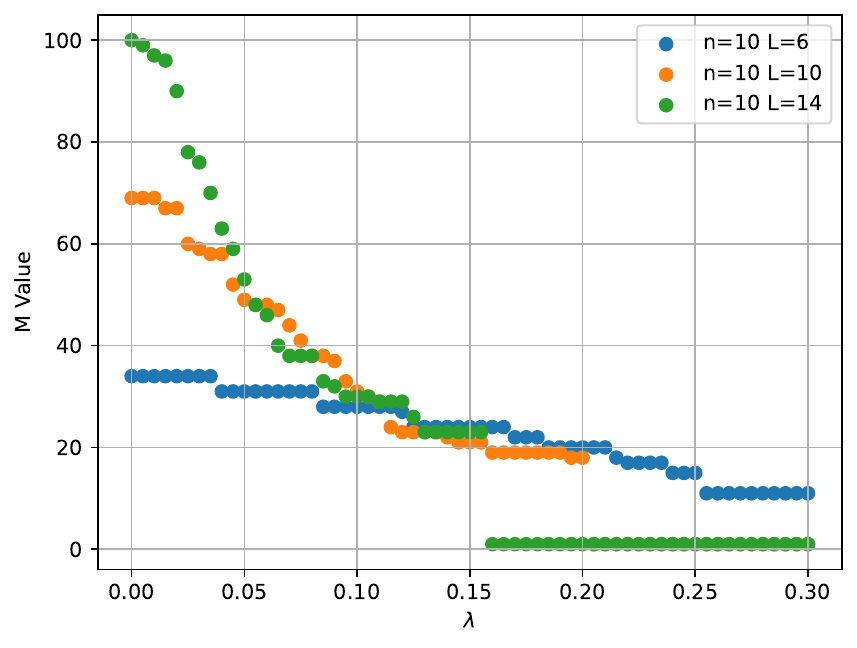}
  \includegraphics[width=0.4\textwidth]{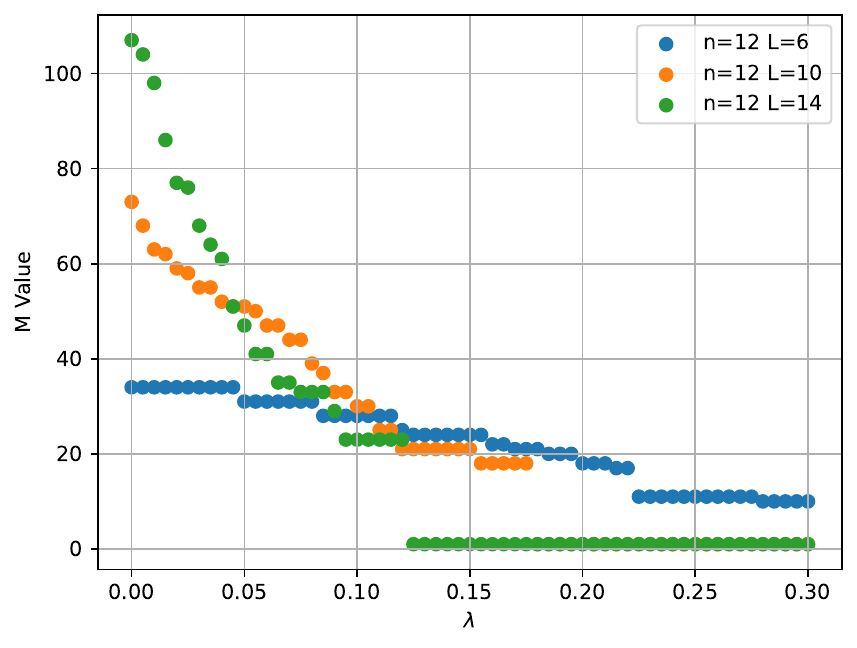}
  \caption{The numerical results for the minimal truncation $M$, satisfying the condition that the difference between approximate and exact expected values $\max_{i}\abs{\widetilde{\mathcal{L}}(\bm{\theta^i})-\widehat{\mathcal{L}}(\bm{\theta^i})}/\abs{\widehat{\mathcal{L}}(\bm{\theta^i})}\leq 0.001$ for $100$ randomly picked parameters $\bm{\theta^1}\dots \bm{\theta^{100}}$. 
  The minimum truncation value $M$ is compared across various circuit scales and noise rates.}\label{fig:numerical_M}
\end{figure}
The results are shown in Fig.~\ref{fig:numerical_M}.
First, we observe that the minimal truncation $M$ decreases with the noise rate $\lambda$ in all cases. This trend is consistent with the theoretical analysis in Lemma~\ref{lemma:MSE_l}.
The numerical results reveal that as the noise rate remains at a low level, the minimum truncation value $M$ escalates linearly with the circuit depth $L$.
This trend arises due to the minor influence of noise on Pauli paths, resulting in an increased propagation distance of Pauli paths within shallow circuits as the circuits become deeper, consequently elevating their Hamming weight. Consequently, a heightened requirement for the truncation value $M$ is observed in deep circuits to uphold the relevant Pauli path in shallow circuits.

When noise levels escalate, the depth of the circuit plays a crucial role in determining the rate at which the minimum truncation value $M$ decreases. This phenomenon occurs due to the accumulation of noise intensifying as the circuit depth increases, leading to a swifter convergence of the expected value $\widehat{\mathcal{L}}(\bm{\theta})$ to $\tr{O}$ (the contribution of the fully-identify travail Pauli path). Consequently, as the circuit depth deepens, the minimum truncation value $M$ converges more rapidly towards $0$.

In addition, as the number of qubits $n$ increases, the minimum truncation value $M$ also tends to increase.
Nevertheless, this escalation is modest, indicating that the OBPPP approach is not unduly constrained by the scale of qubit number $n$.

The investigation also explored the correlation between accuracy and the minimal truncation number $M$ within a specified circuit scale. When $n = 12$, we identified the minimum value of $M$ needed to achieve varying levels of accuracy, where $\max_{i} \sfrac{\left| \widetilde{\mathcal{L}}(\bm{\theta^i}) - \widehat{\mathcal{L}}(\bm{\theta^i}) \right|} {\abs{\widehat{\mathcal{L}}(\bm{\theta^i})}} \leq 0.1, 0.05, 0.01, 0.001, 0.0001$ and $0.00001$.
The results are depicted in Fig.~\ref{fig:numerical_ACC}, which demonstrates that the minimal truncation value $M$ increases with the required accuracy. 

\begin{figure}[htbp]
  \includegraphics[width=0.3\textwidth]{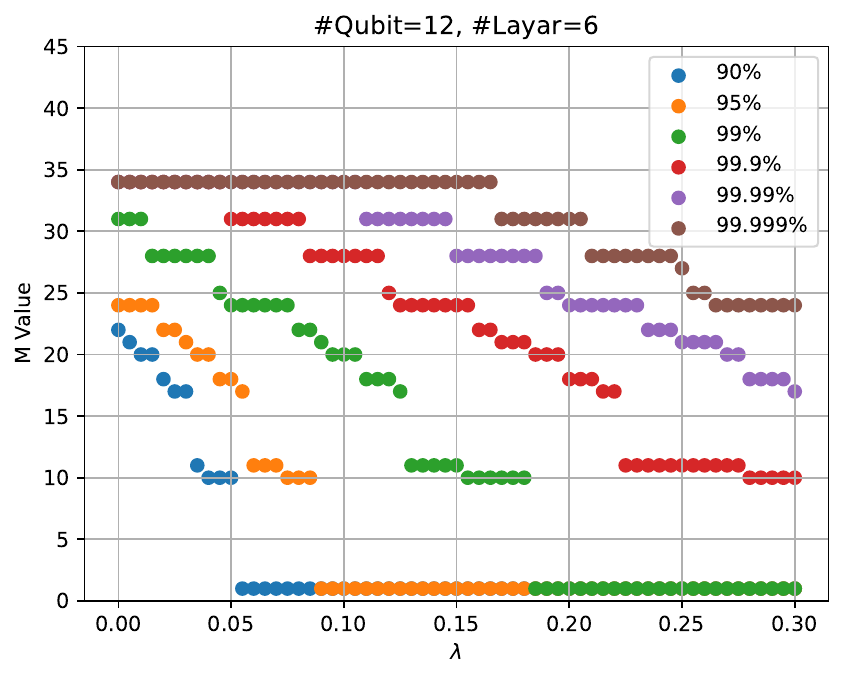}
  \includegraphics[width=0.3\textwidth]{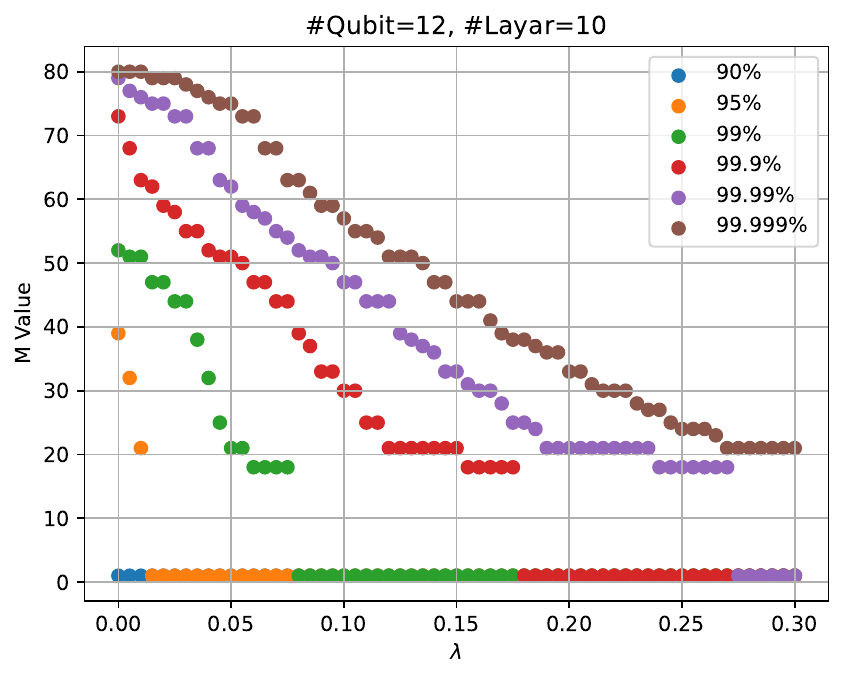}
  \includegraphics[width=0.3\textwidth]{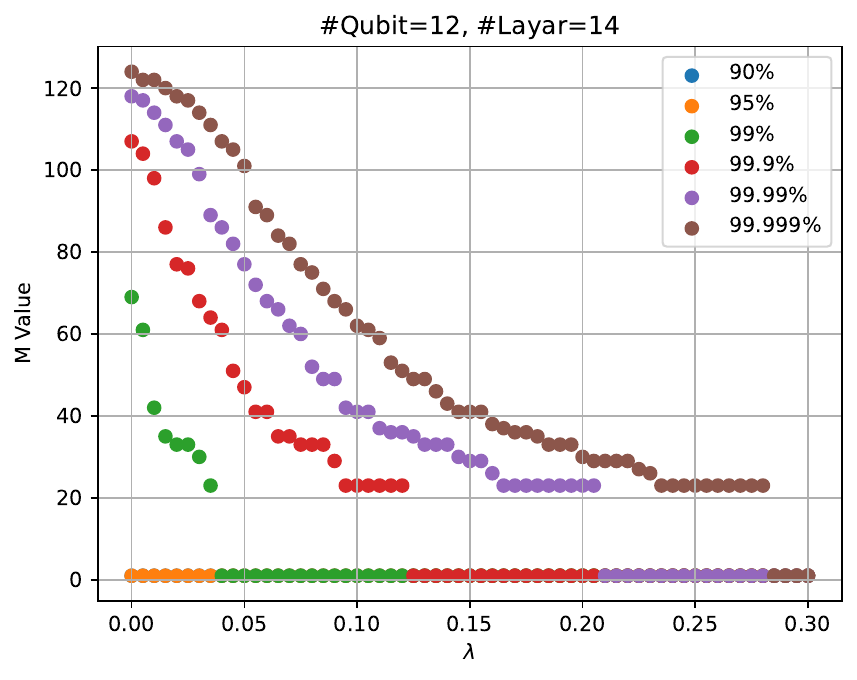}
  \caption{
    The numerical results for the minimum truncation $M$, are presented with respect to varying levels of accuracy. $\max_{i} \abs{\widetilde{\mathcal{L}}(\bm{\theta^i})-\widehat{\mathcal{L}}(\bm{\theta^i})}/\abs{\widehat{\mathcal{L}}(\bm{\theta^i})}\leq 0.1 ,\dots, 0.00001$ corresponding to $90\%,\dots,99.999\%$ for $100$ randomly picked parameters $\bm{\theta^1}\dots \bm{\theta^{100}}$.}\label{fig:numerical_ACC}
\end{figure}
From the numerical results, it can be observed that when the demand for accuracy rapidly increases, the increase of the demand for $M$ is relatively moderate. 
In the scenario where $L=6$, the minimum $M$ needed to achieve an accuracy of $>99.9\%$ is approximately $34$ with $\lambda<0.05$. It is worth noting that the outcomes corresponding to $99.9\%$ and $99.99\%$ accuracy are obscured by the $99.999\%$ result displayed in the Figure.
When the noise rate $\lambda$ approaches $0$, the $M$ required for high accuracy~($\geq 99.9\%$) converges to the nearly same location, suggesting that at this point, $M$ is sufficiently large to encompass an adequate number of Pauli paths in the computation.

When the noise rate $\lambda$ increases, the $M$ required for low accuracy will decrease faster compared to high accuracy, and this trend is consistent with Lemma~\ref{lemma:MSE_l}.

For the minimal $M$, we empirically estimate the mean-square error $\tilde{\nu}$ using the formula:
\begin{equation}\label{ap:eq:emse}
  \tilde{\nu}=\frac{1}{100}\sum_{i=1}^{100}\abs{\widetilde{\mathcal{L}}(\bm{\theta^i})-\widehat{\mathcal{L}}(\bm{\theta^i})}^2,
\end{equation}
and then compare $M$ with the analytical bound given by Eq.~\ref{ap:eq:depolarizing_M}:
\begin{equation}\label{ap:eq:analytical_bound}
  \frac{1}{2\lambda}\ln{\frac{\norm{O}_\infty^2}{\tilde{\nu}}}.
\end{equation}

\begin{figure}[htbp]
  \includegraphics[width=0.3\textwidth]{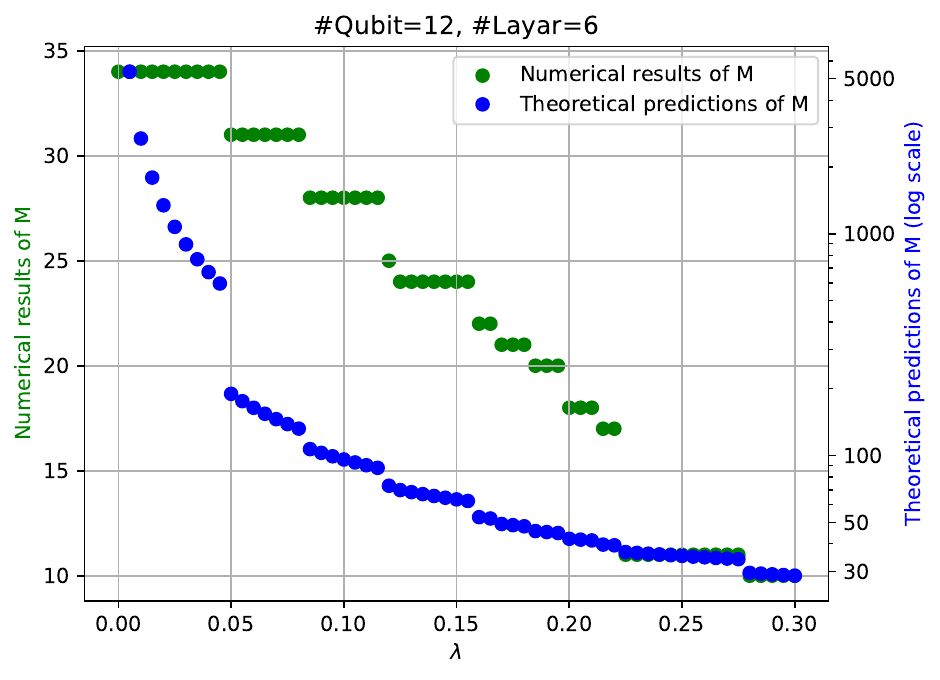}
  \includegraphics[width=0.3\textwidth]{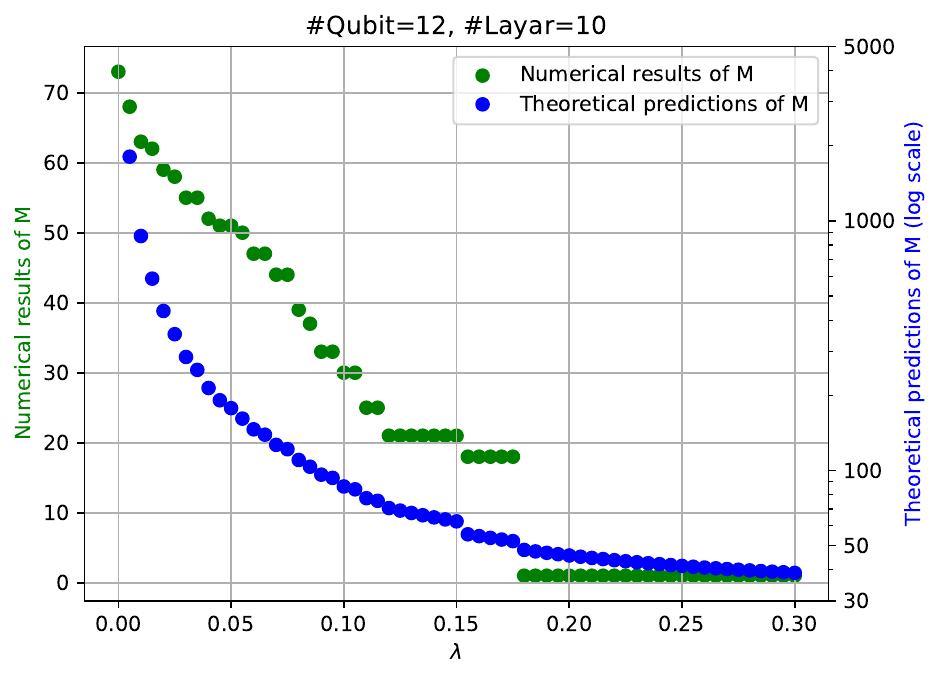}
  \includegraphics[width=0.3\textwidth]{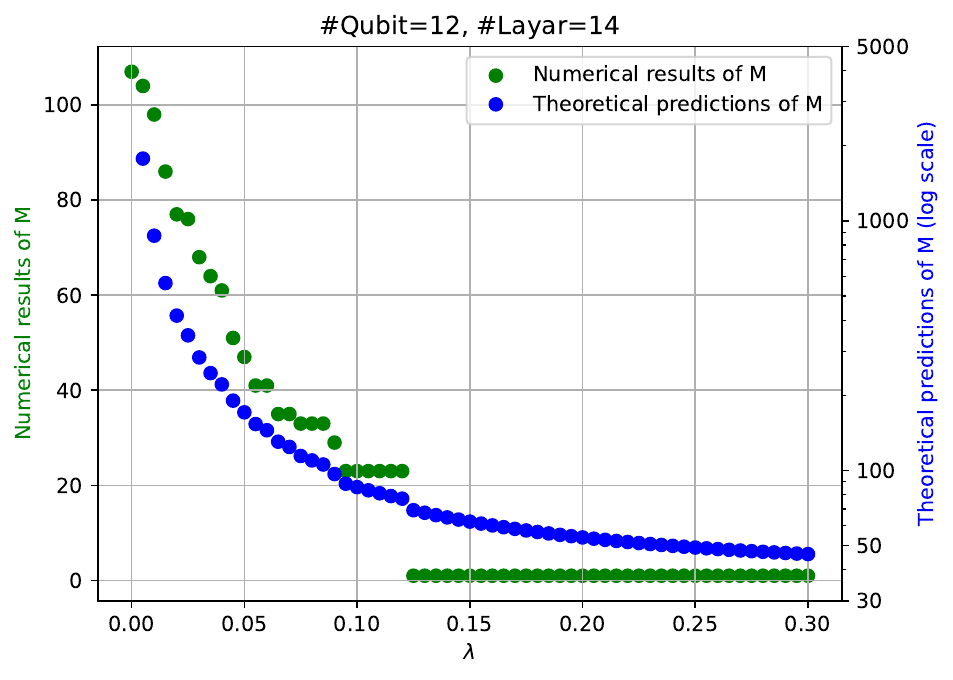}
  \caption{
    The comparison between the minimal truncation $M$ and the analytical bound \eqref{ap:eq:analytical_bound}, referred to as \textit{``Theoretical predictions of M"}. For clarity, logarithmic axes are utilized for the \textit{``Theoretical predictions of M"} data, shown on the right side.}\label{fig:numerical_M2}
\end{figure}
The corresponding results for qubit number $n=12$ are depicted in Fig.~\ref{fig:numerical_M2}~(the other cases are similar). 
To avoid the denominator of Eq.~\eqref{ap:eq:analytical_bound} being $0$, the analytical bound with $\lambda=0$ is removed.
The results suggest that the required value of $M$ to attain acceptable accuracy in practical situations is significantly lower than the theoretical estimate boundary~($\sim 100$ vs $\sim 5000$).

In the end, we demonstrate the changes in the empirical mean squared error $\tilde{\nu}$ as $M$ increases to show the convergence trend, for $n=12$ data, as shown in Fig.~\ref{fig:numerical_MSE}

\begin{figure}[htbp]
  \includegraphics[width=0.3\textwidth]{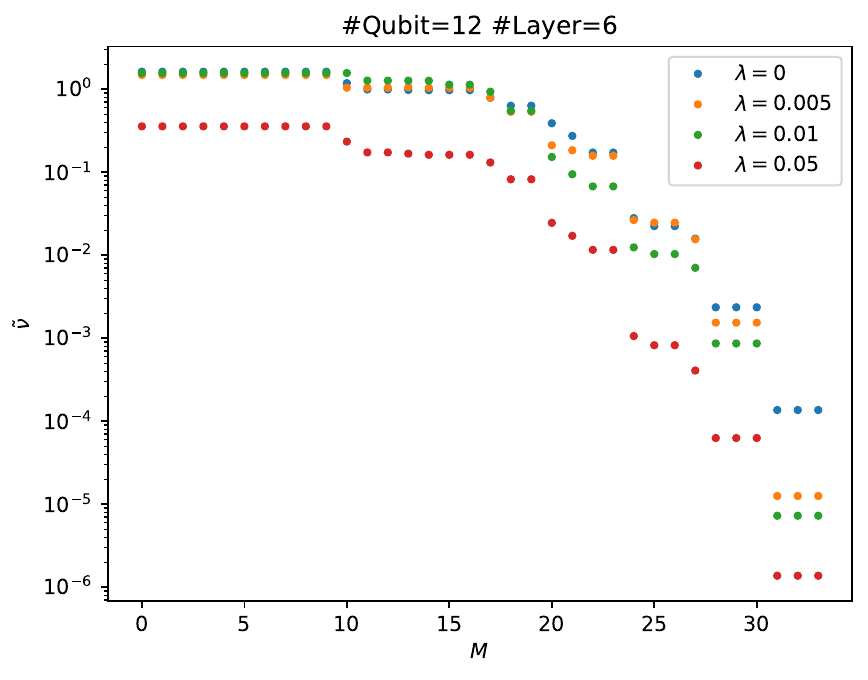}
  \includegraphics[width=0.3\textwidth]{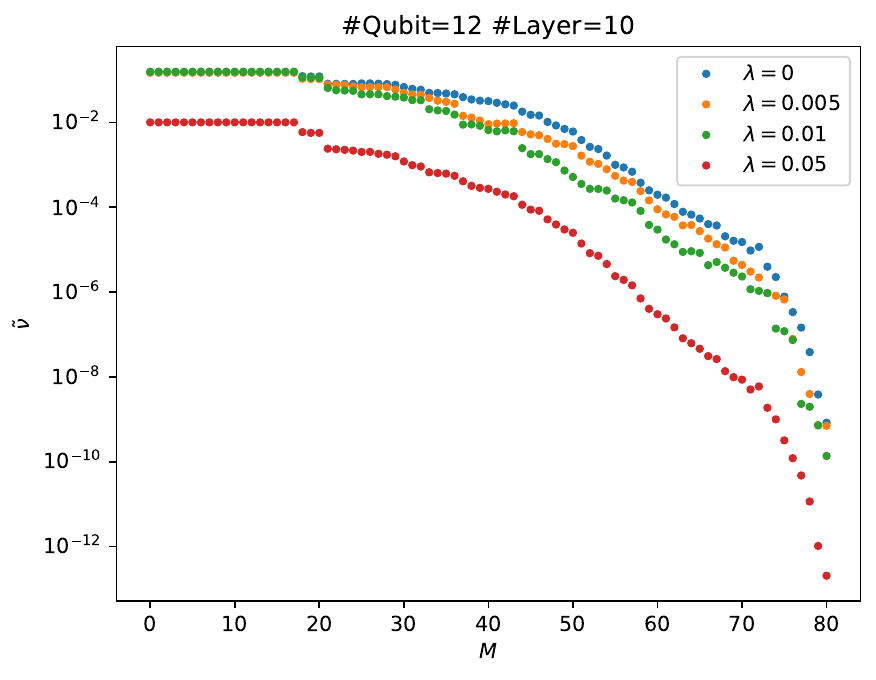}
  \includegraphics[width=0.3\textwidth]{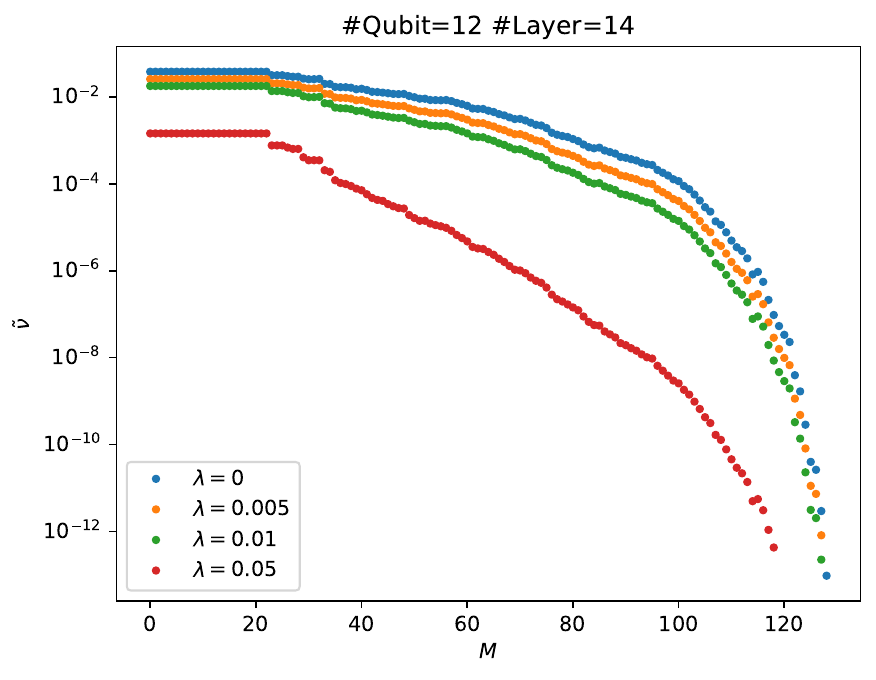}
  \caption{The convergence trend as $M$ increases for various noise rates $\lambda$. The y-axis represents the empirical mean squared error, as indicated in Equation~\ref{ap:eq:emse}, and is displayed on a logarithmic scale. In addition, in the graph \#Layer=6, the result for $M=34$ was omitted because the value of these points dropped below $e^{-20}$.}\label{fig:numerical_MSE}
\end{figure}

\section{Numerical: Details about simulation on IBM's Eagle proccessor}
\label{sec:ap:Numerical:IBM}
\begin{figure}[htbp]	
  \includegraphics[width=1\textwidth]{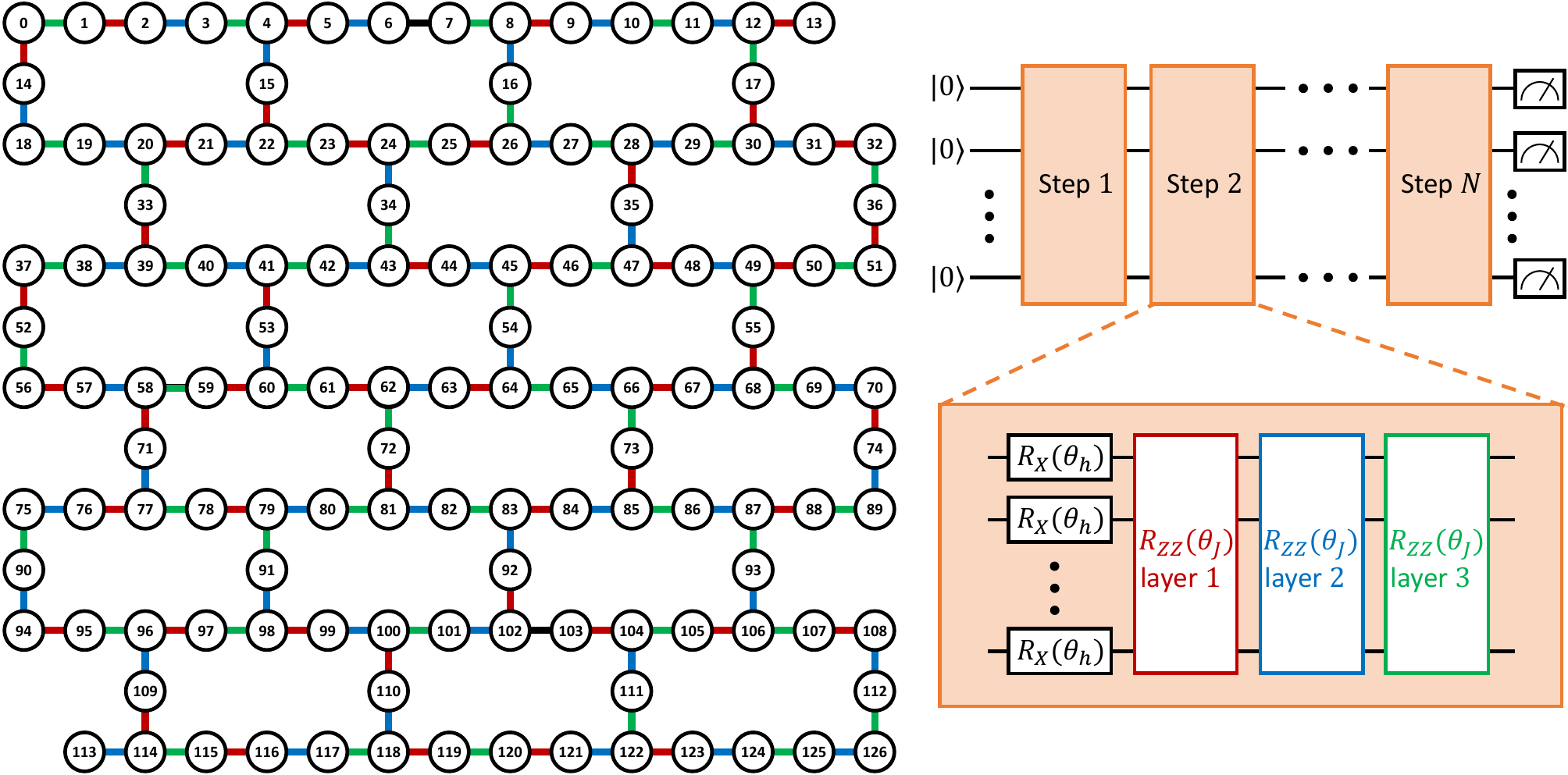}
	\caption{The circuit for first-order Trotterized time evolution in IBM experiments consists of $N$ Trotter steps. Each step contains one layer of $R_X$ gates with a common rotation angle $\theta_h$ applied to each qubit, and three layers of $R_{ZZ}$ gates with a common rotation angle $\theta_J$, whose acting qubits are depicted in the Eagle processor topology on the left and labeled with the same color.}\label{fig:IBM_Ansatz}
\end{figure}

In Ref.~\cite{kim2023evidence}, IBM reported experiments on a 127-qubit Eagle processor and demonstrated the measurement of accurate expectation values.  The benchmark circuits used were constructed from the Trotterized time evolution of a 2D transverse-field Ising model, which was designed to mirror the topology of the Eagle processor.
The time dynamics of the system are governed by the Hamiltonian
\begin{equation}
  H=-J\sum_{\langle i,j\rangle} Z_iZ_j+h\sum_i X_i,
\end{equation}
where $J$ is the coupling strength, $h$ is the transverse field strength, and $\langle i,j\rangle$ denotes the nearest-neighbor qubit pairs.
Spin dynamics can be simulated through the first-order Trotterized time evolution of the Hamiltonian, which is given by
\begin{equation}
  U(\tau)=\exp{-i\tau H}=\prod_{\langle i,j\rangle} \exp{i\tau J Z_iZ_j}\prod_i \exp{-i\tau h X_i}+\order{\tau^2}=\prod_{\langle i,j\rangle}R_{Z_iZ_j}(-2J\tau) \prod_i R_X(2h\tau)+\order{\tau^2},
\end{equation}
in which the evolution time $T$ is discretized into $N$ Trotter steps, with a single step evolution time of $\tau = \frac{T}{N}$.
The Trotterized time evolution is implemented by the ans\"atz shown in Fig.~\ref{fig:IBM_Ansatz}, in which a single step is composed of one layer of $R_X$ gates and three layers of $R_{ZZ}$ gates. The initial state is set as $\rho=\ketbra{0}{0}^{\otimes 127}$. For simplicity, IBM chooses $\theta_J=-2J\tau=-\frac{\pi}{2}$ and considers $\theta_h=2h\tau$ to be in the range $[0,\frac{\pi}{2}]$. In this simulation, we assumed there are depolarizing noises $p_x = p_y = p_z = \frac{\lambda}{4}$ in the hardware.

In the implementation of the simulation algorithm, we initially find all the Pauli paths that satisfy the condition $\abs{s}\leq M$ and have a non-zero contribution, using the back-propagation method described in Supplement Material~\ref{ap:algorithm}. Subsequently, we transform these Pauli paths into trigonometric polynomials according to Eq.~\eqref{ap:eq:f}, Prop.~\ref{prop:f_ele} and Lemma~\ref{lemma:f_noisy}. Finally, we calculate the expectation value by substituting different variables for trigonometric polynomials and summing them.
The code implementation of simulating IBM experimental results is summarized in Algorithm~\ref{ALGORITHM_IBM}.

\begin{algorithm}[H]
\caption{Pseudo-code for estimating analytical expressions of IBM experimental results}\label{ALGORITHM_IBM}
\begin{algorithmic}
  \State 
  \State Set empty list $Pauli\_Path\_mL$ to store Pauli paths which are back-propagated to the $mL$-th layer.
  \State Enumerate $s_L$ as all Pauli words with non-zero coefficient in $O$.
  \For{Pauli path elements $s_{L}$}
  \State{According to the $L$-th layer of the circuit, generate Pauli path elements $s_{L-1}$.}
    \For{Pauli path elements $s_{L-1}$}
      \State{According to the $(L-1)$-th layer of the circuit, generate Pauli path elements $s_{L-2}$.}
      \State{\vdots}
      \For{Pauli path elements $s_{mL}$}
      \State Append pauli path $s=(s_{mL},\cdots,s_L)$ into list $Pauli\_Path\_mL$.
      \EndFor
  \EndFor
  \EndFor
  \Comment{\textbf{Parallelized depth-first search}}
  \State Set empty list $Pauli\_Path$ to store Pauli paths with non-zero contributions and Hamming weights $\leq M$.
  \For{Pauli path elements $s_{mL}$ in list $Pauli\_Path\_mL$}
  \State{According to the $mL$-th layer of the circuit, generate candidates of $s_{mL-1}$.}
  \State{Eliminate cases with $\abs{s_{L}}+\cdots+\abs{s_{mL-1}} > M-(mL-1)$.}
      \State{\vdots}
      \For{Pauli path elements $s_1$}
      \State{According to the $1$-th layer of the circuit, generate candidates of $s_{0}$.}
      \State{Eliminate cases with $\abs{s_{L}}+\cdots+\abs{s_{0}}> M$.}
      \For{Pauli path elements $s_{0}$}
      \If{$X/\sqrt{2}$ or $Y/\sqrt{2}$ in $s_{0}$}
      \State{\textbf{Pass} Because of $\Tr{\ketbra{0}{0}s_0}=0$.}
      \Else
      \State Append pauli path $s=(s_0,\cdots,s_L)$ into list $Pauli\_Path$.
      \EndIf
      \EndFor
  \EndFor
  \EndFor

  \State Calculate the analytical expression by $\widetilde{\mathcal{L}}=\sum_{s \in Pauli\_Path} (1-\lambda)^\abs{s} f(\bm{\theta},s,O,\rho)$.
\end{algorithmic}
\end{algorithm}

In Fig.~\ref{fig:IBM} within the main body, we compare the results of our method with IBM's experiment results both before and after Error mitigation (zero-noise extrapolation).
In order to compare the unmitigated results, we employed a classical optimizer to minimize the distance between experimental dataset $\{(\theta_h,y_{\theta_h})\}$ and our approximate noisy cost function $\widetilde{\mathcal{L}}$, formalized as 
\begin{equation}\label{ap:eq:minimal_lambda}
  \lambda=\arg\min_\lambda  \sqrt{\sum_{(\theta_h,y_{\theta_h})\in \mathrm{data set}}\abs{\widetilde{\mathcal{L}}(\theta_h)-y_{\theta_h}}^2}.
\end{equation}
We utilized the SLSQP optimizer to find $\lambda$, which is integrated within the scipy package~\cite{2020SciPy-NMeth}.

The circuits employed in Figure~\ref{fig:IBM} in the main body are described as follows:
\begin{enumerate}
  \item In Fig.(a)-(c), the Trotter step is set as $N=5$, corresponding to a circuit with depth $L=20$.
  \item In Fig.(d), the Trotter step is set as $N=5$ and there is an additional layer of $R_{X}$ gates applied at the end of the circuit, corresponding to a circuit with depth $L=21$.
  \item In Fig.(e), the Trotter step is set as $N=20$, corresponding to a circuit depth with $L=80$. 
  \item In Fig.(f), we set the rotation angle of $R_{ZZ}$ gates as $\theta_J=-\frac{\pi}{4}$ and the Trotter step as $N=20$, corresponding to a circuit depth with $L=80$. 
\end{enumerate}

Additional informations and the runtimes are presented in Table~\ref{tab:IBM}:
\begin{table}[h!]
  \centering
  \scalebox{0.8}{
\begin{tabular}{|c|c|c|c|c|c|c|c|c|}
  \hline
    Fig. & Qubits &Observable &Step of Trotter & Depth of circuit & M & Runtime (56-core)&$\min_\lambda  \sqrt{\sum\abs{\widetilde{\mathcal{L}}(\theta_h)-y_{\theta_h}}^2}/\mathrm{\#dataset}$\\
  \hline
  (a) &127& $M_Z=\sum_q Z_q/n$ &5 & 20 & 210  & 13s&0.0015892257500055675\\
  \hline
  (b) &127& $X_{13,29, 31}Y_{9, 30}Z_{8, 12, 17, 28, 32}$ & 5& 20 & 210  & 146s&0.001395021411390668\\
  \hline
  (c) &127& $X_{37, 41, 52, 56, 57, 58, 62, 79}Y_{75}Z_{38, 40, 42, 63, 72, 80, 90, 91}$ & 5& 20 & 210  & 29s&0.0007674113478406952\\
  \hline
  (d) &127& $X_{37, 41, 52, 56, 57, 58, 62, 79}Y_{38, 40, 42, 63, 72, 80, 90, 91}Z_{75}$ &5& 21 & 210  & 137s&0.0018301118556860437\\
  \hline
  (e) &127& $Z_{63}$ &20& 80 & 210  & 262s&0.007527499236555272\\
  \hline
  (f) &127& $Z_{63}$ &20& 80 with $\theta_J=-\pi/4$ & 90  & 57s&\\
  \hline
\end{tabular}}
\caption{Comparison for each task and the Runtime of the simulation. The runtime is estimated based on the computation time for the analytical expressions of expected values on 56-core CPU.}\label{tab:IBM}
\end{table}

\section{Numerical: Simulating experimental results in trapped-ion systems}\label{sec:ap:Numerical:ion}

In Sec.~\ref{sec:ap:Numerical:IBM}, we performed a well-fetched simulation of IBM's experiments, which is a typical superconductor qubit system. In this section, we will simulate the experimental results of a trapped-ion system, which are reported in Ref.~\cite{pagano2020quantum}. 

In the experiment, the authors used a 1D array of $^{171}$Yb$^+$ ions to imply a low-depth Quantum Approximate Optimization Algorithm (QAOA). 
The optimization problem, they considered, is encoded in the transverse-field antiferromagnetic Ising Hamiltonian with long-range interactions:

\begin{equation}
    H=\underbrace{\sum_{i<j}J_{ij}X_iX_j}_{H_A}+\underbrace{B\sum_iY_i}_{H_B}.
\end{equation}
Here, $J_{i,j}>0$~(falls off as a power law in the distance between the spins) is the Ising coupling between spins $i$ and $j$; and $B$ denotes the transverse magnetic field.

The state obtained after $p$ layers of the QAOA is:
\begin{equation}
  \ket{\vec{\beta},\vec{\gamma}}=\prod_{k=1}^p e^{-i\beta_k (H_B/J_0)} e^{-i\gamma_k (H_A/J_0)}\ket{\psi_0},
\end{equation}
where $J_0$ is the average nearest-neighbor coupling and the angles $\vec{\beta}$ and $\vec{\gamma}$ are the variational parameters used to minimize the final energy:
\begin{equation}
  E(\vec{\beta},\vec{\gamma})=\bra{\vec{\beta},\vec{\gamma}}H\ket{\vec{\beta},\vec{\gamma}}.
\end{equation}
And the initial state is set as $\ket{\psi_0}=\ket{\uparrow\uparrow\dots\uparrow}_Y$, where $\ket{\uparrow}_Y=(\ket{\uparrow}_Z+i\ket{\downarrow}_Z)/\sqrt{2}$

The authors used the dimensionless quantity:
\begin{equation}\label{ap:eq:qaoa:eta}
  \eta=\frac{E(\vec{\beta},\vec{\gamma})-E_{\max}}{E_{gs}-E_{\max}}
\end{equation}
to measure the performance of the QAOA, where $E_{gs}$ is the ground state energy and $E_{\max}$ is the energy of the highest excited state.

In this section, we will simulate the experimental results of Fig.2C in Ref.~\cite{pagano2020quantum} using our method. 
The number of qubits is $12$, and the number of QAOA layers is $p=1$. In our simulation, the Ising couplings $J_{ij}$ are set as a fitting analytic form given in Supplementary Information of Ref.~\cite{pagano2020quantum}:
\begin{equation}
  J_{ij}\approx\frac{J_0}{r^{\alpha^\prime}}e^{-\beta^\prime(r-1)},
\end{equation}
where $J_0$ is the average nearest-neighbor coupling, $r=\abs{i-j}$ is the ion separation and $\alpha^\prime,\beta^\prime$ are exponential decay variables. As the reference reported, $J_0=0.580$, $\alpha^\prime=0.322$, $\beta^\prime=0.229$ and the transverse magnetic field satisfied $B/J_0=-0.25$.

\begin{figure}[hbp]	
  \includegraphics[width=0.5\textwidth]{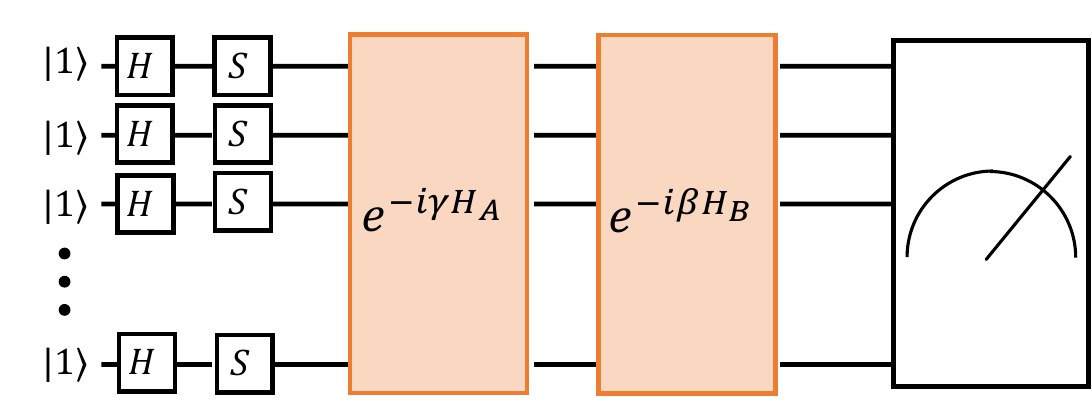}
	\caption{The 12 qubits circuit employed in the simulation. 
  The initial state $\ket{1}$ is first passed through a layer of $H$ gate and a layer of $S$ gate to equivalently generate the initial state $\ket{\psi_0}$ in the experiments.
  The evolution corresponding to $H_A$ in the circuit is implemented by $11$ layers of non-overlapping $R_{XX}$ gates, and the evolution corresponding to $H_B$ is implemented by a layer of $R_Y$ gates acting on each qubit. The total number of layers in the circuit is $14$.}\label{fig:Ion_Ansatz}
\end{figure}

The circuits employed in the simulation are described as shown in Fig.~\ref{fig:Ion_Ansatz}. By equation $SH\ket{1}=S(\ket{0}-\ket{1})/\sqrt{2}=(\ket{0}-i\ket{1})/\sqrt{2}=-i\ket{\uparrow}_Y$, we use the initial state $\ket{1}$ and pass it through a layer of $H$ gate and a layer of $S$ gate to equivalently generate the initial state $\ket{\psi_0}$ in the experiments. 

The evolution corresponding to $H_A$ in the circuit is $e^{-i\gamma (H_A/J_0)}=e^{-i\gamma \sum (J_{ij}/J_0)X_iX_j}$, which can be implemented by a series of $R_{XX}$ gates $\Pi e^{-i\gamma(J_{ij}/J_0)X_iX_j}$.
There are a total of $(^{12}_2)=66$ pairwise interactions~(or $R_{XX}$ gates), and these $R_{XX}$ gates can be arranged in the most compact way using $66/6=11$ layers in the circuit.
And the evolution $e^{-i\beta (H_B/J_0)}=e^{-i\beta \sum(B/J_0)Y_i}$ is implemented by a layer of $R_Y$ gates acting on each qubit.

\begin{figure}[htbp]	
  \includegraphics[width=0.5\textwidth]{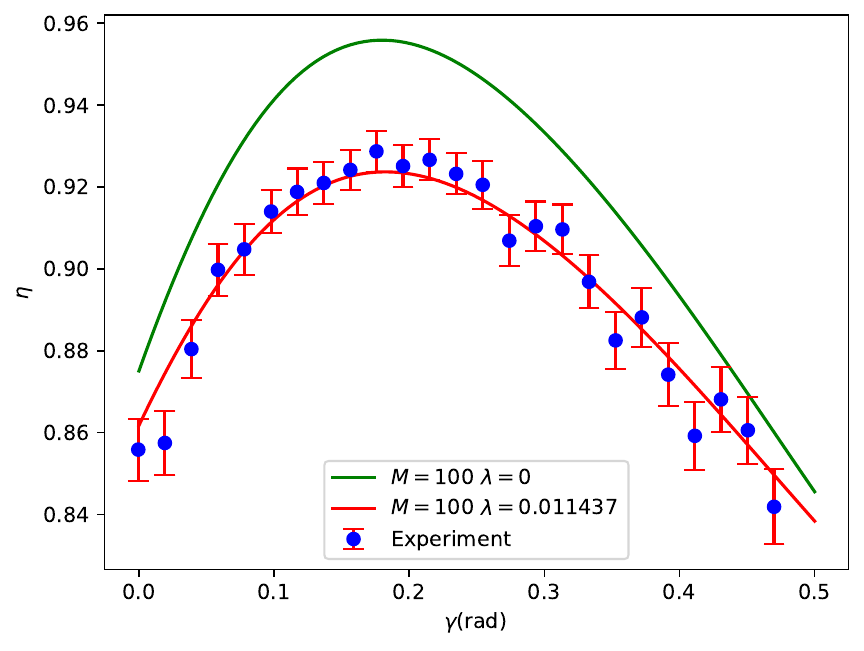}
	\caption{The comparison between experiments and simulations, corresponding to Fig.~2C in Ref.~\cite{pagano2020quantum}. The y-axis $\eta$ is defined in Eq.~\ref{ap:eq:qaoa:eta}, and $\beta^\star=1.12$.}\label{fig:QAOA_result}
\end{figure}

In Fig.~2C of Ref.~\cite{pagano2020quantum}, the authors reported the experimental results of $\eta$ for $\beta^\star=1.12$ as a function of the variational parameters $\gamma$. The simulated results are shown in Fig.~\ref{fig:QAOA_result}. Assessing depolarizing noises $p_x = p_y = p_z = \frac{\lambda}{4}$ happened, similar to the previous section, we use the SLSQP optimizer to optimize Eq.~\ref{ap:eq:minimal_lambda} to obtain an appropriate noise rate $\lambda$ to fit the experimental results. The optimal noise rate is $\lambda=0.011437$.
From the Fig, it can be seen that the appearance of noise affects the performance of the QAOA, and the experimental results obtained are slightly lower compared to the results speculated without noise. 
Moreover, in this trapped-ion system experiment, our method fits the noisy results well with the experimental results.

\section{Discussion: More noise models}
\label{sec:ap:discuss:unital}

\subsection{Variations of circuits under single-qubit unital noise}
In this section, we discussed the impact of single-qubit unital noise.
Unital qubit channels can be considered as variations of Pauli error channels, be written as $\mathcal{N}(\rho) = U \mathcal{P}(V^\dagger \rho V) U^\dagger$, where $U$ and $V$ are unitary operators, and $\mathcal{P}$ represents a Pauli error channel~\cite{choi2023unital}. 

To neutralize $U$ and $V$, we substitute variational gates ${\mathcal{U}}_i$ for $\mathcal{W}_i=V^{\otimes n}\mathcal{U}_i U^{\dagger\otimes n}$. Consequently, the circuit $\mathcal{U}(\bm{\theta})=\mathcal{W}_L(\bm{\theta}_L)\mathcal{W}_{L-1}(\bm{\theta}_{L-1})\cdots \mathcal{W}_1(\bm{\theta}_1)$ is a variation of parameterized quantum circuits and all these kind circuits consist of a universal circuit model.

Consequently, if each Pauli word in sparse~(described in Sec.~\ref{ap:pre_data}) observable $O$ has at most constant $K$ non-identity elements, then Theorem~\ref{thm:main} remains valid for the unital noise $\mathcal{N}$. Formalized as:

\begin{theorem}\label{ap:thm:main:unital}
  Suppose $\langle \{\overline{\sigma}_{i,j}\}\rangle/\left(\langle \{\overline{\sigma}_{i,j}\}\rangle\cap\langle i\mathbb{I}^{\otimes n}\rangle\right)=\{\mathbb{I},X,Y,Z\}^{\otimes n}$ is satisfied and $O,\rho$ are sparse, each Pauli word in $O$ has at most constant $K$ non-identity, for a fixed $\gamma$, given arbitrary truncation error $\varepsilon$, there exists a polynomial-scale classical algorithm to determine the approximated noisy cost function $\widetilde{\mathcal{L}}$, which satisfies $\abs{\widetilde{\mathcal{L}}-\widehat{\mathcal{L}}} \leq \varepsilon$ with a probability of at least $1-\delta$ over all possible parameters $\bm{\theta}$. The time complexity is $\mathrm{Poly}(n) \order{L} \bigg(\frac{\norm{O}_\infty}{\varepsilon \sqrt{\delta}} \bigg)^{\order{\sfrac{1}{\gamma}}}3^K$ for $\mathcal{P}$ in Case 1 and $\mathrm{Poly}(n)  \order{(nL)^{\frac{1}{2\gamma} \ln{\frac{\norm{O}_\infty}{\varepsilon \sqrt{\delta}}}+1}}3^K$ for $\mathcal{P}$ in Case 2. The space complexity is $\order{\mathrm{Poly}(n)+nL}$.
\end{theorem}

Thus, our method can approximately simulate the variation of variational quantum circuits under unital noise $\mathcal{N}$, with time complexity depending on the category of $\mathcal{P}$.

To demonstrate the validity of Theorem~\ref{ap:thm:main:unital}, we define $\mathcal{Q}=U^{\dagger\otimes n}O U^{\otimes n}$. 
We notice that for each single-qubit Pauli operator $p\in\{X,Y,Z\}$ the operator $U^\dagger p U$ is a linear combination of $X,Y,Z$. 
By the assumption that each Pauli word in $O$ has at most constant $K$ non-identity elements, $\mathcal{Q}$ is a linear combination of Pauli words with at most $3^K\mathrm{Poly}(n)$ words. This introduces a $3^K$ overhead in the number of elements in $\mathcal{Q}$.

The noisy expected value can be represented as:
  \begin{equation}\label{eq:unital_noise}
    \begin{aligned}
      \widehat{\mathcal{L}}(\bm{\theta})&=\Tr{O \mathcal{N}^{\otimes n}(\mathcal{W}_L \mathcal{N}^{\otimes n}(\cdots\mathcal{W}_1\mathcal{N}^{\otimes n}(\rho) \mathcal{W}_1^\dagger\cdots) \mathcal{W}_L^\dagger)}\\
      &=\;\begin{tikzpicture}[baseline=(current bounding box.center)]
        \coordinate(l)at(-0.25,0.75){};\coordinate(r)at(8.25,0.75){};
        \node[rectangle,draw] (H) at (0,0) {$O$};
        \node[rectangle,draw,minimum height=40] (Nl) at (1,0.4) {$\mathcal{N}^{\otimes n}$};
        \node[rectangle,draw] (Ul) at (2,0) {$\mathcal{W}_L$};
        \node[rectangle,draw] (Ujt) at (2,0.75) {$\overline{\mathcal{W}}_L$};
        \node[rectangle,draw,minimum height=40] (Nl1) at (3,0.4) {$\mathcal{N}^{\otimes n}$};
        \node[rectangle,draw,minimum height=40] (N1) at (5,0.4) {$\mathcal{N}^{\otimes n}$};
        \node[rectangle,draw,minimum height=40] (N0) at (7,0.4) {$\mathcal{N}^{\otimes n}$};
        \node[rectangle,draw] (U1) at (6,0) {$\mathcal{W}_1$};
        \node[rectangle,draw] (U1t) at (6,0.75) {$\overline{\mathcal{W}}_1$};
        \node[] (cdots_down) at (4,0) {$\cdots$};
        \node[] (cdots_up) at (4,0.75) {$\cdots$};
        \node[rectangle,draw] (rho) at (8,0) {$\rho$};
        \draw [thick] (H)--(H-| Nl.west) (Ul-|Nl.east)--(Ul)--(Ul-| Nl1.west) (cdots_down-|Nl1.east)--(cdots_down)--(cdots_down-| N1.west) (U1-|N1.east)--(U1)--(U1-| N0.west) (rho-|N0.east)--(rho) (l)--(l-| Nl.west) (Ujt-| Nl.east)--(Ujt)--(Ujt-| Nl1.west) (cdots_up-|Nl1.east)-- (cdots_up)--(cdots_up-| N1.west) (U1t-|N1.east)--(U1t)--(U1t-| N0.west) (r-|N0.east)--(r);
        \draw[thick] (-0.25,0.75) arc(90:270:0.75/2);
        \draw[thick] (8.25,0) arc(-90:90:0.75/2);
        \end{tikzpicture}\\
        &=\sum_{s_0,\cdots,s_L \in \bm{P}_n}
        \begin{tikzpicture}[baseline=(current bounding box.center)]
          \coordinate(l)at(-0.25,0.5){};\coordinate(r)at(3.25,0.5){};
          \node[rectangle,draw] (H) at (0,0) {$O$};
          \node[rectangle,draw,minimum height=40] (N) at (1,0.4) {$\mathcal{N}^{\otimes n}$};
          \node[draw,shape=circle,inner sep=1pt] (sL) at (3,0) {$s_L$};
          \node[rectangle,draw] (V) at (2.15,0) {$V^{\otimes n}$};
          \node[rectangle,draw] (Vt) at (2.15,0.75) {$\overline{V}^{\otimes n}$};
          \draw [thick] (H)--(H-| N.west) (V-|N.east)--(V)--(sL) (l)--(l-| N.west) (Vt-|N.east)--(Vt)--(r);
          \draw[thick] (-0.25,0.5) arc(90:270:0.25);
          \draw[thick] (3.25,0) arc(-90:90:0.25);
          \end{tikzpicture}\;
      \dots
      \begin{tikzpicture}[baseline=(current bounding box.center)]
        \coordinate(l)at(-1.25,0.5){};\coordinate(r)at(4,0.5){};
        \node[draw,shape=circle,inner sep=1pt] (sj) at (-1.,0) {$s_i$};
        \node[rectangle,draw] (Uj) at (0.75,0) {$\mathcal{W}_i$};
        \node[rectangle,draw] (V) at (2.8,0) {$V^{\otimes n}$};
        \node[rectangle,draw] (Vt) at (2.8,0.75) {$\overline{V}^{\otimes n}$};
        \node[rectangle,draw] (V2) at (-0.15,0) {$V^{\dagger\otimes n}$};
        \node[rectangle,draw] (Vt2) at (-0.15,0.75) {${V}^{T\otimes n}$};
        \node[draw,shape=circle,inner sep=0pt] (sj1) at (3.75,0) {$s_{i-1}$};
        \node[rectangle,draw] (Ujt) at (0.75,0.75) {$\overline{\mathcal{W}}_i$};
        \node[rectangle,draw,minimum height=40] (N) at (1.75,0.4) {$\mathcal{N}^{\otimes n}$};
        \draw [thick] (sj)--(V2)--(Uj)--(Uj-| N.west) (V-|N.east)--(V)--(sj1) (l) --(Vt2)--(l-|Ujt.west) (Ujt)--(Ujt-| N.west) (Vt-|N.east)--(Vt)--(r);
        \draw[thick] (-1.25,0.5) arc(90:270:0.25);
        \draw[thick] (4,0) arc(-90:90:0.25);
      \end{tikzpicture}
      \dots
        \begin{tikzpicture}[baseline=(current bounding box.center)]
          \coordinate(l)at(-0.25,0.5){};\coordinate(r)at(1.9,0.5){};
          \node[rectangle,draw] (rho) at (1.7,0) {$\rho$};
          \node[rectangle,draw] (V2) at (0.8,0) {$V^{\dagger\otimes n}$};
          \node[rectangle,draw] (Vt2) at (0.8,0.75) {${V}^{T\otimes n}$};
          \node[rectangle,draw,minimum height=40,opacity=0] (N) at (1,0.4) {$\mathcal{N}^{\otimes n}$};
          \node[draw,shape=circle,inner sep=1pt] (s0) at (0.,0) {$s_0$};
          \draw [thick] (rho)--(V2)--(s0) (l)--(Vt2)--(r);
          \draw[thick] (-0.25,0.5) arc(90:270:0.25);
          \draw[thick] (1.9,0) arc(-90:90:0.25);
        \end{tikzpicture}\\
      &=\sum_{s_0,\cdots,s_L \in \bm{P}_n}
        \begin{tikzpicture}[baseline=(current bounding box.center)]
          \coordinate(l)at(-0.25,0.5){};\coordinate(r)at(2.15,0.5){};
          \node[rectangle,draw] (H) at (0,0) {$\mathcal{Q}$};
          \node[rectangle,draw,minimum height=40] (N) at (1,0.4) {$\mathcal{P}^{\otimes n}$};
          \node[draw,shape=circle,inner sep=1pt] (sL) at (1.85,0) {$s_L$};
          \draw [thick] (H)--(H-| N.west) (sL-|N.east)--(sL) (l)--(l-| N.west) (r-|N.east)--(r);
          \draw[thick] (-0.25,0.5) arc(90:270:0.25);
          \draw[thick] (2.15,0) arc(-90:90:0.25);
          \end{tikzpicture}\;
      \dots
      \begin{tikzpicture}[baseline=(current bounding box.center)]
        \coordinate(l)at(-0.25,0.5){};\coordinate(r)at(3,0.5){};
          \node[draw,shape=circle,inner sep=1pt] (sj) at (0.,0) {$s_i$};
          \node[rectangle,draw] (Uj) at (0.75,0) {$\mathcal{U}_i$};
          \node[draw,shape=circle,inner sep=0pt] (sj1) at (2.75,0) {$s_{i-1}$};
          \node[rectangle,draw] (Ujt) at (0.75,0.75) {$\overline{\mathcal{U}}_i$};
          \node[rectangle,draw,minimum height=40] (N) at (1.75,0.4) {$\mathcal{P}^{\otimes n}$};
          \draw [thick] (sj)--(Uj)--(Uj-| N.west) (sj1-|N.east)--(sj1) (l) --(l-|Ujt.west) (Ujt)--(Ujt-| N.west) (r-|N.east)-- (r);
          \draw[thick] (-0.25,0.5) arc(90:270:0.25);
          \draw[thick] (3,0) arc(-90:90:0.25);
        \end{tikzpicture}
      \dots
        \begin{tikzpicture}[baseline=(current bounding box.center)]
          \coordinate(l)at(-0.25,0.5){};\coordinate(r)at(1.9,0.5){};
          \node[rectangle,draw] (rho) at (1.7,0) {$\rho$};
          \node[rectangle,draw] (V2) at (0.8,0) {$V^{\dagger\otimes n}$};
          \node[rectangle,draw] (Vt2) at (0.8,0.75) {${V}^{T\otimes n}$};
          \node[rectangle,draw,minimum height=40,opacity=0] (N) at (1,0.4) {$\mathcal{N}^{\otimes n}$};
          \node[draw,shape=circle,inner sep=1pt] (s0) at (0.,0) {$s_0$};
          \draw [thick] (rho)--(V2)--(s0) (l)--(Vt2)--(r);
          \draw[thick] (-0.25,0.5) arc(90:270:0.25);
          \draw[thick] (1.9,0) arc(-90:90:0.25);
        \end{tikzpicture}\\
      &=\Tr{\mathcal{Q} \mathcal{P}^{\otimes n}(\mathcal{U}_L \mathcal{P}^{\otimes n}(\cdots\mathcal{U}_1\mathcal{P}^{\otimes n}( V^{\dagger\otimes n}\rho V^{\otimes n}) \mathcal{W}_1^\dagger\cdots) \mathcal{W}_L^\dagger)},
    \end{aligned}
  \end{equation}
  where the third equality is obtained by substituting $VV^\dagger$ for $I$ in Eq.~\refeq{ap:eq:tn:cost_function_noisy}, and the fourth equality is held by the local unital noise channel $\mathcal{N}(\rho)=U \mathcal{P}(V^\dagger\rho V)U^\dagger$, $\mathcal{W}_i=V^{\otimes n}\mathcal{U}_i U^{\dagger\otimes n}$ and $\mathcal{Q}=U^{\dagger\otimes n}O U^{\otimes n}$:
  \begin{equation}
    \begin{tikzpicture}[baseline=(current bounding box.center)]
      \coordinate(l)at(-1.25,0.5){};\coordinate(r)at(4,0.5){};
      \node[draw,shape=circle,inner sep=1pt] (sj) at (-1.,0) {$s_i$};
      \node[rectangle,draw] (Uj) at (0.75,0) {$\mathcal{W}_i$};
      \node[rectangle,draw] (V) at (2.8,0) {$V^{\otimes n}$};
      \node[rectangle,draw] (Vt) at (2.8,0.75) {$\overline{V}^{\otimes n}$};
      \node[rectangle,draw] (V2) at (-0.15,0) {$V^{\dagger\otimes n}$};
      \node[rectangle,draw] (Vt2) at (-0.15,0.75) {${V}^{T\otimes n}$};
      \node[draw,shape=circle,inner sep=0pt] (sj1) at (3.75,0) {$s_{i-1}$};
      \node[rectangle,draw] (Ujt) at (0.75,0.75) {$\overline{\mathcal{W}}_i$};
      \node[rectangle,draw,minimum height=40] (N) at (1.75,0.4) {$\mathcal{N}^{\otimes n}$};
      \draw [thick] (sj)--(V2)--(Uj)--(Uj-| N.west) (V-|N.east)--(V)--(sj1) (l) --(Vt2)--(l-|Ujt.west) (Ujt)--(Ujt-| N.west) (Vt-|N.east)--(Vt)--(r);
      \draw[thick] (-1.25,0.5) arc(90:270:0.25);
      \draw[thick] (4,0) arc(-90:90:0.25);
    \end{tikzpicture}
    =
    \begin{tikzpicture}[baseline=(current bounding box.center)]
      \coordinate(l)at(-1.25,0.5){};\coordinate(r)at(4,0.5){};
      \node[draw,shape=circle,inner sep=1pt] (sj) at (-1.,0) {$s_i$};
      \node[rectangle,draw] (Uj) at (0.75,0) {$\mathcal{W}_i$};
      \node[rectangle,draw] (U) at (1.65,0) {$U^{\otimes n}$};
      \node[rectangle,draw] (Ut) at (1.65,0.75) {$\overline{U}^{\otimes n}$};
      \node[rectangle,draw] (V2) at (-0.15,0) {$V^{\dagger\otimes n}$};
      \node[rectangle,draw] (Vt2) at (-0.15,0.75) {${V}^{T\otimes n}$};
      \node[draw,shape=circle,inner sep=0pt] (sj1) at (3.75,0) {$s_{i-1}$};
      \node[rectangle,draw] (Ujt) at (0.75,0.75) {$\overline{\mathcal{W}}_i$};
      \node[rectangle,draw,minimum height=40] (N) at (2.75,0.4) {$\mathcal{P}^{\otimes n}$};
      \draw [thick] (sj)--(V2)--(Uj)--(U)--(U-| N.west) (sj1-|N.east)--(sj1) (l) --(Vt2)--(l-|Ujt.west) (Ujt)--(Ut)--(Ut-| N.west) (r-|N.east)--(r);
      \draw[thick] (-1.25,0.5) arc(90:270:0.25);
      \draw[thick] (4,0) arc(-90:90:0.25);
    \end{tikzpicture}
    =
    \begin{tikzpicture}[baseline=(current bounding box.center)]
      \coordinate(l)at(-0.25,0.5){};\coordinate(r)at(3,0.5){};
        \node[draw,shape=circle,inner sep=1pt] (sj) at (0.,0) {$s_i$};
        \node[rectangle,draw] (Uj) at (0.75,0) {$\mathcal{U}_i$};
        \node[draw,shape=circle,inner sep=0pt] (sj1) at (2.75,0) {$s_{i-1}$};
        \node[rectangle,draw] (Ujt) at (0.75,0.75) {$\overline{\mathcal{U}}_i$};
        \node[rectangle,draw,minimum height=40] (N) at (1.75,0.4) {$\mathcal{P}^{\otimes n}$};
        \draw [thick] (sj)--(Uj)--(Uj-| N.west) (sj1-|N.east)--(sj1) (l) --(l-|Ujt.west) (Ujt)--(Ujt-| N.west) (r-|N.east)-- (r);
        \draw[thick] (-0.25,0.5) arc(90:270:0.25);
        \draw[thick] (3,0) arc(-90:90:0.25);
      \end{tikzpicture}.
  \end{equation}

  By applying Theorem~\ref{thm:main} to estimate $\Tr{\mathcal{Q} \mathcal{P}^{\otimes n}(\mathcal{U}_L \mathcal{P}^{\otimes n}(\cdots\mathcal{U}_1\mathcal{P}^{\otimes n}( V^{\dagger\otimes n}\rho V^{\otimes n}) \mathcal{W}_1^\dagger\cdots) \mathcal{W}_L^\dagger)}$, we can obtain the approximated noisy expected value $\widetilde{\mathcal{L}}$.
  The differences between Eq.\eqref{eq:unital_noise} and the case in Pauli error is the initial state term ( $\begin{tikzpicture}[baseline=(current bounding box.center)]
      \coordinate(l)at(-0.25,0.5){};\coordinate(r)at(1.9,0.5){};
      \node[rectangle,draw] (rho) at (1.7,0) {$\rho$};
      \node[rectangle,draw] (V2) at (0.8,0) {$V^{\dagger\otimes n}$};
      \node[rectangle,draw] (Vt2) at (0.8,0.75) {${V}^{T\otimes n}$};
      \node[rectangle,draw,minimum height=40,opacity=0] (N) at (1,0.4) {$\mathcal{N}^{\otimes n}$};
      \node[draw,shape=circle,inner sep=1pt] (s0) at (0.,0) {$s_0$};
      \draw [thick] (rho)--(V2)--(s0) (l)--(Vt2)--(r);
      \draw[thick] (-0.25,0.5) arc(90:270:0.25);
      \draw[thick] (1.9,0) arc(-90:90:0.25);
    \end{tikzpicture}$ in the fourth line of Eq.\eqref{eq:unital_noise}).
    It introduces a constant factor in the process of calculating the contributions of Pauli paths. 
    
    Combined with the $3^K$ overhead in the observable $\mathcal{Q}$, there is an additional factor of $\order{3^K}$ in the time complexity of the simulation, where $K$ is a constant in many common VQAs. At this point, we have completed the proof of Theorem~\ref{ap:thm:main:unital}.

  Hence, when addressing the local unital noise channel $\mathcal{N}$, it suffices to examine the Pauli error channels $\mathcal{P}$ by excluding $U$ and $V$ rotations, thereby classifying them by number of non-zero noise variable in $\mathcal{P}$, as shown before.

\subsection{Finite amplitude damping noise}
In this section, we examine a circuit that is simultaneously affected by depolarizing noise and discontinuous amplitude damping noise, as depicted in Fig~\ref{reply_unital_fig}.
The circuit is segmented into sets of gates, denoted as $G_A, G_B, \ldots$, where local depolarizing noise occurs in each set and amplitude damping noise acts between sets at most a constant $K$ positions. If the gates in each set satisfy the relationship $\langle \{\overline{\sigma}_{i,j}\}\rangle/\left(\langle \{\overline{\sigma}_{i,j}\}\rangle\cap\langle i\mathbb{I}^{\otimes n}\rangle\right)=\{\mathbb{I},X,Y,Z\}^{\otimes n}$~(or Eq.~\eqref{ap:eq:E_cross_equals_0} holds), our methodology is capable of estimating the circuit's expected outcome.  
This estimation is achieved with a computational complexity characterized by $2^K\mathrm{Poly}\left(n,L,\sfrac{1}{\varepsilon},\sfrac{1}{\sqrt{\delta}},\norm{O}_{\infty}\right)$.

\begin{figure}[htbp]
  \includegraphics[width=0.4\textwidth]{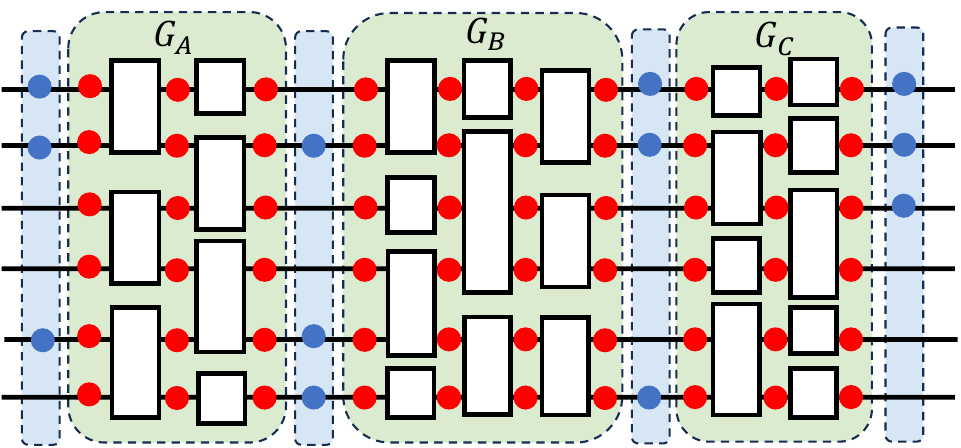}
  \caption{A circuit that contains both depolarizing noise and amplitude damping noise simultaneously. The depolarizing noise and amplitude damping noise are denoted by red and blue dots, respectively. The amplitude damping noise manifests intermittently, impacting a maximum of $K=12$ positions.}\label{reply_unital_fig}
\end{figure}

  This is because for amplitude damping channel $\mathcal{N}$, we have
  \begin{equation}
    \mathcal{N}(I)=I+\gamma Z, \quad \mathcal{N}(X)=\sqrt{1-\gamma}X, \quad \mathcal{N}(Y)=\sqrt{1-\gamma}Y, \quad \mathcal{N}(Z)=(1-\gamma)Z.
  \end{equation}
  When the Pauli path traverses it and the corresponding position is $I$, the Pauli path will bifurcate and create a new path.
  This leads to an additional factor of $2^K$ in the final time complexity, by the assumption that $K$ is a constant, thereby not impacting the outcomes of theorem~\ref{thm:main}. 

\subsection{General non-unital noise}
For general non-unital noise, we temporarily could not provide general classical easiness results like Theorem~\ref{thm:main} because of the drastic properties of non-unital noise. There are some relative works, which provide some insights into the non-unital noise and reveal distinct properties when compared to unital noise:
\begin{itemize}
  \item The work in Ref.~\cite{ben2013quantum} shows the possibility of performing fault-tolerant quantum computation under non-unital noise, with specially constructed circuits. They demonstrated that the non-unital noise like amplitude damping channel can be used as a resource to generate fresh ancilla qubits. This indicates that it may be inherently difficult to classically simulate the general quantum circuits efficiently under non-unital noise without additional assumptions.

  \item The anti-concentration of the output distribution is a critical characteristic in many classical complexity analyses~\cite{bouland2019complexity,boixo2018characterizing,deshpande2022tight}. 
  Research indicates that the output distribution of random quantum circuits demonstrates anti-concentration in noiseless conditions or under unital noise at sufficiently large depths~\cite{deshpande2022tight,dalzell2022random}.

  However, Ref.~\cite{fefferman2023effect} reveals that the output distribution exhibits a lack of anti-concentration under non-unital noise, even with additional unital noise sources, regardless of the circuit depth. This stands in sharp contrast to the behavior of noiseless random quantum circuits or those subject only to unital noise. It also offers a new perspective for studying the impact of non-unital noise on classical simulation hardness.

  \item If the circuit has enough random structure~(local 2-design), then for any single-qubit non-unitary noise (i.e., a local 2-design gate applied after each instance of noise), Ref.~\cite{mele2024noise} found the influence of gates on Pauli expectation values decreases exponentially in their distance from the last layer. Based on the aforementioned observations, they introduce the concept of “Effective depth” and present an algorithm that classically simulates the mean value of observables by computing the final logarithmic depth of the circuit.
  They further explore the impact of non-unitary noise on exponential cost concentration and the barren plateau phenomenon, from the perspective of effective depth, providing novel insights.
  
  This work provides a new perspective on using randomness to truncate noisy circuits to shallow-depth quantum circuits. On the other hand, simulations of shallow circuits~(logarithmic depth) still contain rich content. The reference uses common light cone methods for treating shallow circuits, and algorithmic efficiency highly depends on the geometric structure and gate width~(polynomial computational complexity under the setting of 1D and a maximum of 2-qubit gates).

  Our approach does not impose constraints on strong randomness~(e.g., local 2-design), geometric structure and gate width; however, it does entail more requirements regarding noise and the types of gates. Further comprehensive research is needed to explore the trade-offs inherent in these associations.

  \item Noise-induced barren plateaus (NIBPs) were previously shown to be present in sufficiently deep circuits subjected to unital maps~\cite{wang2021noise}. However, Ref.~\cite{singkanipa2024beyond} indicates that VQA circuits may not necessarily exhibit NIBPs under Hilbert-Schmidt~(HS)-contractive maps.
  Moreover, it was observed that in both unital and HS-contractive scenarios, a noise-induced fixed point (NIFP) emerges, causing the cost function to converge to a fixed value for circuits with a depth exceeding logarithmic levels.
\end{itemize}

\end{document}